\def\isarxiv{1} 
\definecolor{mydarkblue}{rgb}{0,0.08,0.45}
\newtheorem{theorem}{Theorem}[section]
\newtheorem{lemma}[theorem]{Lemma}
\newtheorem{definition}[theorem]{Definition}
\newtheorem{corollary}[theorem]{Corollary}
\newtheorem{assumption}[theorem]{Assumption}
\newtheorem{fact}[theorem]{Fact}
\newtheorem{remark}[theorem]{Remark}
\newtheorem{claim}[theorem]{Claim}
\newcommand{\wt}{\widetilde}
\newcommand{\ov}{\overline}
\newcommand{\R}{\mathbb{R}}
\renewcommand{\d}{\mathrm{d}}
\renewcommand{\tilde}{\wt}
\renewcommand{\epsilon}{\varepsilon}
\newcommand{\Tmat}{{\cal T}_{\mathrm{mat}}}
\newcommand{\DP}{\mathrm{DP}}
\newcommand{\rpm}{\textsc{ReportOneSidedNoisyArgMax}}
\DeclareMathOperator*{\E}{{\mathbb{E}}}
\DeclareMathOperator{\poly}{poly}
\DeclareMathOperator{\nnz}{nnz}
\DeclareMathOperator{\TV}{TV}
\newcommand*{\RN}[1]{\expandafter\@slowromancap\romannumeral #1@}
\begin{document}

\ifdefined\isarxiv

\date{}

\title{On Differential Privacy for Adaptively Solving Search Problems via Sketching}
\author{
Shiyuan Feng\thanks{\texttt{fsyo@stu.pku.edu.cn}. Peking University.}
\and
Ying Feng\thanks{\texttt{yingfeng@mit.edu}. MIT.}
\and
George Z. Li\thanks{\texttt{gzli@cs.cmu.edu}. Carnegie Mellon University.}
\and 
Zhao Song\thanks{\texttt{magic.linuxkde@gmail.com}. University of California, Berkeley.}
\and 
David P. Woodruff\thanks{\texttt{dwoodruf@cs.cmu.edu}. Carnegie Mellon University.}
\and
Lichen Zhang\thanks{\texttt{lichenz@mit.edu}. MIT.}
}

\else

\twocolumn[
\icmltitle{On Differential Privacy for Adaptively Solving Search Problems via Sketching}



\icmlsetsymbol{equal}{*}

\begin{icmlauthorlist}
\icmlauthor{Shiyuan Feng}{PKU}
\icmlauthor{Ying Feng}{MIT}
\icmlauthor{George Z. Li}{CMU}
\icmlauthor{Zhao Song}{Berkeley}
\icmlauthor{David P. Woodruff}{CMU}
\icmlauthor{Lichen Zhang}{MIT}
\end{icmlauthorlist}

\icmlaffiliation{PKU}{Peking University}
\icmlaffiliation{CMU}{Carnegie Mellon University}
\icmlaffiliation{MIT}{MIT}
\icmlaffiliation{Berkeley}{University of California, Berkeley}

\icmlcorrespondingauthor{Zhao Song}{magic.linuxkde@gmail.com}
\icmlcorrespondingauthor{Lichen Zhang}{lichenz@mit.edu}

\icmlkeywords{Nearest-Neighbor Search, Adaptive Data Structure, Differential Privacy}

\vskip 0.3in
]



\printAffiliationsAndNotice{}  
\fi

\ifdefined\isarxiv
\begin{titlepage}
  \maketitle
  \begin{abstract}
Recently differential privacy has been used for a number of streaming, data structure, and dynamic graph problems as a means of hiding the internal randomness of the data structure, so that multiple possibly adaptive queries can be made without sacrificing the correctness of the responses. Although these works use differential privacy to show that for some problems it is possible to tolerate $T$ queries using $\widetilde{O}(\sqrt{T})$ copies of a data structure, such results only apply to {\it numerical estimation problems}, and only return the {\it cost} of an optimization problem rather than the solution itself. In this paper, we investigate the use of differential privacy for adaptive queries to {\it search} problems, which are significantly more challenging since the responses to queries can reveal much more about the internal randomness than a single numerical query. We focus on two classical search problems: nearest neighbor queries and regression with arbitrary turnstile updates. We identify key parameters to these problems, such as the number of $c$-approximate near neighbors and the matrix condition number, and use different differential privacy techniques to design algorithms returning the solution vector with memory and time depending on these parameters. We give algorithms for each of these problems that achieve similar tradeoffs.

  \end{abstract}
  \thispagestyle{empty}
\end{titlepage}


\else

\begin{abstract}

\end{abstract}

\fi

\section{Introduction}
In modern algorithmic design, data structures are playing a more and more important role, in particular in the recent breakthroughs in optimization, machine learning and graph problems. These highly efficient data structures are usually randomized, with one crucial distinction from traditional data structures: one needs these data structures to be \emph{robust against an adaptive adversary}, in which the adversary could design the input to the data structure based on the prior outputs of the data structure. This is often because these data structures are incorporated in an iterative process, where the data structure takes an input formed by the process, returns a result, and the process utilizes the output to form a new input. Hence, the input to a data structure may be highly correlated with the \emph{internal randomness} of the data structure. Most traditional randomized data structures, unfortunately, are not robust enough to handle this type of task, as they only have probabilistic guarantees against \emph{oblivious adversaries}\footnote{In the remainder of the paper, we will refer to data structures that are robust against an adaptive adversary as adaptive data structures, and data structures that are robust against an oblivious adversary as oblivious data structures.}, in which the input sequence to the data structure is predetermined, and independent of its internal randomness. 

To address this challenge, various reductions have been proposed to build an adaptive data structure from oblivious data structures~\cite{bjwy20,cn20,cn22,bkm+21,hkm+22,bgjllps22,syyz23_dp,bdf+24}. These reductions usually proceed as follows: (1) create $k$ independent copies of the oblivious data structure, (2) during the query phase sample and query a subset of the data structures, and (3) aggregate the subsampled answers. 

We focus on the following \emph{search} data structure problem: we are given $n$ points in $\R^d$ represented by a matrix $U\in \R^{n\times d}$ that is allowed to be preprocessed. Given a sequence of adaptive queries or updates $\{v_1,\ldots,v_T\}$, we need to design a data structure that answers queries efficiently and succeeds with high probability. Throughout the paper we will consider two key examples of this: (1) the first type of problem is the approximate near-neighbor problem, where a query is a $c$-approximate near-neighbor of $v_t$ in $U$, and (2) the second type of problem is the regression problem, where $U$ is in addition augmented by a response vector $b\in \R^n$, it could be the solution to the regression problem $\arg\min_{x\in \R^d} \|Ux-b\|_2^2$ where either $U$ or $b$ can also be updated by $v_t$. Note that one can interpret regression as an unconstrained search problem over $\R^d$.

For a sequence of $T$ adaptive queries or updates, perhaps the simplest reduction is to prepare $T$ copies of an oblivious data structure and for each query, use a fresh new data structure to respond. Since a data structure is never used more than once, an adaptive adversary cannot design adversarial inputs to this data structure, and hence we obtain an adaptive data structure. In fact, this simple approach already has led to many improvements in convex optimization~\cite{lsz19,sy21,jswz21,qszz23}. However, in many applications, the overhead $T$ in both preprocessing time and space usage is prohibitive, and more time- and space-efficient alternatives are needed. If in addition the adaptive queries are vectors in $\R^d$ and each data structure succeeds with constant probability, then one can adapt an $\epsilon$-net argument and show that $\wt O(d)$ \footnote{We use $\wt O(\cdot)$ to suppress polylogarithmic factors in $n$, $d$ and $1/\delta$ where $\delta$ is the failure probability of the data structure.} oblivious data structures suffice; at query time, one can then sample a logarithmic number of these data structures and output the the best answer. This approach has been widely adapted for problems such as distance estimation~\cite{cn20,cn22} and approximate near-neighbor search~\cite{sxz22,bdf+24}, but is insufficient if $d\approx T$ or even $d\gg T$, which is the case for convex optimization. Moreover, for many machine learning applications, $T$ is often a hyperparameter that could be chosen and is much smaller than $d$, making the approach based on $\epsilon$-net arguments inefficient for these applications. 

Is it possible to improve upon the solutions that use $\min\{d, T\}$ data structures? In pioneering work of ~\cite{hkm+22} and further extended by~\cite{bkm+21}, it was shown that \emph{differential privacy} could be utilized to reduce such overhead. In particular, for numerical estimation problems, where one only needs to output a numerical value, these works show that $\wt O(\sqrt T)$ data structures are sufficient. The idea is to view the internal randomness of the oblivious data structures as the private database one would like to protect. Concretely, one prepares $\wt O(\sqrt T)$ oblivious data structures, and for each adaptive query, samples $\wt O(1)$ of the data structures, and outputs the differentially private median of the answers~\cite{bkm+21}. By the advanced composition theorem~\cite{drv10}, $\wt O(\sqrt T)$ data structures suffice to provide a correct answer to the numerical estimation problem. This leads to drastically improved algorithms when $\sqrt T\ll d$, and in~\cite{bkm+21} they apply it to a large number of graph algorithms. This technique was later extended in~\cite{csw+23} for estimating the \emph{cost} of a regression problem and the \emph{value} of kernel density estimation. 

Can these techniques be extended beyond numerical estimation, to search problems? This is a natural question, as instead of the regression cost, one could naturally be more interested in estimating the regression solution and using it to, e.g., label future examples. On the other hand, simply extending the private median framework to high dimensions seems insufficient, as the regression solution could potentially reveal much more about the internal randomness of a data structure than what only the cost would reveal. This problem becomes more evident for approximate near neighbor search: for this problem, if there exists a row $u^*$ of $U$ for which $\|u^*-v\|\leq r$ for an adaptively chosen query $v\in \R^d$ and a distance parameter $r$, then the data structure needs to output a row $\wt u$ of $U$ with $\|\wt u-v\|\leq cr$ for $c>1$. It is not clear how to extend the differential privacy framework to such a scenario, as we will have to return a row of $U$, while the private median mechanism would not be able to do this. Motivated by these problems and barriers, we ask
\begin{center}
    {\it Is it possible to design robust search data structures with fewer than $\min\{ d, T\}$ independent copies?}
\end{center}
We provide an affirmative answer to this question under mild assumptions on these problems. We start by introducing the definition of the $(c, r)$-Approximate Near Neighbor ($(c, r)$-ANN) problem and its corresponding assumption.
\begin{definition}[$(c,r)$-Approximate Near Neighbor]
\label{def:cr_ann}
Let $U\subseteq \R^d$ be a dataset, $v\in \R^d$ be a query point, $c>1$ be the approximation parameter, and $r>0$ be the distance parameter. The \emph{$(c,r)$-approximate near neighbor} problem asks, if there exists a point $u_*\in U$ with $\|u_*-v\|\leq r$, then return a point $\wt u\in U$ with $\|\wt u-v\|\leq cr$. Otherwise, the data structure can either return nothing or any point $\wt u\in U$ with $\|\wt u-v\|\leq cr$.\footnote{The standard $(c,r)$-ANN definition allows the data structure to output any point in $U$ if there is no point within distance $r$ of the query $v$. Here we restrict it to either return nothing or return a point within distance $cr$, as most popular ANN data structures, such as locality-sensitive hashing, satisfy this specification.}
\end{definition}
We now state the assumption imposed on the ANN problem, in its most general form. Given a query $v$ and a data matrix $U$, we will use $f_v(U)$ to denote the set of candidate solution of querying $v$, For example, the predicate function for $(c, r)$-ANN can be defined as $f_v(U) = U\cap B(v, cr)$
where $B(v, cr)$ is the ball centered at $v$ with radius $cr$ under some norm. The assumption is:
\begin{assumption}
\label{assumption:general}
Let $U\subseteq \R^d$ be an $n$-point dataset, and let $\{v_1,v_2,\ldots,v_T\}\subseteq \R^d$ be a sequence of (possibly adaptive) queries. We assume for all $t\in [T]$, $|f_{v_t}(U)| \leq s$.
\end{assumption}

The assumption states that the ball around $v_t$ of radius $cr$ cannot intersect with more than $s$ points in $U$, which is equivalent to that the query $v_t$ cannot have too many approximate near neighbors in $U$, when $v_t$ has an $r$-near neighbor. To facilitate a comparison with other popular assumptions for this task, such as constant expansion and bounded doubling dimension, we state a version of the assumption for ANN that solely depends on the matrix $U$:
\begin{assumption}
\label{assumption:ann_sparse}
Let $U\subseteq \R^d$, and let $\{B_u \}_{u\in U}$ be the collection of norm balls with $B_u=B(u, cr)$. We assume for each $u\in U$, that $B_u$ intersects at most $s$ other distinct balls in the collection.
\end{assumption}
Let us first see how this assumption implies the condition on $f_v(U)$. By the triangle inequality, any two points in $B(v, cr)$ have distance at most $2cr$, that is, the norm balls with these two points being the respective centers with radius $cr$ must intersect. If $v$ has more than $s$ approximate nearest neighbors, then there must exist some $u\in B(v,cr)$ whose norm ball intersects with more than $s$ distinct balls, a contradiction. Intuitively, Assumption~\ref{assumption:ann_sparse} states that the dataset $U$ is not too dense, and in particular each point in $U$ does not have too many close neighbors. This structural assumption can also be achieved by preprocessing $U$: one could run a clustering algorithm on $U$ to group points that are close to each other into a cluster, and replace a cluster using its center. Then, the downstream ANN algorithm is performed on these centers; once a center is returned, one could return a point in the corresponding cluster. This approach has been implemented in various approximate nearest neighbor search libraries in practice, and has been a driving force for the Google ScaNN framework~\cite{p20,gsl+20}. 
For efficiency reasons, we will restrict $s\leq n^\rho$.

We also compare Assumption~\ref{assumption:ann_sparse} to two popular assumptions for nearest neighbor search: constant expansion ~\cite{kr02,bkl06} and doubling dimension \cite{gkl03,kl04_icalp,kl04_soda}. The constant (local) expansion states that if $|B(v, r)\cap U|\geq \log n$, then $|B(v, 2r)\cap U|\leq c_{\rm exp}\cdot |B(v, r)\cap U|$ for a constant $c_{\rm exp}$. The query time of nearest neighbor search data structures under the constant expansion assumption usually has a polylogarithmic dependence on $n$ but a large polynomial dependence on $c_{\rm exp}$. On the other hand, we only need $|B(v, cr)\cap U|\leq n^\rho$. If we let $|B(v, r)\cap U|=\log n$, then one can see that $|B(v, n^\rho r)\cap U|\leq n^\rho$. Thus, for any $c<n^\rho$ (which is a very large approximation factor), our assumption is strictly weaker. Moreover, we note that constant expansion restricts the growth across different distances, while our assumption only requires a relatively sparse neighborhood at the final level. A more robust version of constant expansion is the notion of doubling dimension, which asks how many balls of radius $r$ are needed to cover a ball of radius $2r$. It has been shown in~\cite{kl04_soda} that bounded doubling dimension is a strictly stronger assumption than constant expansion, and it provides a bound of the form $|B(v, cr)\cap U|\leq \Delta^{{\rm dim}(X)}$, where $\Delta\geq 2$ is the aspect ratio of $U$ and $X$ is the metric space. If $X=\R^d$ with any norm, then ${\rm dim}(X)=\Theta(d)$, and therefore for this bound to be meaningful, one must have $d=o(\log n)$. Moreover, data structures with a doubling dimension assumption often have their preprocessing time and space exponential in ${\rm dim}(X)$.

As a final note, we would like to point out that when working with $U\subseteq \{0,1\}^d$ and when the norm is $\|\cdot \|_1$, i.e., the Hamming ANN problem, Assumption~\ref{assumption:ann_sparse} automatically holds for $d=n^\alpha$ and $\alpha\leq \frac{\rho}{cr}$. In particular for Hamming LSH, $\rho=O(1/c)$~\cite{im98} and thus we only need $d\leq n^{\Theta(1/(c^2r))}$. Our result will provide a Hamming LSH that is robust against adaptive adversaries, and in particular against the attack of~\cite{kms24} in low dimensions.

Next, we consider the problem of adaptively updating the regression problem, and outputting the regression \emph{solution vector} whenever queried.

\begin{assumption}\label{assumption:regression}
Let $U\in \R^{n\times d}$ be the design matrix and $b\in \R^n$ be the response vector. Let $\{v_1,\ldots,v_T\}$ be a sequence of (possibly) adaptive updates to the problem in one of two forms: (1)  Update $U$: $v_t\in \R^{n\times d}$ and $U$ is updated via $U\leftarrow U+v_t$; (2) Update $b$: $v_t\in \R^n$ and $b$ is updated via $b\leftarrow b+v_t$.
We will use $(U_t, b_t)$ to denote the pair after being updated by $v_t$. The goal is to design a data structure such that for any $t\in [T]$, it outputs a $(1+\alpha)$-approximate solution $x_t\in \R^d$ for which $\|U_tx_t - b_t\|_2^2 \leq  (1+\alpha)\cdot \min_{x\in \R^d} \|U_t x- b_t\|_2^2$ holds with high probability. For all $t\in [T]$, we assume the condition number of $U_t$ defined as $\kappa(U_t):=\frac{\sigma_{\max}(U_t)}{\sigma_{\min}(U_t)}$, is upper bounded by $\kappa$. 
\end{assumption}
The problem and assumption can be succinctly described as follows: an adversary could adaptively perturb entries of $U$ and $b$, and our goal is to design a data structure that outputs the regression solution in the presence of these perturbations. In addition, we assume the perturbations are bounded, in the sense that they do not change the conditioning of the design matrix by too much. For this problem, note that one could na{\"i}vely store the matrix $U$ and solve the regression problem exactly. In this case, as the algorithm is deterministic, it is automatically adaptive. However, this approach would require $\Omega(nd)$ space and $O(nd^{\omega-1})$ time, which is both space- and time-inefficient for large $n$. Alternatively, one could prepare $T$ independent sketching matrices $S_1,\ldots,S_T$, one for each query, and store $S_t U, S_t b$ for all $t\in [T]$. When receiving an update to $U$ or $b$, one can simply update the corresponding entries of $S_t U$and $S_t b$, as sketching matrices are linear. This approach requires $\Omega(T\cdot \poly(d))$ space, a preprocessing time of $O(T\cdot (nd+\poly(d)))$, and a query time of $\poly(d)$. While the query time is much more efficient for $n\gg d$, the space and preprocessing time are prohibitive given that $T$ is large. Hence, our goal is to design algorithms that use $o(T\cdot \poly(d))$ space, have a sublinear in $T$ dependence in the  preprocessing time, and have a query time of $\poly(d)$.

\subsection{Main Results}
We state the main results for ANN and adaptive regression in this section.
\subsubsection{Adaptive ANN for Sparse Neighborhoods}
Our most general result for ANN is as follows:

\begin{theorem}
\label{thm:main_general}
Let $U\subseteq \R^d$ be an $n$-point dataset and let $f_v: (\R^d)^n\rightarrow (\R^d)^s$ be the predicate function. Let ${\cal A}$ be an oblivious algorithm, i.e., for a fixed query $v\in \R^d$, with probability at least $1-\delta$, ${\cal A}$ returns a point (or a subset) of $f_v(U)$ if $f_v(U)$ is non-empty. Moreover, ${\cal A}$ has preprocessing time ${\cal T}_{\rm prep}$, space usage ${\cal S}_{\rm space}$ and query time ${\cal T}_{\rm query}$. Then there exists an adaptive algorithm $\wt {\cal A}$ such that given a sequence of adaptive queries $\{v_1,\ldots,v_T\}$ satisfying Assumption~\ref{assumption:general}: (1) Preprocesses $U$ in time $\wt O(\sqrt T\cdot s)\cdot {\cal T}_{\rm prep}$; (2) Uses space $\wt O(\sqrt T\cdot s)\cdot {\cal S}_{\rm space}$; (3) For all $t\in [T]$, given query $v_t$, it returns a point (or a subset) of $f_{v_t}(U)$ in time $\wt O(s)\cdot {\cal T}_{\rm query}$ with probability at least $1-\delta$. In particular, the amortized cost per query is $\wt O(s/\sqrt T)\cdot {\cal T}_{\rm prep} + \wt O(s)\cdot {\cal T}_{\rm query}$.
\end{theorem}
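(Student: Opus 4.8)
The plan is to wrap the oblivious data structure $\mathcal A$ in the by-now-standard ``differential privacy as an adaptivity shield'' paradigm, but with one change forced by the fact that we must return a \emph{point} rather than a number: instead of aggregating the subsampled answers with a private median, I would aggregate them with a private \emph{selection} mechanism ($\rpm$), and I would use Assumption~\ref{assumption:general} to bound the size of the domain over which we select. Concretely, in preprocessing I would build $k = \wt O(\sqrt T\, s)$ independent copies $\mathcal A_1,\dots,\mathcal A_k$ of $\mathcal A$, each with fresh internal randomness, paying $\wt O(\sqrt T\, s)\cdot\mathcal T_{\rm prep}$ time and $\wt O(\sqrt T\, s)\cdot\mathcal S_{\rm space}$ space. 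To answer $v_t$, I would subsample a uniformly random $R_t\subseteq[k]$ with $|R_t| = m = \wt O(s)$, run $\mathcal A_i(v_t)$ for $i\in R_t$ (time $\wt O(s)\cdot\mathcal T_{\rm query}$), let $c_u$ be the number of these runs whose output contains $u$ (plus a bucket $c_\bot$ for runs that return nothing), and output $\rpm$ over the counts: add i.i.d.\ noise of constant scale to each $c_u$ and to $c_\bot$, and return the argmax. For the privacy analysis it is convenient to regard the selection domain as the \emph{fixed} set $f_{v_t}(U)\cup\{\bot\}$ of size at most $s+1$, giving count $0$ to candidates that were never actually returned; this does not alter the output distribution.

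For privacy I would treat the tuple of internal random strings as the private database, neighboring databases differing in one copy. Replacing one copy changes each count by at most $1$ (and in the ``returns a subset'' case, changes at most $s$ of them, each by at most $1$); since $\rpm$ is a one-sided noisy argmax, its privacy loss depends only on the $\ell_\infty$-sensitivity of the counts, not on how many coordinates move, so on the full set of $k$ copies the mechanism is $O(1)$-DP. Privacy amplification by subsampling at rate $q = m/k = \wt\Theta(1/\sqrt T)$ then makes each query $(\wt O(1/\sqrt T),\,\delta/\poly(T))$-DP, and advanced composition over the $T$ rounds makes the whole interaction $(\wt O(1),\,\delta)$-DP. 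The polylogarithmic slack hidden in $k$ and $m$ is exactly what is needed to make these constants come out right.

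For correctness I would first analyze a \emph{fixed} query $v$ with $f_v(U)\neq\emptyset$: then the $m$ subsampled runs are independent, so a Chernoff bound says at least $(1-2\delta)m$ of them return a point of $f_v(U)$, and since $|f_v(U)|\le s$, by pigeonhole some candidate gets count at least $(1-2\delta)m/s$, which with $m=\wt\Theta(s)$ exceeds the $O(\log(sT/\delta))$-scale fluctuation of $\rpm$'s noise; hence $\rpm$ returns a valid point except with probability $\delta/\poly(T)$ (the empty case and the ``return nothing'' behaviour of $\mathcal A$ are handled the same way via the $\bot$ bucket). To upgrade this to \emph{adaptive} queries I would invoke the standard transfer (generalization) theorem for differentially private adaptive data analysis: since the whole interaction is differentially private in the internal randomness, the accuracy guarantee proved for a fixed query continues to hold when $v = v_t$ is chosen from the transcript, and a union bound over $t\in[T]$ finishes it. The amortized cost is then the preprocessing cost spread over $T$ queries, $\wt O(s/\sqrt T)\cdot\mathcal T_{\rm prep}$, plus the per-query work $\wt O(s)\cdot\mathcal T_{\rm query}$.

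The part I expect to be most delicate is balancing the three knobs — the number $k$ of copies, the subsample size $m$, and the noise scale — so that simultaneously (a) the pigeonhole count $m/s$ dominates the noise (correctness for a fixed query) and (b) the subsampling rate $m/k$ together with advanced composition over $T$ rounds keeps the cumulative privacy loss small enough for the transfer theorem to apply; chasing these two constraints is what pins $k$ down to $\wt\Theta(\sqrt T\, s)$ while still allowing $m = \wt\Theta(s)$, and it is essential that the subsample size (hence query time) need only scale with $s$, not with $\sqrt T\, s$. A secondary but real subtlety is that $\mathcal A$ may return a \emph{subset} of $f_v(U)$ and that $f_v(U)$ may be empty, both of which I would absorb into the $\bot$ bucket and into the observation that $\rpm$'s privacy is insensitive to the number of candidates whose count changes.
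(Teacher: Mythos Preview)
Your high-level framework matches the paper's: replace private median by private selection ($\rpm$), build $k=\wt O(\sqrt{T}\,s)$ independent copies, subsample $m=\wt O(s)$ per query, amplify by subsampling, compose via advanced composition, and lift correctness from fixed to adaptive queries via the generalization property of DP; the pigeonhole step in the utility argument is identical.

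The gap is in how you make the selection step run in $\wt O(s)$ time. You propose to run $\rpm$ over the domain $f_{v_t}(U)\cup\{\bot\}$ and assert that giving count $0$ to unreturned candidates ``does not alter the output distribution.'' It does: a count-$0$ candidate still receives noise and can win the argmax, so restricting to the candidates actually returned --- which is all the algorithm can do in $\wt O(s)$ time, since it has no direct access to $f_{v_t}(U)$ --- yields a different mechanism, and that mechanism is not DP. If the changed copy is the unique one returning some $u$, then $u$ leaves the domain and $\Pr[\text{output}=u]$ drops from a constant to $0$, so even $(\epsilon,\delta)$-DP fails for any useful $\delta$. Enumerating $f_{v_t}(U)$ to keep the domain fixed would cost $\Omega(nd)$ in the ANN setting and destroy the claimed query time. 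The paper's fix is to keep the selection domain equal to all of $[n]$ (so it is truly fixed and the $\ell_\infty$-sensitivity argument for $\rpm$ applies cleanly) and to develop a \emph{sparse argmax mechanism} that exactly simulates $\rpm$ over $[n]$ in $O(s\log n)$ time when the count vector is $s$-sparse: sample the $n$-th order statistic $X_{(n)}$ of the exponential noise in $O(1)$ time via its closed-form CDF, flip a coin with bias $s/n$ to decide whether $X_{(n)}$ lands on the support, and generate only $O(s)$ further noises rejection-sampled to lie below $X_{(n)}$. The empty case is then handled by a deterministic post-processing distance check on the returned index rather than by your $\bot$ bucket.
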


Let us interpret Theorem~\ref{thm:main_general}. It states that as long as $s=o(\sqrt T)$, then we can turn an oblivious data structure into an adaptive data structure using fewer than $T$ independent copies of the oblivious data structure. Of course, this comes at a cost of a slightly worse query time by a factor of $\wt O(s)$. However, if we consider the amortized cost per query for using $T$ data structures, the cost is dominated by ${\cal T}_{\rm prep}$, as the algorithm needs to prepare a fresh data structure for each query. In contrast, Theorem~\ref{thm:main_general} has an amortized cost of $\wt O(s/\sqrt{T})\cdot {\cal T}_{\rm prep}+\wt O(s)\cdot {\cal T}_{\rm query}$ per query. Typically, ${\cal T}_{\rm query}$ is much smaller than ${\cal T}_{\rm prep}$; for example, for LSH, ${\cal T}_{\rm prep}=n^{1+\rho}d$ and ${\cal T}_{\rm query}=n^\rho d$, so it is much more important to obtain a reduction on the number of data structures. Applying Theorem~\ref{thm:main_general}, we immediately obtain adaptive algorithms for LSH under different norms, using fewer than $T$ data structures. 

\begin{theorem}
\label{thm:main_lsh}
Let $U\subseteq \R^d$ be an $n$-point dataset satisfying Assumption~\ref{assumption:ann_sparse}. There exists an adaptive algorithm $\wt {\cal A}$ such that given a sequence of adaptive queries $\{v_1,\ldots,v_T\}$, it  solves the $(c,r)$-ANN problem (Definition~\ref{def:cr_ann}) and (1) Preprocesses $U$ in time $\wt O(\sqrt T\cdot s\cdot n^{1+\rho}d)$; (2) Uses space $\wt O(\sqrt T\cdot s\cdot n^{1+\rho}+nd)$; (3) For all $t\in [T]$, given query $v_t$, it returns a point in $B(v, cr)\cap U$ if $B(v, r)\cap U\neq \emptyset$ in time $\wt O(s\cdot n^\rho d)$, with probability at least $1-\delta$.
\end{theorem}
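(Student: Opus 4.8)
The plan is to derive Theorem~\ref{thm:main_lsh} as a direct instantiation of Theorem~\ref{thm:main_general}, taking the oblivious algorithm ${\cal A}$ to be a locality-sensitive hashing (LSH) data structure for the norm under consideration and taking the predicate function to be $f_v(U) = U \cap B(v, cr)$. The first step is to check that Assumption~\ref{assumption:ann_sparse} implies Assumption~\ref{assumption:general} for this choice of $f_v$, which is exactly the triangle-inequality argument sketched in the text: if some query $v_t$ satisfied $|U \cap B(v_t, cr)| > s$, then picking any $u \in U \cap B(v_t, cr)$, every other point $u' \in U \cap B(v_t, cr)$ lies in $B(u, cr)$ since $\|u - u'\| \le 2cr$, so $B_u$ would intersect more than $s$ of the balls $\{B_{u'}\}_{u' \in U}$, contradicting Assumption~\ref{assumption:ann_sparse}. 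Hence $|f_{v_t}(U)| \le s$ for all $t \in [T]$, and moreover $U \cap B(v_t,cr) \ne \emptyset$ whenever $U \cap B(v_t,r) \ne \emptyset$, so the requirement of the $(c,r)$-ANN problem (Definition~\ref{def:cr_ann}) coincides with ``return a point of $f_{v_t}(U)$.''

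Next I would record the oblivious LSH guarantees in the form required by Theorem~\ref{thm:main_general}. For $(c,r)$-ANN, amplifying standard LSH by using $O(\log(n/\delta))$ independent hash tables yields a data structure that, for any fixed query $v$, with probability at least $1-\delta$ returns a single point of $U \cap B(v, cr)$ whenever $U \cap B(v, r) \ne \emptyset$, and otherwise returns nothing or a point of $U \cap B(v, cr)$ (matching the restricted ANN specification in the footnote to Definition~\ref{def:cr_ann}). Its preprocessing time is $\mathcal{T}_{\rm prep} = \wt O(n^{1+\rho} d)$, its query time is $\mathcal{T}_{\rm query} = \wt O(n^{\rho} d)$, and its hash tables occupy $\wt O(n^{1+\rho})$ space, with an additional $\wt O(nd)$ to store $U$ itself (used for answering and, inside the reduction of Theorem~\ref{thm:main_general}, for verifying that returned points actually lie in $B(v,cr)$). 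Since LSH returns a single point, it meets the ``returns a point (or a subset) of $f_v(U)$'' interface required by Theorem~\ref{thm:main_general}.

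Finally I would substitute these parameters into the three bullets of Theorem~\ref{thm:main_general}: preprocessing time $\wt O(\sqrt T \cdot s) \cdot \mathcal{T}_{\rm prep} = \wt O(\sqrt T \cdot s \cdot n^{1+\rho} d)$; space $\wt O(\sqrt T \cdot s \cdot n^{1+\rho})$ for the $\wt O(\sqrt T \cdot s)$ collections of hash tables plus a single shared copy of $U$, giving $\wt O(\sqrt T \cdot s \cdot n^{1+\rho} + nd)$; and per-query time $\wt O(s) \cdot \mathcal{T}_{\rm query} = \wt O(s \cdot n^{\rho} d)$ with failure probability at most $\delta$. This is exactly the claimed statement. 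The only points requiring care are the bookkeeping that the $\wt O(nd)$ term for storing $U$ is shared across all copies rather than replicated $\wt O(\sqrt T \cdot s)$ times, and making sure that the ``no $r$-near neighbor'' regime, where LSH may legitimately return nothing, is consistent with what the reduction of Theorem~\ref{thm:main_general} needs; everything else is a mechanical substitution, since all the adaptivity and differential-privacy machinery has already been established in Theorem~\ref{thm:main_general}.
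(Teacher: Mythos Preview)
Your proposal is correct and follows exactly the paper's approach: the paper presents Theorem~\ref{thm:main_lsh} as an immediate instantiation of Theorem~\ref{thm:main_general} with LSH as the oblivious data structure, and the implication Assumption~\ref{assumption:ann_sparse} $\Rightarrow$ Assumption~\ref{assumption:general} via the triangle inequality is precisely the argument given in the paper's introduction. One small slip: you write that $u'$ ``lies in $B(u, cr)$ since $\|u-u'\| \le 2cr$,'' which is not what the triangle inequality gives; the correct statement (and what you need) is that $\|u-u'\| \le 2cr$ forces the radius-$cr$ balls $B_u$ and $B_{u'}$ to intersect, which is the conclusion you draw anyway.
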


A generic LSH data structure template restricts the number of points it looks at per query to $n^\rho$, so if $s\leq n^\rho$, we obtain the following results:
\begin{corollary}
Let $U\subseteq \R^d$ satisfy Assumption~\ref{assumption:ann_sparse} with $s\leq n^\rho$. Then there exists an adaptive algorithm $\wt {\cal A}$ such that given a sequence of adaptive queries $\{v_1,\ldots,v_T\}$, the data structure (1) Preprocesses $U$ in time $\wt O(\sqrt T\cdot n^{1+O(\rho)}d)$;
   (2) Uses space $\wt O(\sqrt T\cdot n^{1+O(\rho)}+nd)$;
    (3) For all $t\in [T]$, given query $v_t$, it returns a point in $B(v, cr)\cap U$ if $B(v, r)\cap U\neq \emptyset$ in time $\wt O(n^{O(\rho)} d)$, with probability at least $1-\delta$.
\end{corollary}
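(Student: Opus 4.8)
The plan is to derive the corollary directly from Theorem~\ref{thm:main_lsh}: its hypothesis, Assumption~\ref{assumption:ann_sparse}, is exactly what we assume here (and, as observed in the text, this already forces $|f_{v_t}(U)|\le s$ for every query via the triangle inequality), so it suffices to plug in a concrete LSH primitive and then simplify the resulting bounds under the extra hypothesis $s\le n^\rho$.

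First I would recall the standard LSH template for the norm under consideration: fix an LSH family with exponent $\rho$, build $L=\wt O(n^\rho)$ hash tables, each hashing the $n$ points of $U$ through a concatenation of $k=\Theta(\log n)$ base hash functions, and keep one copy of $U$ itself for reporting. This oblivious data structure has ${\cal T}_{\rm prep}=\wt O(n^{1+\rho}d)$, ${\cal S}_{\rm space}=\wt O(n^{1+\rho}+nd)$, and ${\cal T}_{\rm query}=\wt O(n^\rho d)$, since a query probes one bucket per table and stops after inspecting $\wt O(n^\rho)$ candidate points; with $\wt O(\log(1/\delta))$-fold repetition it returns a point of $f_v(U)=B(v,cr)\cap U$ whenever that set is non-empty, with probability at least $1-\delta$. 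The role of the hypothesis $s\le n^\rho$ is that the template inspects $\Omega(n^\rho)\ge s$ candidates anyway, so it can just as well enumerate all of the (at most $s$) points of $B(v,cr)\cap U$ that it encounters; combined with the LSH collision guarantee, this shows it is a valid oblivious primitive in the sense required by Theorem~\ref{thm:main_general}/Theorem~\ref{thm:main_lsh}.

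Next I would invoke Theorem~\ref{thm:main_lsh} with this primitive, obtaining an adaptive $\wt{\cal A}$ with preprocessing time $\wt O(\sqrt T\cdot s\cdot n^{1+\rho}d)$, space $\wt O(\sqrt T\cdot s\cdot n^{1+\rho}+nd)$, and per-query time $\wt O(s\cdot n^\rho d)$, each query succeeding with probability at least $1-\delta$. Substituting $s\le n^\rho$ then collapses these bounds: preprocessing becomes $\wt O(\sqrt T\cdot n^{1+2\rho}d)=\wt O(\sqrt T\cdot n^{1+O(\rho)}d)$, space becomes $\wt O(\sqrt T\cdot n^{1+2\rho}+nd)=\wt O(\sqrt T\cdot n^{1+O(\rho)}+nd)$, and query time becomes $\wt O(n^{2\rho}d)=\wt O(n^{O(\rho)}d)$, which are exactly items (1)--(3).

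The only step that is not pure bookkeeping is the observation in the second paragraph: one must verify that a generic LSH, which by construction looks at only $\wt O(n^\rho)$ points per query, genuinely returns a point (or a subset) of $f_v(U)$ rather than an arbitrary point of $U$, and that this is precisely where $s\le n^\rho$ enters. I would also make sure the failure probabilities compose correctly --- over the hash functions, over the $\wt O(\log(1/\delta))$ repetitions inside ${\cal A}$, and over the $\wt O(\sqrt T)$ subsampled copies used inside $\wt{\cal A}$ --- so that the final per-query guarantee is $1-\delta$; everything else is direct substitution into Theorem~\ref{thm:main_lsh}.
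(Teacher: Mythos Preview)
Your proposal is correct and follows exactly the paper's approach: the corollary is stated right after Theorem~\ref{thm:main_lsh} as an immediate consequence of plugging $s\le n^\rho$ into its bounds, with the one-line remark that a generic LSH template already inspects $n^\rho$ points per query. Your write-up just spells out the substitution $s\cdot n^{1+\rho}\le n^{1+2\rho}=n^{1+O(\rho)}$ and $s\cdot n^\rho\le n^{2\rho}=n^{O(\rho)}$ explicitly, together with the (accurate) observation about why the LSH primitive qualifies as the oblivious algorithm in Theorem~\ref{thm:main_general}; nothing more is needed.
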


Utilizing these adaptive data structures, we obtain improved runtime for problems such as online weighted matching with adversarial arrival, and terminal embeddings. We refer the reader to Section~\ref{sec:app} for more details. In addition, when these data structures need to be updated (insert or delete points from the data structures), we provide procedures based on fast rectangular matrix multiplication that beat the alternative of simply updating all data structures (Section~\ref{sec:batch}). In Table~\ref{tab:ann}, we compare our result with prior works that use $d$ or $T$ copies of the LSH's.

\begin{table*}[!ht]
    \centering
    \begin{tabular}{|l|l|l|l|l|}
    \hline
       {\bf Method}  & {\bf Space} & {\bf Amortized Prep Time} & {\bf Query Time} & {\bf Update Time}  \\ \hline
        $T$ copies & $Tn^{1+\rho}+nd$ & $n^{1+\rho} d$ & $n^\rho d$ & $Tn^\rho d$ \\ \hline 
        $d$ copies & $n^{1+\rho}d$ & $\frac{d}{T}n^{1+\rho}d$ & $n^\rho d$ & $n^\rho d^2$ \\ \hline
        {\bf This paper} & $\sqrt T sn^{1+\rho}+nd$ & $\frac{s}{\sqrt T}n^{1+\rho}d$ & $sn^\rho d$ & $\sqrt Tsn^\rho d$\\ \hline 
    \end{tabular}
    \caption{Specifications of data structures given a sequence of $T$ adaptive queries and updates for ANN problem. We ignore $\wt O(\cdot)$ notation for clarity. We use $s$ to denote the parameter for Assumption~\ref{assumption:ann_sparse}.}
    \label{tab:ann}
\end{table*}

\subsubsection{Adaptive Regression for Well-Conditioned Instances}
For adaptive regression, we can use fewer than $T$ copies of an oblivious data structures if the condition number upper bound $\kappa$ is small:

\begin{theorem}\label{thm:reg_well_conditioned}
Let $U\in \R^{n\times d}, b\in \R^n$ and $\{v_1,\ldots,v_T \}$ be a sequence of adaptive updates to $(U, b)$, satisfying Assumption~\ref{assumption:regression}. Then, there exists an adaptive algorithm that 
(1) Preprocesses $(U, b)$ in time $\wt O(\sqrt{Td}\cdot (\nnz(U)+\nnz(b)+d^3+d^{2}\kappa^2/\alpha^2))$;
   (2) Uses space $\wt O(\sqrt{T}\cdot d^{2.5}\kappa^2/\alpha^2)$;
    (3) For all $t\in [T]$, it updates $(U_{t-1}, b_{t-1})$\footnote{We use $(U_0, b_0)$ to denote $(U, b)$, the initial design matrix and response vector.} to $(U_t, b_t)$ in time $\wt O(\sqrt{Td}\cdot (\nnz(v_t)+d^3+d^2\kappa^2/\alpha^2))$;
    (4) For all $t\in [T]$, given update $v_t$, it returns a solution $x_t$ that is a $(1+\alpha)$-approximation to the regression problem using $(U_t, b_t)$ in time $\wt O(d^{\omega+1}\kappa^2/\alpha^2)$, with probability at least $1-\delta$.
\end{theorem}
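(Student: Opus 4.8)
The strategy is to instantiate the differential-privacy reduction of \cite{hkm+22,bkm+21,csw+23} with a \emph{linear} sketch-and-solve least-squares routine as the oblivious building block, and to privately aggregate the $d$-dimensional solution vectors returned by the sampled copies. The extra $\sqrt d$ in the number of copies (compared with the $\wt O(\sqrt T)$ of the purely numerical case) and the $\kappa^2/\alpha^2$ overhead both originate in this aggregation step. Concretely, I would prepare $k=\wt O(\sqrt{Td})$ independent oblivious sketches $S_1,\dots,S_k\in\R^{m\times n}$ with $m=\wt O(d\kappa^2/\alpha^2)$ rows, each of which is, with probability $1-\delta$, a subspace embedding for $\mathrm{colspan}([U\mid b])$ of sufficiently small distortion (a sparse OSNAP/Hadamard construction, so that $S_iU,S_ib$ are computable in $\wt O(\nnz(U)+\nnz(b)+d^3+d^2\kappa^2/\alpha^2)$ time and occupy $\wt O(d^2\kappa^2/\alpha^2)$ space). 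Store $(S_iU,S_ib)$ for all $i$. Since each $S_i$ is linear, a turnstile update $U\leftarrow U+v_t$ or $b\leftarrow b+v_t$ is applied by $S_iU\leftarrow S_iU+S_iv_t$, at cost $\wt O(\nnz(v_t)+d^3+d^2\kappa^2/\alpha^2)$ per copy and $\wt O(\sqrt{Td})$ times that in total; on a query, solving $\min_x\|S_iU_tx-S_ib_t\|_2$ returns a candidate $\wt x^{(i)}_t$ that is a $(1+\alpha'')$-approximate regression solution for $(U_t,b_t)$, where $\alpha''$ is a $\poly(\kappa)$ factor below $\alpha$, in time $\wt O(md^{\omega-1})=\wt O(d^\omega\kappa^2/\alpha^2)$ per copy.

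The role of the condition-number bound in Assumption~\ref{assumption:regression} is to convert a near-optimal \emph{cost} into a near-optimal \emph{solution vector}. Writing $x_t^*$ for the true minimizer, the Pythagorean identity $\|U_tx-b_t\|_2^2=\OPT_t+\|U_t(x-x_t^*)\|_2^2$ shows that a $(1+\alpha'')$-cost-approximate vector lies within $\sqrt{\alpha''\OPT_t}/\sigma_{\min}(U_t)$ of $x_t^*$ in Euclidean norm, and conversely that any $x$ within $\eta$ of $x_t^*$ has cost at most $\OPT_t+\sigma_{\max}(U_t)^2\eta^2$. Hence all candidates from unfailed copies lie in a common Euclidean ball of radius $\rho_t=O(\sqrt{\alpha''\OPT_t}/\sigma_{\min}(U_t))$ around $x_t^*$, and any point within $O(\rho_t)$ of $x_t^*$ is a $(1+\alpha)$-approximate solution once $\alpha''$ is chosen a $\poly(\kappa)$ factor below $\alpha$ — which is exactly what fixes $m=\wt O(d\kappa^2/\alpha^2)$. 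It thus suffices to privately output a point near the center of a tightly clustered set of $\wt O(d)$ candidate vectors. I would do this with a differentially private selection/aggregation primitive — e.g. the \rpm mechanism used elsewhere in the paper, applied to the sampled candidates scored by their sketch-estimated cost, or a coordinate-wise private median after a propose-test-release certification that the candidates are clustered — first clipping each candidate to the a priori ball $\{x:\|x\|_2\le\|b_t\|_2/\sigma_{\min}(U_t)\}$ so the aggregation has bounded sensitivity.

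For the privacy analysis, view the private database as the tuple of internal random seeds $(r_1,\dots,r_k)$ of the $k$ sketches; changing one seed changes one candidate per query. Each query subsamples $\wt O(d)$ copies, so privacy amplification by subsampling together with advanced composition \cite{drv10} over the $T$ queries (each a constant number of DP primitives on $d$ coordinates) yields $(\epsilon,\delta)$-differential privacy with $\epsilon=O(1)$ for $k=\wt O(\sqrt{Td})$ — the $\sqrt d$ being the price of releasing a $d$-dimensional object. Correctness against an \emph{adaptive} adversary then follows from the standard transfer/generalization argument of \cite{hkm+22}: since each sketch is correct on any fixed input with probability $1-\delta$ and the released solution is a differentially private function of the seeds, the released solution remains a $(1+\alpha)$-approximate solution on the adaptively generated sequence $(U_t,b_t)$ with high probability. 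Multiplying the per-copy preprocessing and update costs by $k=\wt O(\sqrt{Td})$, and the per-query solve cost $\wt O(d^\omega\kappa^2/\alpha^2)$ by the per-query sample count $\wt O(d)$, gives the claimed bounds.

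The crux is the $d$-dimensional private aggregation: it must return a point whose Euclidean distance to $x_t^*$ is within a mild (at most $\poly(\kappa)$, not $\poly(d)$) factor of the candidate-ball radius $\rho_t$, since the condition-number loss is incurred twice (cost$\to$solution and solution$\to$cost), while \emph{simultaneously} the privacy budget must close with only $\wt O(\sqrt{Td})$ copies under adaptively chosen queries. Controlling the $\ell_2$ error of a DP selection/averaging step over $d$ coordinates without paying an extra $\sqrt d$ that would destroy the $(1+\alpha)$ guarantee — and arranging the clipping so that the transfer lemma's bounded-domain hypothesis holds uniformly over the adaptive sequence via $\|x_t^*\|_2\le\|b_t\|_2/\sigma_{\min}(U_t)$ — is where essentially all the work lies; the linear sketch-and-solve and turnstile-update components are standard.
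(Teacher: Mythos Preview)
Your overall framework is right — linear sketches, subsampling, coordinate-wise private aggregation, and the DP-generalization transfer — and you correctly flag the crux as aggregating a $d$-dimensional vector without paying $\sqrt d$ in $\ell_2$. But you leave exactly that step unresolved, and neither of your proposed mechanisms closes it. The paper's missing ingredient is the $\ell_\infty$ guarantee of SRHT sketches \cite{psw17,syyz23}: with $r=\wt O(d\kappa^2/\alpha^2)$ rows and $\alpha'=\alpha/\kappa$, the sketched solution satisfies $\|\wt x - x^*\|_\infty \le \frac{\alpha'}{\sqrt d}\cdot\|U_tx^*-b_t\|_2/\sigma_{\min}(U_t)$, which is a $\sqrt d$ factor stronger \emph{per coordinate} than anything the $\ell_2$ backward-error bound $\|\wt x-x^*\|_2\le\rho_t$ can give. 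With this in hand, coordinate-wise private median on $s=\wt O(1)$ sampled candidates yields $\|g-x^*\|_\infty\le\frac{\alpha'}{\sqrt d}\cdot\sqrt{\OPT_t}/\sigma_{\min}$, hence $\|U_t(g-x^*)\|_2\le\sigma_{\max}\sqrt d\,\|g-x^*\|_\infty\le\alpha\sqrt{\OPT_t}$ with no $\sqrt d$ loss. Using only the $\ell_2$ clustering you invoke, the coordinate-wise median can be off by $\rho_t$ in each coordinate and hence $\sqrt d\,\rho_t$ in $\ell_2$; absorbing that would force the sketch accuracy down by a factor of $d$ and blow up all four complexity bounds.

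Your first alternative, $\rpm$ over the sampled candidates scored by estimated cost, does not preserve privacy with respect to the seeds: the released vector is one of the $\wt x^{(j)}$, a deterministic function of seed $r_j$ given the data, so the output space is not a fixed category set as in the ANN reduction and a single-seed change moves the output arbitrarily. Your second alternative (propose-test-release on clustering) certifies the $\ell_2$ radius but says nothing about per-coordinate spread, so the $\sqrt d$ blowup remains. Finally, the $\sqrt d$ in $k=\wt O(\sqrt{Td})$ arises in the paper from composing $d$ private-median calls per query on $s=\wt O(1)$ subsampled copies, not from subsampling $\wt O(d)$ copies; with your accounting (subsample $\wt O(d)$ copies and release once per query), amplification-plus-advanced-composition gives $k=\wt O(d\sqrt T)$ rather than $\wt O(\sqrt{Td})$.
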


Theorem~\ref{thm:reg_well_conditioned} offers an algorithm with extremely efficient query time. When the condition number bound $\kappa$ is small, it provides a space bound of $\sqrt T\cdot \poly(d)$, both sublinear in $T$ and with no polynomial dependence on $n$. The preprocessing time outperforms the simple algorithm which generates $T$ sketches when $d\ll T$ and $\kappa$ is small. While for ANN, one could prepare $\wt O(d)$ data structures during preprocessing to prepare for all possible queries, and at query time just sample a small number of these, this is not possible for regression. Indeed, for regression one would need to prepare $\wt O(nd)$ such data structures, given the number of possible design matrices, which would be prohibitive. We again compare our result with prior approaches that use $nd$ or $T$ copies of sketches, in Table~\ref{tab:reg}.

\begin{table*}[!ht]
    \centering
    \begin{tabular}{|l|l|l|l|l|}
    \hline
       {\bf Method}  & {\bf Space} & {\bf Amortized Prep Time} & {\bf Query Time} & {\bf Update Time}  \\ \hline
        $T$ copies & $Td^2/\alpha^2$ & $\nnz(U, b)+d^3+d^2/\alpha^2$ & $d^{\omega+1}/\alpha^2$ & $T(\nnz(v_t)+d^3+d^2/\alpha^2)$ \\ \hline 
        $nd$ copies & $nd^3/\alpha^2$ & $\frac{nd}{T}(\nnz(U, b)+d^3+d^2/\alpha^2)$ & $d^{\omega+1}/\alpha^2$ & $nd(\nnz(v_t)+d^3+d^2/\alpha^2)$ \\ \hline
        {\bf This paper} & $\sqrt Td^{2.5}\kappa^2/\alpha^2$ & $\sqrt{\frac{d}{T}}(\nnz(U, b)+d^3+d^2\kappa^2/\alpha^2)$ & $d^{\omega+1}\kappa^2/\alpha^2$ & $\sqrt{Td}(\nnz(v_t)+d^3+d^2\kappa^2/\alpha^2)$\\ \hline 
    \end{tabular}
    \caption{Specifications of data structures given a sequence of $T$ adaptive queries and updates for regression. We ignore $\wt O(\cdot)$ notation for clarity. We use $\nnz(U, b)$ as a shorthand for $\nnz(U)+\nnz(b)$, and let $\kappa$ be the parameter for Assumption~\ref{assumption:regression}.}
    \label{tab:reg}
\end{table*}

%

We next study the model in~\cite{csw+23}, where only the response vector $b$ is allowed to be updated in at most $s$ positions. In~\cite{csw+23}, it is shown that one can corporate the techniques of~\cite{bkm+21} to output the \emph{cost} of the regression problems, but a key open question was whether one could output the actual solution vector to the regression problems. We resolve this question by utilizing the tools we developed in Theorem~\ref{thm:reg_well_conditioned} in conjunction with a preconditioner for $U$ to remove the dependence on $\kappa$. Below, we will use $\nnz(U)$ to denote the number of nonzero entries in $U$.

\begin{theorem}\label{thm:reg_sparse}
Let $U\in \R^{n\times d}$ and $b\in \R^n$. Let $\{v_1,v_2,\ldots,\}\subset \R^n$ be a sequence of adaptive updates to $b$, such that for all $t$, $\|v_t\|_0\leq s$. Let $T$ be a batch size parameter. There exists an adaptive algorithm that
(1) It has amortized update time $\wt O(\sqrt{d/T}\cdot (\nnz(U)+\nnz(b)+d^{3}+d^{\omega}/\alpha^2)+\sqrt{Td}\cdot (s+d^3+d^{2}/\alpha^2))$;
(2) It outputs a $(1+\alpha)$-approximate solution $x_t$ in time $\wt O(d^2)$.
\end{theorem}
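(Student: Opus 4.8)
The plan is to reduce Theorem~\ref{thm:reg_sparse} to Theorem~\ref{thm:reg_well_conditioned}, using a preconditioner to drive the effective condition number down to $O(1)$, and then exploiting the restricted update model (only $b$ changes, and sparsely) to sharpen the per-update and per-query costs. Since $U$ is never updated, I would build the preconditioner once during preprocessing: draw a sparse subspace embedding $S\in\R^{\wt O(d)\times n}$ (an \textsc{OSNAP}/\textsc{CountSketch}-type matrix), form $SU$ in time $\wt O(\nnz(U)+\poly(d))$, and compute a QR factorization $SU=QR$. Then w.h.p.\ $A:=UR^{-1}$ has $\kappa(A)=O(1)$, since $\|Az\|=\Theta(\|SUz\|)=\Theta(\|Rz\|)=\Theta(\|z\|)$ for every $z$. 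The change of variable $y=Rx$ identifies the two least-squares problems and, crucially, preserves objective values \emph{exactly}: $\|Ay-b_t\|_2=\|UR^{-1}y-b_t\|_2=\|Ux-b_t\|_2$ for $y=Rx$. Hence any $y_t$ that is $(1+\alpha)$-approximate for $\min_y\|Ay-b_t\|_2^2$ gives $x_t:=R^{-1}y_t$ that is $(1+\alpha)$-approximate for the original regression on $(U,b_t)$ — with no reintroduction of $\kappa(U)$.

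Next I would run the machinery behind Theorem~\ref{thm:reg_well_conditioned} on the instance $(A,b)$ with the $s$-sparse update stream acting on $b$; this satisfies Assumption~\ref{assumption:regression} with $\kappa=O(1)$, so all $\kappa^2$ factors collapse. This maintains $\wt O(\sqrt{Td})$ sparse sketched copies $\{S_jA,S_jb\}_j$, answers a query by sampling $\wt O(1)$ of them and releasing a differentially private aggregate of their sketched solutions, and uses advanced composition over the $T$ rounds to bound the privacy loss and hence the adaptive robustness; the extra $\sqrt d$ beyond the scalar case of~\cite{bkm+21,csw+23} is the price of returning a $d$-dimensional vector. The preconditioner $R$ depends only on $U$ (which the adversary may as well be given) and enters only as a fixed post-processing map $y\mapsto R^{-1}y$, so it does not interact with the privacy analysis, and its $n^{-\Omega(1)}$ failure probability (over independent randomness) is folded into $\delta$.

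The two efficiency gains both come from $A$ being fixed. First, sparse updates stay sparse: since $\|v_t\|_0\le s$ and each $S_j$ is a sparse embedding, $S_jv_t$ has $\wt O(s)$ nonzeros, so each copy updates $S_jb_t\leftarrow S_jb_{t-1}+S_jv_t$ in $\wt O(s)$ time, giving the $\wt O(\sqrt{Td}\cdot s)$ per-update term (the $\wt O(\sqrt{Td}\,(d^3+d^2/\alpha^2))$ is bookkeeping inherited from the reduction). Second, because the design matrix never changes, I would precompute each pseudoinverse $(S_jA)^\dagger$ during preprocessing; forming $S_jUR^{-1}$ and then $(S_jA)^\dagger$ costs $\wt O(d^3+d^\omega/\alpha^2)$ per copy, which over $\wt O(\sqrt{Td})$ copies, plus $\wt O(\nnz(U)+\nnz(b))$ per copy to build $S_jU,S_jb$ and the one-time $\wt O(\nnz(U)+d^\omega)$ for $R$, matches the claimed preprocessing cost after amortizing over $T$. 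A query at time $t$ then reduces, for each of the $\wt O(1)$ sampled copies, to the matrix--vector product $(S_jA)^\dagger(S_jb_t)$, followed by the private aggregation and the multiplication by $R^{-1}$, all in time $\wt O(d^2)$ — rather than re-solving a sketched least-squares problem from scratch as in Theorem~\ref{thm:reg_well_conditioned} (which is what made its query time $\wt O(d^{\omega+1})$).

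The step I expect to be the main obstacle is the interface with Theorem~\ref{thm:reg_well_conditioned}: one must verify that its internal randomized components can be instantiated with \emph{sparse} sketches without degrading either the subspace-embedding quality or the differential privacy guarantee, so that $s$-sparse updates to $b$ genuinely propagate to $\wt O(s)$-sparse updates of every copy; and that freezing the sketched normal equations / pseudoinverses at preprocessing time is legitimate precisely because $A$ — unlike the design matrix in Theorem~\ref{thm:reg_well_conditioned} — is static. A secondary point to confirm is that exact objective-value preservation under $y=Rx$ really does transport the $(1+\alpha)$ guarantee back to $(U,b_t)$ with no hidden dependence on $\kappa(U)$, and that the preconditioner's error event can be absorbed into $\delta$.
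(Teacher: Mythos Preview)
Your approach is essentially the paper's: compute a preconditioner $P$ so that $\kappa(UP)=O(1)$, run the coordinate-wise private-median framework of Theorem~\ref{thm:reg_well_conditioned} on the preconditioned instance, precompute the sketched pseudoinverses $M_j=(S_jUP)^\dagger S_j$ since the design is static, and regenerate all sketches every $T$ rounds so the preprocessing amortizes as the $\sqrt{d/T}$ term.

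Two small corrections on the obstacle you flagged. First, the sketches $S_j$ in that framework are \emph{not} purely sparse: they are $S_{\textsc{SRHT}}\cdot S_{\textsc{CS}}$ (Lemma~\ref{lem:sketch}), because the $\ell_\infty$ guarantee $\|x_{(j)}-x^*\|_\infty\le\frac{\alpha}{\sqrt d}\,\|Ux^*-b\|_2/\sigma_{\min}$ --- which is what makes coordinate-wise private median yield utility --- is only established for SRHT. So $S_jv_t$ is not $\wt O(s)$-sparse; rather, the Count-Sketch front end applies to an $s$-sparse $v_t$ in $O(s)$ time, and everything downstream lives in dimension $\wt O(\poly(d,1/\alpha))$, which is exactly what produces the $s+d^3+d^2/\alpha^2$ per-copy update cost in the statement. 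Second, to hit the stated $\wt O(d^2)$ query time (rather than $\wt O(d^2/\alpha^2)$), the paper maintains the solution vectors ${\rm sk}^j=M_jb_t\in\R^d$ incrementally during updates, so a query is just the private median over $\wt O(1)$ of these precomputed $d$-vectors plus one multiplication by $P$; doing the $(S_jA)^\dagger(S_jb_t)$ product at query time as you describe would cost $O(dr)=\wt O(d^2/\alpha^2)$.
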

The main advantage of Theorem~\ref{thm:reg_sparse} over Theorem~\ref{thm:reg_well_conditioned} is that the quadratic dependence on the condition number can be removed via choosing a proper preconditioner in this setting, since $U$ is not changing. We also have extremely fast query time, as the solution vector can be quickly updated via a matrix-vector product instead of solving the regression problem from scratch. 

\ifdefined\isarxiv
We summarize the comparison in Table~\ref{tab:label}, note that here for a net argument, we only need to union bound over the net in $\R^n$. Note that our approach offers a significant speedup in the regime $d\leq T\leq n$.

\begin{table*}[!ht]
    \centering
    \begin{tabular}{|l|l|l|l|l|}
    \hline
       {\bf Method}  & {\bf Space} & {\bf Amortized Prep Time} & {\bf Query Time} & {\bf Update Time}  \\ \hline
        $T$ copies & $Td^2/\alpha^2$ & $\nnz(U, b)+d^3+d^\omega/\alpha^2$ & $d^2$ & $T(s+d^3+d^2/\alpha^2)$ \\ \hline 
        $n$ copies & $nd^2/\alpha^2$ & $\frac{n}{T}(\nnz(U, b)+d^3+d^\omega/\alpha^2)$ & $d^2$ & $n(s+d^3+d^2/\alpha^2)$ \\ \hline
        {\bf This paper} & $\sqrt Td^{2.5}/\alpha^2$ & $\sqrt{\frac{d}{T}}(\nnz(U, b)+d^3+d^\omega/\alpha^2)$ & $d^2$ & $\sqrt{Td}(s+d^3+d^2/\alpha^2)$\\ \hline 
    \end{tabular}
    \caption{Specifications of data structures given a sequence of $T$ adaptive queries and updates for regression under sparse label shifts. We ignore $\wt O(\cdot)$ notation for clarity. We use $\nnz(U, b)$ as a shorthand for $\nnz(U)+\nnz(b)$ and we assume $\|v_t\|_0\leq s$ for all $t\in [T]$.}
    \label{tab:label}
\end{table*}
\fi

When the condition number $\kappa$ is as large as $\poly(n)$ and $U$ is dynamically changing, Theorem~\ref{thm:reg_well_conditioned} no longer gives efficient space and preprocessing time. To address this problem, we develop an algorithm with only \emph{logarithmic} dependence on the condition number $\kappa$, based on the bounded computation path technique~\cite{bjwy20}.
\vspace{-0.5mm}
\begin{theorem}\label{thm:log_kappa}
Let $U\in \R^{n\times d}, b\in \R^n$ and $\{v_1,\ldots,v_T \}$ be a sequence of adaptive updates to $(U, b)$, satisfying Assumption~\ref{assumption:regression}. Let $\mathcal{P}$ denote set of possible output sequences the algorithm can provide to the adversary; note that we always have $|\mathcal{P}|\le (n\kappa)^{\Theta(dT)}$. There exists an adaptive algorithm that 
(1) Preprocesses $(U, b)$ in time $\wt O(nd^{\omega-2}(d + \log |{\cal P}| + \log \frac{1}{\delta})/\alpha^2)$;
   (2) Uses space $\wt O(d(d + \log |{\cal P}| + \log \frac{1}{\delta})/\alpha^2)$; 
   (3) For all $t\in [T]$, it updates $(U_{t-1}, b_{t-1})$ to $(U_t, b_t)$ in time $\wt O((d + \log |{\cal P}| + \log \frac{1}{\delta})/\alpha^2)$;
    (4) For all $t\in [T]$, it returns a $(1+\alpha)$-approximate solution $x_t$ to the regression problem using $(U_t, b_t)$ in time $\wt O({d^{\omega-1}(d + \log |{\cal P}| + \log \frac{1}{\delta})}/{\alpha^2})$, with probability at least $1-\delta$.
\end{theorem}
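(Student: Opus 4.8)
The plan is to combine a single oblivious sketch-and-solve regression solver with the bounded-computation-path technique of~\cite{bjwy20}. The key point is that a Gaussian sketch with $\wt O((d+\log\tfrac1\delta)/\alpha^2)$ rows is a good subspace embedding for any \emph{fixed} $(d{+}1)$-dimensional column space with probability $1-\delta$, but cannot be good for \emph{all} of them simultaneously with fewer than $\Omega(n)$ rows --- which is exactly why an adaptive adversary is dangerous here. The remedy is to observe that, because the adversary reacts only to the algorithm's (rounded) outputs, only a bounded number $\le T|\mathcal{P}|$ of column spaces can actually arise, so a union bound over them --- paying an additive $\log|\mathcal{P}|=\wt O(dT\log(n\kappa))$ in the sketch dimension --- restores correctness while keeping the dependence on $\kappa$ logarithmic.

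First I would build the oblivious component. Draw a pseudorandom Gaussian sketch $S\in\R^{m\times n}$ with $m=\wt O((d+\log|\mathcal{P}|+\log\tfrac1\delta)/\alpha^2)$ rows, stored via a short seed so that any single column of $S$ can be regenerated in $\wt O(m)$ time (a standard derandomization of oblivious subspace embeddings). Maintain $SU_t$ and $Sb_t$ explicitly: preprocessing forms them via fast rectangular matrix multiplication in time $\wt O(nmd^{\omega-2})$ and $\wt O(md)$ space, a single-coordinate turnstile update costs $\wt O(m)$ (add a scaled column of $S$ to one column of $SU_t$, or a scalar to $Sb_t$), and a query forms and solves the $d\times d$ normal equations $(SU_t)^\top(SU_t)x=(SU_t)^\top Sb_t$ in time $\wt O(md^{\omega-1})$, outputting $x_t=\mathrm{round}\big((SU_t)^+ Sb_t\big)$ with coordinates rounded to a grid of spacing $1/\poly(n\kappa)$. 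For a \emph{fixed} state $(U_t,b_t)$, with probability $1-\delta'$ for $\delta'=\delta/(T|\mathcal{P}|)$ the matrix $S$ is a $(1\pm\alpha/3)$ subspace embedding for $\mathrm{colspan}([U_t\mid b_t])$; conditioned on this, the sketch-and-solve solution is a $(1+\alpha/2)$-approximation, and since $U_t$ has full column rank (finite $\kappa(U_t)$) we get $\|(SU_t)^+Sb_t\|_2\le\poly(n)\kappa$ under polynomially bounded bit complexity, so the rounded output remains a $(1+\alpha)$-approximation while taking one of only $(n\kappa)^{\Theta(d)}$ values, which gives $|\mathcal{P}|\le(n\kappa)^{\Theta(dT)}$.

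Next comes the adaptive upgrade. Assume WLOG the adversary is a deterministic function of the past \emph{rounded} outputs, so each output sequence $\mathbf{y}\in\mathcal{P}$ determines the entire update sequence $\mathbf{v}(\mathbf{y})$ it generates, hence the states $(U^{\mathbf{y}}_1,b^{\mathbf{y}}_1),\dots,(U^{\mathbf{y}}_T,b^{\mathbf{y}}_T)$. Call $S$ \emph{good for $\mathbf{y}$} if it is a $(1\pm\alpha/3)$ subspace embedding for $\mathrm{colspan}([U^{\mathbf{y}}_t\mid b^{\mathbf{y}}_t])$ for every $t$; by the previous step and a union bound over $t$, $\Pr_S[\text{good for }\mathbf{y}]\ge 1-T\delta'$, and a further union bound over $\mathbf{y}\in\mathcal{P}$ shows $S$ is good for \emph{every} $\mathbf{y}\in\mathcal{P}$ with probability $\ge 1-T|\mathcal{P}|\delta'=1-\delta$. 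Condition on this event. In the real interaction let $\mathbf{y}^\star$ be the (random) sequence of rounded outputs produced; since the adversary reacts only to rounded outputs, the realized update sequence is exactly $\mathbf{v}(\mathbf{y}^\star)$ and the realized states are $(U^{\mathbf{y}^\star}_t,b^{\mathbf{y}^\star}_t)$ --- there is no circularity, because running the (deterministic-given-$S$) algorithm on the \emph{fixed} sequence $\mathbf{v}(\mathbf{y}^\star)$ reproduces precisely the outputs $\mathbf{y}^\star$. As $S$ is good for $\mathbf{y}^\star$, every $x_t$ is a $(1+\alpha)$-approximate solution for $(U_t,b_t)$, and substituting $\delta'$ into $m$ (absorbing $\log T$) yields the claimed preprocessing, space, update, and query bounds from the cost accounting above.

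I expect the main obstacle to be the rounding bookkeeping: the grid must be fine enough that rounding preserves the $(1+\alpha)$ guarantee --- which requires controlling $\|(SU_t)^+Sb_t\|_2$ and the sensitivity of $\|U_tx-b_t\|_2$ to perturbations of $x$ via $\kappa(U_t)\le\kappa$ and polynomially bounded bit complexity (with degenerate/consistent instances handled by the usual tiny additive-error relaxation) --- yet coarse enough to keep $|\mathcal{P}|\le(n\kappa)^{\Theta(dT)}$; and one must ensure the adversary genuinely only ever sees the rounded outputs so that the computation-path counting is valid. A secondary, routine point is derandomizing $S$ so that it admits a short seed and $\wt O(m)$ column access without inflating the sketch dimension.
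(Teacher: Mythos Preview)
Your proposal is correct and matches the paper's approach essentially line for line: a single pseudorandom Gaussian sketch of dimension $\wt O((d+\log|\mathcal{P}|+\log\tfrac1\delta)/\alpha^2)$ stored via a short seed, maintained under turnstile updates, with queries answered by sketch-and-solve plus rounding, and correctness argued by reducing to a deterministic adversary and union-bounding over the $|\mathcal{P}|$ possible output sequences. The paper's version is slightly less explicit about the grid spacing and the $\kappa$-dependent bound on $\|(SU_t)^+Sb_t\|_2$ that justifies it, so your rounding bookkeeping is if anything more detailed than what the paper provides.
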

\vspace{-0.5mm}
Compared to Theorem~\ref{thm:reg_well_conditioned}, Theorem~\ref{thm:log_kappa} offers a better dependence on the condition number $\kappa$ at the expense of a linear dependence on the length of update sequence $T$. While one might be tempted to simply generate an independent sketch for each update, it is worth noting that $|{\cal P}|$ could be much smaller than $(n\kappa)^{\Theta(dT)}$ whenever the updates and the solutions are known to be relatively stable. As a concrete example, when only one entry changes in between queries, then the number of computation paths is only $(nd)^T$, which may be much less than $(n\kappa)^{\Theta(dT)}$. In such scenario, it requires smaller number of sketches with improved space, preprocessing and update time. This is similar to past work where the complexity depends on a stability parameter~\cite{hkm+22}.



\section{Related Work}

\noindent\textbf{Differential Privacy.} Differential privacy is a central concept in data privacy, introduced in~\cite{dmns06}. The main idea of differential privacy is that when the inputs to the algorithm are close to each other, it would be almost impossible to differentiate the outputs. Since its introduction, differential privacy has seen rich applications in general machine learning~\cite{cm08,wm10,je19,tf20}, deep neural network~\cite{acg+16,bps19}, computer vision~\cite{zycw20,lwaff21,tzxl19}, natural language processing~\cite{ydw+21,wk18}, federated learning~\cite{syy+22,swyz23} and large language model~\cite{ynb+22,gsy23,lssz24_dp,gls+25,ncmw25}. Designing data structures with differential privacy guarantees is crucial, as it automatically ensures the privacy of any downstream tasks~\cite{cem+22,dll23,lnv23,cgk+23,aimn23,lnv24}. It is of particular interest to design differentially private data structure in the \emph{function release model}, where the quality of the output won't degrade as the data structure processes more queries~\cite{hrw13,hr14,wjf+16,ar17,cs21,wnm23,blm+24,lhr+24,hll+24,kls+25,lls+25_je}.

\vspace{2mm}
\noindent\textbf{Adaptive Data Structure.} 
In recent years, data structures have been integrated into iterative processes to speed up the algorithm. This has been the foundation for various recent breakthroughs in fast convex optimization~\cite{cls19,lsz19,blss20,bgjllps22,jswz21,qszz23}. These data structures possess the ability to answer adaptive queries, that could depend on previous outputs from the data structure, with high success probability. For streaming problems, the adversary could also be adaptive, in the sense that it would feed the streaming algorithm with updates, after observing the prior decisions of the algorithm~\cite{bjwy20,wz21,abj+22,fw23,wz24,fjw24,glw+24,gls+25}. In this work, we focus in particular on the adaptive data structures based on differential privacy~\cite{hkm+22,bkm+21,syyz23_dp,csw+23}.
\section{Technical Overview}
We divide the technical overview into two parts. For adaptive ANN, we demonstrate a novel framework based on differentially private selection over a sparse vector. For adaptive regression, we show how to upgrade the private median framework of~\cite{bkm+21} to output the solution vector, and how to obtain utility guarantees through a novel use of  $\ell_\infty$ guarantees provided by the sketch-and-solve framework \cite{psw17}. 

\subsection{Adaptive ANN via Differentially Private Selection}

\paragraph{Existing Results.} Before providing an overview of our techniques, we start by examining existing results for adaptive approximate nearest neighbor search data structures. The first candidates are of course deterministic data structures. Deterministic approximate nearest neighbor data structures have been a central topic of study since the 1970s \cite{dbc+08,clrs22}. 
However, without any structural assumptions on the query or dataset, these data structures suffer from the curse of dimensionality, i.e., their preprocessing time and space usage scale exponentially in $d$, making them infeasible for slightly large dimensions. If one is willing to make strong structural assumptions on the dataset, e.g., given any query point $v$, the number of points in $B(v, 2r)$ only grows by a constant factor compared to the number of points in $B(v, r)$ \cite{c97,kr02,bkl06,kl04_soda}, 
then it is possible to design a data structure with polynomial preprocessing time and space (note that the dependence on the growth constant or doubling dimension is large, but still polynomial), and logarithmic query time. However, these assumptions are strong, as they greatly limit the potential geometric structure of the dataset.

The celebrated work of Indyk and Motwani \cite{im98} 
shows that if one relaxes the problem to allow for answering $c$-approximate near neighbor queries instead, then one can achieve (slightly) super-linear preprocessing time and space, and sublinear query time. These data structures are inherently random, as they rely on partitioning the space using random directions. As this randomness is determined during the preprocessing phase, these data structures are not robust against an adaptive adversary. In fact, for Hamming approximate near neighbor search,~\cite{kms24} provides an efficient attack that can always force the data structure to output a false negative if there exists at least one point that is isolated from other points.

Since the issue of false negatives is most prevalent for Monte Carlo data structures, one might consider to use a class of \emph{Las Vegas} ANN data structures whose running times are random variables, but are guaranteed to output a correct answer with no false negatives~\cite{kor98,p16,a17,w22}. Unfortunately, the space and runtime analysis of these data structures are performed based on the assumption that the query sequence is \emph{oblivious}. An adaptive adversary could design a sequence of queries with a much higher average response time than an oblivious one.

\noindent{\bf Reduction To Differentially Private Selection.} We start by introducing the differentially private selection problem. Given $n$ categories, a collection of $s$ binary vectors over $\{0,1\}^n$ denoted by $b_{(1)},\ldots,b_{(s)}$, the goal is to find the category $j^*=\arg\max_{j\in [n]} \sum_{i=1}^s b_{(i),j}$, i.e., the most common category among all vectors. 
This can be achieved via the following mechanism: 1).\ Compute the overall count vector $B=\sum_{i=1}^s b_{(i)}$;
2).\ Add independent Laplace noise ${\rm Lap}(1/\epsilon)$ to each of the counts; 3).\ Report the index with the largest noisy count. 
The privacy parameter $\epsilon$ does not scale with either the number $n$ of categories or the number $s$ of vectors, as it only outputs a single index.

We now show how to frame approximate near neighbor search as a differentially private selection problem. We create a category for each point $u\in U$, and for each data structure, and we ask it to output \emph{all} the near neighbors it finds instead of a single one of them. This creates an indicator vector for the points: if the data structure finds $u_i$, then the corresponding vector has its $i$-th entry equal to $1$.
We can then apply the differentially private selection procedure for these indicator vectors, and output the point with maximum noisy count. To see this mechanism indeed protects the privacy of the internal randomness of the data structures, note that fixing the dataset $U$ and the adaptive query point $v$, the indicator vector is determined by the random strings of these data structures. Thus, the mechanism is $(\epsilon,0)$-DP, and we can apply the advanced composition theorem of differential privacy to reduce the number of data structures from $T$ to $\wt O(\sqrt T)$. 

This is, however, not sufficient to provide a utility guarantee, as the Laplace noise could have large magnitude, thus making all counts too noisy. While it is fortunate that we can set $\epsilon=O(1)$, and with high probability, the magnitude of the noise is $\Theta(\log n)$, this might still be too large to be useful. To see how this issue can be resolved, let us consider the case that the query point $v$ only has one approximate near neighbor, i.e., there exists only one $u$ for which $\|v-u\|\leq r$ and $\|v-u\|\leq cr$. In this case, the data structure can only output $u$ (conditioned on the data structure succeeding). If we were to sample $\omega(\log n)$ data structures and compute the count vector, we can guarantee that the entry corresponding to the near-neighbor has larger magnitude than the noise. Thus, as long as a constant fraction of the data structures succeed, we can ensure we output the correct point with high probability.

Note that our above argument assumes that a constant fraction of data structures succeed, which \emph{is not guaranteed when facing an adaptive adversary}! Fortunately, we can circumvent this problem by proving the algorithm is differentially private first; subsequently, we can apply the generalization property of DP to ensure that indeed, a constant fraction of data structures succeed. Note the two-stage nature of this argument: we can only argue about the utility if the privacy is preserved. There are two main issues left: (1) the assumption that the query has only one near-neighbor is too restrictive, and (2) the differentially private selection procedure takes $O(n)$ time to respond to each query, making the data structure very inefficient. For the first issue, we note that we can extend this to the setting when $v$ has only $s$ approximate near-neighbors. In this setting, we can sample $\omega(s\cdot \log n)$ data structures for each query. By the pigeonhole principle, there must exist at least one point whose count has magnitude larger than the noise, and we obtain the desired utility guarantee. Since the LSH data structures only output $n^\rho$ points for each query, we could pick $s=O(n^\rho)$, i.e., allow the query to have $O(n^\rho)$ near-neighbors. We can alternatively translate this assumption into a structural property on the dataset: as long as the ball centered at each point with radius $cr$ intersects at most $s$ other balls, then we are guaranteed that the query has at most $s$ near-neighbors. We note that this assumption is automatically satisfied for Hamming nearest-neighbor search and for the setting of~\cite{kms24}.

The second issue is algorithmic: if we na\"{i}vely implement the differentially private selection mechanism for each query, we will have to generate noise for each count which inevitably takes $\Omega(n)$ time. On the other hand, the count vector $B$ is $s$-sparse, and these non-zero entries have  magnitude larger than the noise. This problem has been studied before in the context of publishing a private database for sparse data~\cite{cpst12}; however, existing solutions either require modifying the problem definition so that the privacy becomes challenging to prove, or converting a Monte Carlo algorithm into a Las Vegas one, with only expected runtime guarantees. We develop a novel algorithm to resolve these issues termed as the \emph{sparse argmax mechanism}:  given an $s$-sparse vector, 
(1) Adding $s$ exponential noises ${\rm Exp}(1/\epsilon)$ to the support of the sparse vector; (2) Generating the $X$ from the $n$-th order statistics distribution of ${\rm Exp}(1/\epsilon)$; (3) Flip a biased coin with head probability $\frac{s}{n}$, if head, generate $s-1$ i.i.d. exponential noises until none of them exceed $X$, add them including $X$ to the support, and output the maximum entry index; (4) If tail, generate $s$ i.i.d. exponential noises until none of them exceed $X$, add these noises to the support, assign $X$ a random index in the $n-s$ non-support, output the maximum index associated with the noisy entries and $X$. We prove that with high probability, generating these noises can be done $O(s\log n)$ time, and the output distribution with sparse noises is the same as the generating $n$ i.i.d. exponential noises.

\subsection{Adaptive Regression via Differentially Private Median and \texorpdfstring{$\ell_\infty$}{} Guarantee}

\paragraph{Existing Results.} We first recall the standard setup of solving the over-constrained $\ell_2$ regression problem. Given a design matrix $U\in \R^{n\times d}$ with $n\gg d$ and a response vector $b$, the goal is to compute $x^*:=\arg\min_{x\in \R^d} \|Ux-b\|_2^2$. In the static setting, $x^*$ can be computed via the normal equations: $x^*=(U^\top U)^\dagger Ub$, but this is usually too expensive to be directly solved. The sketch-and-solve paradigm (see, e.g., \cite{w14} for a survey)  provides a wide array of algorithmic tools to speed up this process. In particular, one can pick a random matrix $S\in \R^{r\times n}$ from a certain structured family of random matrices, so that $r$ is a small polynomial in $d$, and $S$ can be quickly applied to $U$. One can then solve the sketched $\ell_2$ regression problem $\min_{x\in \R^d} \|SUx-Sb\|_2^2$, for which the optimal solution is a good approximation to $x^*$. The $\ell_2$ regression problem has also been extensively in the streaming~\cite{cw09,bhm+21} and dynamic~\cite{jpw23,csw+23} models, where either the design matrix $U$ or the response vector $b$ can be updated, and one has to produce a high quality approximate solution after each update. These works are either not robust to adaptive adversaries~\cite{cw09}, only support inserting or removing entire rows at once~\cite{bhm+21,jpw23}, or can only return the cost instead of the solution vector~\cite{csw+23}. In the realistic settings, the design matrix needs to tolerate perturbations to its entries due to the presence of noise or updated data. We consider the most general model where $U$ can be updated by perturbation $v_t\in \R^{n\times d}$ that can be adaptively chosen.

\noindent{\bf Coordinate-wise Private Median and the $\ell_\infty$ Guarantee.} One natural idea is to extend the private median framework of~\cite{bkm+21} to outputting an approximation to the solution vector. Our algorithm follows the generic template: prepare $k$ copies of sketching matrices $S_1,\ldots,S_k$ and preprocess $(U, b)$ as $(S_i U, S_i b)$ for all $i\in [k]$. During an adaptive update, we simply update the corresponding sketches of the design matrix and of the response vector. At query time, we sample $s=\wt O(1)$ of our sketches and solve these sketched regression problems. Let $x_{(1)},\ldots,x_{(s)}$ be the returned solution vectors. We next need to perform a private aggregation on these vectors to craft our output.

The~\cite{bkm+21} approach for private aggregation is to compute the private median of these numbers. We first consider a natural extension: for each $i\in [d]$, we compute $g_i=\textsc{PMedian}((x_{(1)})_i,\ldots,$ $(x_{(s)})_i)$ and output $g=(g_1,\ldots,g_d)$. The privacy of this approach is readily established: since each entry of $g$ is private, we can conclude the final privacy guarantee by using the advanced composition theorem. This comes at a price of requiring $\sqrt{Td}$ sketches instead of the $\sqrt{T}$ data structures of~\cite{bkm+21}, but it still offers an improvement as long as $d\ll T$. The main challenge now lies in proving the \emph{utility} of our approach, where we need to show that $g$ is in fact a good enough solution to the regression problem. 

To quantify $\|Ug-b\|_2$, note that $\|Ug-b\|_2\leq \|Ux^*-b\|_2+\|U(x^*-g)\|_2$, where the second term can be bounded by $\|U(x^*-g)\|_2\leq \sigma_{\max}(U)\cdot \|x^*-g\|_2\leq \sigma_{\max}(U)\sqrt d \cdot\|x^*-g\|_\infty$. In other words, if we can get a good handle on $\|x^*-g\|_\infty$, then we can hopefully obtain a useful bound on the error. Since $g$ is the coordinate-wise private median of $x_{(1)},\ldots,x_{(s)}$, which are solutions of sketched $\ell_2$ regression problems, the standard sketching error guarantee only ensures that the \emph{forward error} is small, i.e., $\|Ux_{(i)}-b\|_2\leq (1+\alpha) \|Ux^*-b\|_2$ for any $i\in [s]$. For a \emph{backward error} type guarantee, i.e., a bound on $\|x_{(i)}-x^*\|_2$, one could convert directly from the forward error guarantee. If $U$ is reasonably well-conditioned, then closeness in forward error implies closeness in backward error. 

However, for the private median guarantee, it is important that each {\it entry} of $x_{(i)} - x^*$ is small, rather than just the bound that $\|x_{(i)}-x^*\|_2$ is small. While the ideal scenario would be $\|x_{(i)}-x^*\|_\infty \approx \frac{1}{\sqrt d} \|x_{(i)}-x^*\|_2$, this is generally not true for most sketching matrices. Fortunately, for sketching matrices such as the Subsampled Randomized Hadamard Transform (SRHT), it has been shown that the sketched solution in fact has a much stronger $\ell_\infty$ guarantee~\cite{psw17,syyz23}: $\|x_{(i)}-x^*\|_\infty \leq \frac{\alpha}{\sqrt d}\cdot \|Ux^*-b\|_2\cdot \frac{1}{\sigma_{\min}(U)}$. By properly choosing the parameters for the the private median estimator, we can show that with high probability this also holds for $g$. Hence, we have $\|U(x^*-g)\|_2 \leq  \sigma_{\max}(U)\sqrt{d}\cdot \|x^*-g\|_\infty 
    \leq \alpha \kappa(U) \cdot\|Ux^*-b\|_2$.
To offset the blowup in condition number, we scale down $\alpha$ by a factor of $\kappa$, and thus establish the utility of the coordinate-wise private median mechanism. We find the private median to be a surprising application of sketching with the $\ell_{\infty}$ guarantee, which is a less studied guarantee in the sketching literature.

To further speed up the preprocessing and update time, we compose the SRHT sketch with Count Sketch matrices~\cite{ccf02} so that both operations can be realized in input sparsity time. In addition, these sketches and the sketched design matrices can be stored in $\wt O(d^2\kappa^2/\alpha^2)$ words of space. Since we use $\wt O(\sqrt{Td})$ independent sketches, the space usage is $O(\sqrt{T}d^{2.5}\kappa^2/\alpha^2)$.

\section{Applications}
\label{sec:app}
In this section, we utilize our adaptive ANN data structures to speed up a range of optimization processes and other data structures. Let $\Tmat(a,b,c)$ to be the time complexity of multiplying an $a\times b$ matrix with a $b\times c$ matrix.

\subsection{Online Weighted Matching}
\label{sec:intro:matching}
Consider the following online weighted matching problem on bipartite graphs: we are given a set of left vertices $U\subseteq \R^d$ and we will receive a sequence of online right vertices $V:=\{v_1,\ldots,v_n\}\subseteq \R^d$. The edge weight between a left vertex $u\in U$ and a right vertex $v\in V$ is defined to be $\frac{1}{\|u-v\|}$ for some norm $\|\cdot \|$. Our goal is to design an algorithm that given a vertex $v_t$, make an immediate decision to match with a point $u_t\in U$ with edge weight $\frac{1}{\|u_t-v_t\|}$. The goal is to compute an online matching that maximizes the total weight $\sum_{t=1}^T \frac{1}{\|u_t-v_t\|}$. The online weighted matching problem has been widely studied in the edge-arrival model~\cite{fhtz22}, but in many practical machine learning applications, one usually encounters the \emph{vertex-arrival} model~\cite{kvv90}. For example, the left vertices of the graph represent Uber/Lyft drivers, and the right vertices represent customers. The goal is to match each incoming customer with a driver that is closest in geographical distance. Another common application is movie recommendations on streaming platforms such as Netflix. Here the left vertices are movies, and the right vertices are viewers, and the goal is to find each viewer his/her favorite movie, defined by the distance between the feature embeddings of movies and viewers~\cite{kbv09,sk09,slh14}.
\footnote{We note that in movie recommendations, one movie could be matched with multiple viewers. This scenario has also been studied and our results would also apply~\cite{fhtz22}.}

We consider the vertex-arrival online weighted matching problem when the right vertices are chosen by an \emph{adaptive adversary}, i.e., the next vertex $v_{t+1}$ could depend on previous matching results $\{(u_i,v_i) \}_{i=1}^t$. This setting is particularly common for practical applications such as Ride Apps, since if a certain app raises its price due to high demand, the customers might choose to use another app, or move to a new location with smaller demand. A good approximation scheme for this problem is the \emph{greedy matching approach}, i.e., each time we observe $v_t$, we simply choose $u_t$ to be the point which maximizes $\frac{1}{\|u_t-v_t\|}$~\cite{kvv90}. One could reduce this problem to a nearest neighbor search problem against an adaptive adversary. There is a caveat though: if we restrict it to be a bipartite matching, then once a vertex $u$ is matched, it should not be matched again. This means our adaptive LSH should delete the point that has been outputted by the data structure after each query. We show that it is possible to further augment our framework for the data structure to be decremental, and the deletion can be performed efficiently using fast rectangular matrix multiplication~\cite{dwz23}. Before proceeding, we let $\theta(a,b)$ for $a, b>0$ be the value for which $\Tmat(a,b,\theta(a,b))=(ab)^{1+o(1)}$. 

\begin{theorem}[Online Weighted Matching] \label{thm:matching}
Let $U\subseteq \R^d$ be an $n$-point dataset satisfying Assumption~\ref{assumption:ann_sparse} with parameter $s$. Given a possibly adaptive sequence $\{v_1,\ldots,v_n\}\subseteq \R^d$, we can design an adaptive data structure which
(1) Preprocesses $U$ in time $\wt O(\Tmat(\sqrt T\cdot s, d, n)+\sqrt T\cdot s\cdot n^{1+\rho})$;
    (2) Uses space ${\cal S}_{\rm space}=\wt O(\sqrt T\cdot s\cdot n^{1+\rho}+nd)$;
    (3) Given a point $v_t$, it returns a point in $U\setminus \{u_1,\ldots,u_{t-1}\}$ that is a $1.1c$-approximate near neighbor of $v_t$, in time $\wt O(s\cdot (d+n^\rho))$. This step succeeds with probability at least $1-\delta$;
    (4) Deletes a point $u\in U$ from the data structure in amortized time $\wt O((\sqrt{T}\cdot s\cdot d)^{1+o(1)}/\theta(\sqrt{T}\cdot s, d)+\sqrt{T}\cdot s\cdot n^\rho)$.
\end{theorem}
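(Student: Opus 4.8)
The plan is to use the adaptive ANN data structure of Theorem~\ref{thm:main_lsh} essentially as a black box and to add a decremental (delete-a-point) operation that the greedy matching process needs. Concretely, I would keep $\wt O(\sqrt T\cdot s)$ independent copies of an LSH for the given norm --- each copy realizing its $n^\rho$ hash tables from only polylogarithmically many base hash functions via the standard tensoring reduction --- together with $\wt O(1)$ parallel instances at geometrically spaced radii $(1.1)^j$, so that a point reported at a given radius is a $1.1c$-approximate near neighbor of the query; on top of these copies I run the sparse-argmax / \rpm\ selection of Theorem~\ref{thm:main_general}. The algorithm is then greedy: on arrival of $v_t$, binary search over the radii and use the DP selection to return some surviving point $u_t\in U\setminus\{u_1,\dots,u_{t-1}\}$ within $1.1c$ times the distance from $v_t$ to its nearest surviving point, output the edge $(u_t,v_t)$, and delete $u_t$ from all copies.

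For correctness (item (3)) I would argue as in Theorem~\ref{thm:main_general}: deletions only \emph{remove} candidates from hash buckets, so they can never produce an invalid answer, and $U\setminus\{u_1,\dots,u_{t-1}\}\subseteq U$ still satisfies Assumption~\ref{assumption:ann_sparse} with the same parameter $s$, so the same privacy budget and composition over the $T$ queries apply and the privacy $\Rightarrow$ generalization argument still certifies that a constant fraction of the sampled copies succeed --- hence whenever a surviving $r$-near neighbor exists, one is returned with probability $1-\delta$. The one point needing care is that the deleted point $u_t$ is itself an output of the data structure, hence a function of its internal coins, so the \emph{sequence of datasets} is adaptive; but this is exactly the situation the DP framework is built for --- privacy bounds the sensitivity of the whole transcript (the outputs, and therefore the induced dataset sequence) to the internal randomness --- so no new idea is required, only careful bookkeeping.

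For the running times I split every cost into a $d$-dependent ``projection'' part and a $d$-free ``hashing'' part. \emph{Preprocessing:} projecting all $n$ points onto the $\wt O(\sqrt T s)$ base hash directions is one matrix product $\Tmat(\sqrt T s,d,n)$; building the $n^\rho$ tables per copy from the projected coordinates is $d$-free and costs $\wt O(\sqrt T s\cdot n^{1+\rho})$; storing the tables and $U$ gives the space bound. \emph{Query:} project $v_t$ onto the $\wt O(s)$ base directions of the $\wt O(1)$ sampled copies ($\wt O(sd)$), form and look up their $n^\rho$ buckets ($\wt O(sn^\rho)$), and run the sparse-argmax mechanism on the resulting sparse candidate-count vector (with points pending deletion zeroed out) in $\wt O(s)$ time --- total $\wt O(s(d+n^\rho))$. \emph{Deletion:} removing $u_t$ requires its bucket in \emph{every} copy, i.e.\ its projection onto all $\wt O(\sqrt T s)$ base directions; I defer deletions and, once $\theta(\sqrt T s,d)$ of them accumulate, compute all their projections in one rectangular product of a $\wt O(\sqrt T s)\times d$ matrix with a $d\times\theta(\sqrt T s,d)$ matrix, which costs $(\sqrt T s\cdot d)^{1+o(1)}$ by the definition of $\theta(\cdot,\cdot)$ and the bounds of~\cite{dwz23}; then each deleted point is removed from its $\wt O(n^\rho)$ buckets in each of the $\wt O(\sqrt T s)$ copies at cost $\wt O(\sqrt T s\cdot n^\rho)$. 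This gives amortized deletion time $\wt O((\sqrt T s\cdot d)^{1+o(1)}/\theta(\sqrt T s,d)+\sqrt T s\cdot n^\rho)$, and filtering pending deletions at query time keeps correctness intact.

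The main obstacle I anticipate is the deletion bound: one must verify that the \emph{only} $d$-dependent per-deletion work needed to preserve the \emph{adaptive} guarantee (not merely the oblivious one) is re-deriving the base-hash projections of the deleted points, and that these batch exactly into a single rectangular matrix product amenable to~\cite{dwz23} --- this is where the interaction between the DP-robustness argument and the lazy-rebuild trick is most delicate, including checking that a bounded backlog of pending deletions does not degrade the privacy accounting. By comparison, the correctness chain (privacy $\Rightarrow$ generalization $\Rightarrow$ utility under an adaptively shrinking dataset) is conceptually routine given Theorem~\ref{thm:main_general}.
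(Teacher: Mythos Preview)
Your proposal is correct and follows the paper's approach: wrap the DP-based adaptive LSH in a decremental interface and batch the $d$-dependent projection work over $\theta(\sqrt T s,d)$ deletions via fast rectangular matrix multiplication, leaving the $d$-free LSH bucket updates at $\wt O(\sqrt T s\cdot n^\rho)$ each.

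Two implementation points differ. First, the paper achieves the projection/hashing separation by prepending a Johnson--Lindenstrauss map $S_i:\R^d\to\R^{O(\log n)}$ to each of the $k=\wt O(\sqrt T s)$ copies and running LSH on the low-dimensional images; the batched deletion product is then literally the stacked JL matrix times the block of deleted points. Your ``tensoring'' of $n^\rho$ tables from polylogarithmically many base hash directions is aiming at the same separation but is not a standard reduction as stated---you would need to argue it, whereas the JL route is immediate. Second, for the pending-deletion backlog the paper does not merely filter the count vector: at query time it \emph{updates} the $\wt O(s)$ sampled copies with all pending deletions before querying them (its ``Case~1''), and then shows this extra cost, amortized to $\wt O(\theta s(d+n^\rho))$ per step, is dominated by the end-of-block batch cost (``Case~2''). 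Your filtering shortcut is simpler and avoids Case~1 entirely, but you need the tweak of skipping deleted points \emph{during} the LSH bucket scan (so they do not consume the $n^\rho$ examination budget) rather than zeroing them out of the count vector afterward---otherwise stale deleted points can crowd out the true near neighbor.
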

To get a better perspective on the deletion time, suppose $s=\poly\log n$. Also note that if $\sqrt T\geq d$, then we could simply use $d$ independent copies via a net argument, so we assume $\sqrt T\leq d$. Hence, $\theta(\sqrt T,d)\geq \theta(\sqrt T, \sqrt T)\geq T^{\alpha/2}$, where $\alpha\approx 0.32$ is the dual matrix multiplication exponent \cite{wxxz24,lg24}. The runtime can be further simplified to $(T^{1/2-\alpha/2}\cdot d)^{1+o(1)}+T^{1/2}\cdot n^\rho$. We compare this result to updating all $\wt O(\sqrt T)$ data structures, which takes a total of $T^{1/2}\cdot d+T^{1/2}\cdot n^\rho$ time. By utilizing fast rectangular matrix multiplication and batch updates, we improve the exponent on $\sqrt T$ for the first term.
\subsection{Terminal Embedding}
A terminal embedding concerns the following problem: given a metric space $(X, d_X)$ and a set of points $u_1,\ldots,u_n\in X$, the goal is to design an embedding to another metric space $(Y, d_Y)$ such that for any $q\in X$, $C\cdot d_X(u_i, q)\leq d_Y(u_i, q)\leq C\rho\cdot d_X(u_i, q)$ for all $i\in [n]$, where $C>0$ is a constant and $\rho\geq 1$ is the distortion factor. In contrast to metric embeddings such as the Johnson-Lindenstrauss lemma, a terminal embedding requires the distance to be preserved between a fixed set of terminals and \emph{all points} in the metric space. When both $X$ and $Y$ are the Euclidean metric, it has been shown that an embedding dimension of $O(\epsilon^{-2}\log n)$ is possible for $1+\epsilon$ distortion.~\cite{cn21} shows that it is possible to implement a data structure that answers queries in time $O(n^{1-\Theta(\epsilon^2)}+d)$ and space $O(dn^{1+o(1)})$. At the core of their algorithm is an adaptive ANN data structure. In particular, they create $\wt O(d)$ independent copies to deal with issues of adaptivity. If you know in advance that you only need to answer $T$ adaptive queries for $T\leq d$, then the $\wt O(d)$ copies via a net argument is sub-optimal. For a fair comparison, in the following we will assume~\cite{cn21} uses $T$ copies of data structures instead of $d$. We apply the adaptive LSH data structure we have designed to obtain an improvement in the space complexity of~\cite{cn21}'s data structure when $\sqrt T\cdot s\leq d$ and the query points $q$ satisfy Assumption~\ref{assumption:general}. Before proceeding, we need to introduce a few concepts. 

\begin{definition}
Given $U=\{u_1,\ldots,u_n\}\subseteq \R^d$ and $\epsilon>0$, we say a matrix $S\in \R^{k\times d}$ is a convex hull distortion for $U$ if for any $z\in {\rm conv}(T)$ where $T=\{\frac{u-v}{\|u-v\|}: u, v\in U \}$, we have $|\|Sz\|-\|z\||\leq \epsilon$.
\end{definition}

\begin{theorem}[Terminal Embedding]
\label{thm:terminal}
Let $U\subseteq \R^d$ be an $n$-point dataset and $V=\{v_1,\ldots,v_T\}\subseteq \R^d$ be a sequence of $T$ adaptive queries that satisfy Assumption~\ref{assumption:general} with parameter $s$. Let $\rho_1,\rho_2,\rho_3,\rho_4$ and $\rho_{\rm rep}>0$ be parameters. There exists a randomized algorithm that computes a data structure ${\cal D}$ and a linear map $S\in \R^{k\times d}$, such that 
(1) $S$ is a convex hull distortion for $U$;
    (2) Given any $v_t\in V$, ${\cal D}$ produces a vector $z_{v_t}\in \R^{k+1}$ such that with probability at least $1-1/\poly(n)$, $(1-\epsilon)\cdot \|v_t-u\| \leq \left\|z_{v_t}-\begin{bmatrix}
            Su \\
            0
        \end{bmatrix}\right\| \leq (1+\epsilon)\cdot \|v_t-u\|$ for all $u\in U$.
    (3) For any $v_t\in V$, the runtime of computing $z_{v_t}$ is $\wt O(s\cdot (d+n^{\rho_2}+n^{\rho_4}+n^{\rho_4+(1+\rho_3-\rho_4-\rho_{\rm rep})\rho_2}))$;
    (4) The space complexity of ${\cal D}$ is $O(\sqrt{T}\cdot s\cdot (n^{\rho_{\rm rep}+(1+\rho_1)}+n^{\rho_3+(1+\rho_1)}))$.
\end{theorem}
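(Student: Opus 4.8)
The plan is to take the sublinear-time terminal-embedding data structure of \cite{cn21} and replace its internal adaptive approximate near-neighbor (ANN) oracle --- which \cite{cn21} realizes with $\wt O(d)$ oblivious LSH copies via a net argument over $\R^d$ --- by our adaptive ANN data structure from Theorem~\ref{thm:main_lsh}, which needs only $\wt O(\sqrt T\cdot s)$ copies. Everything else in \cite{cn21} (the embedding formula, its correctness proof, and the layer-by-layer accounting of the LSH tables) is used as a black box, so the work splits into: (i) constructing the linear map $S$ and proving part~(1); (ii) checking that the internal ANN queries form a valid adaptive sequence satisfying the sparse-neighborhood hypothesis, so that Theorem~\ref{thm:main_lsh} applies; and (iii) propagating the complexity bounds of Theorem~\ref{thm:main_lsh} through \cite{cn21}'s recursion to obtain parts~(2)--(4).

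First I would take $S\in\R^{k\times d}$ with $k=\Theta(\epsilon^{-2}\log n)$ to be a fast Johnson--Lindenstrauss sketch (e.g.\ an SRHT composed with a Count-Sketch matrix, so that $SU$ can be formed in input-sparsity time). Part~(1) is then standard: the direction set $T=\{(u-v)/\|u-v\|:u,v\in U\}$ has at most $n^2$ members, so $\mathrm{conv}(T)$ has Gaussian width $O(\sqrt{\log n})$, and by Gordon's comparison inequality a sketch of dimension $\Theta(\epsilon^{-2}\log n)$ satisfies $\bigl|\,\|Sz\|-\|z\|\,\bigr|\le\epsilon$ simultaneously for all $z\in\mathrm{conv}(T)$ with probability $1-1/\poly(n)$; the same $S$ is also a JL embedding for the $O(n^2)$ pairwise terminal differences. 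Next I would instantiate \cite{cn21}'s query procedure: on input $v_t$ it obtains a constant-factor estimate of $\min_{u\in U}\|v_t-u\|$ and then, using a multi-level LSH scheme parameterized by $\rho_1,\rho_2,\rho_3,\rho_4,\rho_{\rm rep}$, homes in on a point $u^\star\in U$ within a $1+\epsilon$ factor of the nearest-neighbor distance; it then outputs the outer-bi-Lipschitz-extension vector $z_{v_t}\in\R^{k+1}$ assembled from $Su^\star$, the sketched offset $S(v_t-u^\star)$, and one extra coordinate that compensates for the norm distortion of $S$ along that offset. Crucially, every internal ANN call is answered by the wrapper of Theorem~\ref{thm:main_lsh} instead of $\wt O(d)$ oblivious copies. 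Part~(2) then follows exactly as in \cite{cn21}: the first-order optimality inequality $\langle v_t-u^\star,\,u-u^\star\rangle\le\tfrac12\|u-u^\star\|^2$ at an approximate nearest neighbor, combined with the convex-hull-distortion and JL guarantees on $S$, yields $(1-\epsilon)\|v_t-u\|\le\|z_{v_t}-(Su,0)\|\le(1+\epsilon)\|v_t-u\|$ for all $u\in U$ after rescaling $\epsilon$ by a constant, and a union bound over all $\poly\log n$ internal ANN calls and the $\wt O(\sqrt T\cdot s)$ copies keeps the failure probability $1/\poly(n)$. For parts~(3) and~(4) I would substitute the per-query cost $\wt O(s\cdot n^{\rho}d)$ and the per-copy storage of Theorem~\ref{thm:main_lsh} into \cite{cn21}'s level-by-level cost recursion: summing over the $O(1)$ levels gives the stated query time $\wt O(s\,(d+n^{\rho_2}+n^{\rho_4}+n^{\rho_4+(1+\rho_3-\rho_4-\rho_{\rm rep})\rho_2}))$, and multiplying the per-copy hash-table storage by $\wt O(\sqrt T\cdot s)$ gives space $O(\sqrt T\cdot s\,(n^{\rho_{\rm rep}+(1+\rho_1)}+n^{\rho_3+(1+\rho_1)}))$.

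The hard part will be the adaptivity bookkeeping around the substituted ANN oracle. The internal ANN queries are adaptively chosen (continuous) functions of $v_t$ and of the data structure's own prior outputs and internal coins, so they do form a valid randomized adaptive query sequence for Theorem~\ref{thm:main_lsh}; what needs care is (a) that the two-stage ``privacy, then generalization'' argument underlying Theorem~\ref{thm:main_lsh} still composes cleanly when the adversary is this composed object --- in particular, that conditioning on the ANN succeeding on every one of the $\poly\log n$ calls does not spoil \cite{cn21}'s own high-probability guarantee --- and (b) that the sparse-neighborhood hypothesis transfers: because the relevant internal ANN calls operate at the distance scale of the (approximate) nearest neighbor of $v_t$, Assumption~\ref{assumption:general} on $V$ (at that scale, up to the constant-factor slack from the coarse estimate) leaves at most $s$ candidate answers, so each internal instance satisfies the hypothesis of Theorem~\ref{thm:main_lsh} with the same parameter $s$. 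Once these two points are settled, the embedding-correctness and complexity claims are routine consequences of \cite{cn21}'s analysis together with Theorem~\ref{thm:main_lsh}.
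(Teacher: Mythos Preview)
Your proposal is correct and matches the paper's approach: the paper presents Theorem~\ref{thm:terminal} as a direct application obtained by substituting the adaptive ANN data structure of Theorem~\ref{thm:main_lsh} (with $\wt O(\sqrt T\cdot s)$ copies) for the $\wt O(d)$-copy net-based ANN oracle inside \cite{cn21}'s terminal-embedding construction, and then reading off the modified query time and space. In fact the paper does not give a detailed proof --- it simply states the result and the interpretation that the query time picks up a factor of $s$ while the space drops from $T$ to $\sqrt T\cdot s$ --- so your write-up, including the adaptivity bookkeeping and the check that Assumption~\ref{assumption:general} transfers to the internal ANN calls, is more explicit than what the paper itself provides.
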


To interpret our result, we note that the query time is increased by a factor of $s$, as in all previous results. The space complexity is reduced from $T$ to $\sqrt T\cdot s$. If $s=\poly\log n$, then the query time is only increased by a polylogarithmic factor, and the space complexity is improved by a factor of $\sqrt T$.

\ifdefined\isarxiv

\else
\section*{Acknowledgement}
Ying Feng was supported by an MIT Akamai Presidential Fellowship. 
David Woodruff would like to acknowledge support from a Simons Investigator Award and Office of Naval Research (ONR) award number  N000142112647. Lichen Zhang was supported in part by NSF CCF-1955217 and NSF DMS-2022448. 
\section*{Impact Statement}
This paper develops novel adaptive data structures to speed up machine learning tasks. It potentially would lead to efficiency improvement and reduction in carbon emission. We do not foresee any societal consequences worth highlighting.
\fi

\ifdefined\isarxiv
\else
\bibliography{ref}
\bibliographystyle{icml2025}

\fi

\ifdefined\isarxiv
\else
\newpage
\onecolumn
\appendix
\section*{Appendix}
\fi
\section{Preliminaries}

We use $\wt O(\cdot)$ to suppress $\poly\log$ factors in $n, d, 1/\delta$. We use $\E[\cdot]$ and $\Pr[\cdot]$ to denote expectation and probability. We use $\Tmat(n,m,d)$ to denote the time to multiply an $n \times m$ matrix with an $m \times d$ matrix. For a positive integer $n$, we use $[n]$ to denote the set $[n]:=\{1,2,\cdots, n\}$. For a vector $x$, we use $x^\top$ to denote the transpose of $x$. For a vector $x \in \R^n$, we use $\| x \|_2$ to denote its $\ell_2$ norm, i.e., $\| x \|_2: = (\sum_{i=1}^n x_i^2)^{1/2}$. For two real numbers $a, b>0$, we will use $a=(1\pm\alpha) b$ to denote $a\in [(1-\alpha)b, (1+\alpha)b]$ for $\alpha>0$. 

For a point $v\in \R^d$ and $r>0$, we use $B_{\|\cdot\|}(v, r)$ to denote the ball centered at $v$ of radius $r$ in some norm. When the norm $\|\cdot\|$ is clear from the context, we abbreviate this with $B(v,r)$. We use $\mathbb{S}^{d-1}$ to denote the unit sphere of dimension $d$. Let $(\Omega, {\cal F})$ be a measurable space and let probability measures $P, Q$ be defined on $(\Omega, {\cal F})$. The total variation (TV) distance between $P$ and $Q$ is defined to be $\d_{\TV}(P,Q)=\sup_{A\in {\cal F}}|P(A)-Q(A)|$.

We will be using exponential random variables extensively, and we recall its definition here.

\begin{definition}[Exponential Distribution]
The \emph{exponential distribution with parameter $\lambda$}, denoted by ${\rm Exp}(\lambda)$, is the distribution with PDF
\begin{align*}
    f(x; \lambda) = & ~ \begin{cases}
        \lambda e^{-\lambda x}, & \text{if $\lambda\geq 0$}, \\
        0, & \text{otherwise}.
    \end{cases}
\end{align*}
\end{definition}

We will also work with order statistics.

\begin{definition}[Order Statistics]
Let $X_1,\ldots,X_n$ be $n$ i.i.d. random samples, the \emph{$k$-th order statistics}, denoted by $X_{(k)}$, is the random variable associated with the $k$-th smallest value over the samples.
\end{definition}

For extreme order statistics such as $X_{(1)}$ and $X_{(n)}$, they could be generated efficiently via inverse CDF sampling.

\begin{fact}
\label{fact:generate_order}
Let $n\geq 1$ be a positive integer, suppose we can generate $U\sim {\rm Unif}(0, 1)$ in $O(1)$ time, then for any distribution with a closed-form CDF distribution, $X_{(1)}$ and $X_{(n)}$ can be generated in $O(1)$ time. 
\end{fact}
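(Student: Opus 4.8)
The plan is to use the standard inverse-CDF (inverse transform) sampling method, exploiting the fact that the extreme order statistics have particularly simple closed-form CDFs. Let $F$ denote the common CDF of the i.i.d.\ samples $X_1,\ldots,X_n$, and suppose $F$ is available in closed form together with its inverse $F^{-1}$ (the generalized inverse, if $F$ is not strictly increasing). The key observation is that, because the samples are independent, the events $\{X_{(n)}\le x\}$ and $\{X_{(1)}>x\}$ factor:
\begin{align*}
    \Pr[X_{(n)}\le x] = \prod_{i=1}^n \Pr[X_i\le x] = F(x)^n,
    \qquad
    \Pr[X_{(1)}> x] = \prod_{i=1}^n \Pr[X_i> x] = (1-F(x))^n.
\end{align*}
Thus $X_{(n)}$ has CDF $G_n(x) = F(x)^n$ and $X_{(1)}$ has CDF $G_1(x) = 1-(1-F(x))^n$, both of which are explicit closed-form expressions in $F$.

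First I would recall the correctness of inverse-transform sampling: if $U\sim{\rm Unif}(0,1)$ and $G$ is a CDF, then $G^{-1}(U)$ has CDF $G$, where $G^{-1}(u) := \inf\{x : G(x)\ge u\}$. Applying this to $G_n$: since $G_n(x) = F(x)^n$, we have $G_n^{-1}(u) = F^{-1}(u^{1/n})$, so $X_{(n)} \stackrel{d}{=} F^{-1}(U^{1/n})$. Similarly, since $G_1(x) = 1-(1-F(x))^n$, we get $G_1^{-1}(u) = F^{-1}\!\left(1-(1-u)^{1/n}\right)$, so $X_{(1)} \stackrel{d}{=} F^{-1}\!\left(1-(1-U)^{1/n}\right)$ (and one may replace $1-U$ by $U$ in distribution). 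Each of these requires one sample $U\sim{\rm Unif}(0,1)$, one exponentiation by $1/n$, and one evaluation of $F^{-1}$, all $O(1)$ operations under the stated assumption that closed-form evaluation and basic arithmetic (including real powers) cost $O(1)$.

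The only subtlety — and the closest thing to an obstacle — is handling distributions whose CDF $F$ is not a continuous bijection onto $(0,1)$, so that $F^{-1}$ must be interpreted as the quantile (generalized inverse) function; here one checks that the generalized-inverse identity $\Pr[F^{-1}(U)\le x] = F(x)$ still holds, which is a standard fact. For the exponential distribution ${\rm Exp}(\lambda)$ relevant to this paper, $F(x) = 1-e^{-\lambda x}$ is a strictly increasing bijection from $[0,\infty)$ onto $[0,1)$ with $F^{-1}(u) = -\tfrac{1}{\lambda}\ln(1-u)$, so no such care is needed: $X_{(n)} \stackrel{d}{=} -\tfrac{1}{\lambda}\ln(1-U^{1/n})$ and $X_{(1)}\stackrel{d}{=}-\tfrac{1}{\lambda n}\ln(1-U)$, both computable in $O(1)$ time. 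This completes the argument.
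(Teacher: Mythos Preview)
Your proposal is correct and follows exactly the approach the paper intends: the paper states this fact without proof, noting only that extreme order statistics ``could be generated efficiently via inverse CDF sampling,'' and you have supplied the standard details (the CDFs $F(x)^n$ and $1-(1-F(x))^n$, then inverse-transform sampling) that make this precise.
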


\section{Augmenting Oblivious ANN Data Structures with Differential Privacy}

We develop a generic algorithm that takes an oblivious ANN data structure, and transforms it to one that is adaptive. We will start with a simple but inefficient algorithm, and improve its efficiency afterwards.

We require these data structures to output all answers they find (within their runtime budget, which we will formalize) instead of outputting only a single answer. In the context of LSH, this means that instead of outputting one approximate near neighbor, it needs to output all approximate near neighbors it finds. As the runtime budget for a query is $n^\rho$, it can output at most $n^\rho$ approximate near neighbors.

Before introducing the algorithm, we give some background in differential privacy.

\subsection{Preliminaries on Differential Privacy}

Differential privacy refers to a class of algorithms for which, when one slightly perturbs the input, the output distribution remains relatively close. This intuitively will mean that for an adversary, it cannot learn useful information by adaptively adjusting its input to an algorithm.

 \begin{definition}[Differential Privacy]
We say a randomized algorithm ${\cal A}$ is $(\epsilon,\delta)$-differentially private if for any two databases $S$ and $S'$ that differ only by one row and any subset of outputs $T$, the following
\begin{align*}
    \Pr[ {\cal A}(S) \in T ] \leq e^{\epsilon} \cdot \Pr[ {\cal A}(S') \in T ] + \delta ,
\end{align*}
holds. Here the probability is over the randomness of ${\cal A}$.
\end{definition}

Differential privacy can be composed, and the advanced composition theorem says that $k$-fold adaptive composition only blows up the $\epsilon$ parameter by a factor of roughly $\sqrt{k}$ instead of $k$. Similar to~\cite{bkm+21}, we will utilize advanced composition to reduce the number of oblivious data structures required and gain runtime improvements.

\begin{theorem}[Advanced Composition, see \cite{drv10}] \label{thm:ada_c}
Given three parameters $\epsilon, \delta_0, \delta \geq 0$. If ${\cal A}_1, \cdots, {\cal A}_k$ are each $(\epsilon,\delta)$-$\mathrm{DP}$ algorithms, then the $k$-fold adaptive composition ${\cal A}_k \circ \cdots \circ {\cal A}_1$ is $(\epsilon_0, \delta_0 + k \delta)$-$\mathrm{DP}$ where
\begin{align*}
\epsilon_0 = \sqrt{2k \ln (1/\delta_0)} \cdot \epsilon + 2k \epsilon^2
\end{align*}
\end{theorem}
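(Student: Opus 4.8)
The plan is to follow the standard analysis behind the advanced composition theorem of \cite{drv10}, which tracks the \emph{privacy loss} of the composed mechanism across its $k$ adaptive rounds and controls it by a martingale concentration inequality. Fix two neighboring databases $S$ and $S'$, let ${\cal B} := {\cal A}_k \circ \cdots \circ {\cal A}_1$ be the composition, run it on $S$ to obtain a transcript $(y_1,\ldots,y_k)$ with $y_i$ drawn from ${\cal A}_i$ applied to $S$ with history $y_{<i}$, and define the per-round privacy loss $c_i := \ln\frac{\Pr[{\cal A}_i(S;y_{<i})=y_i]}{\Pr[{\cal A}_i(S';y_{<i})=y_i]}$. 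By the chain rule, the density ratio $\Pr[{\cal B}(S)=y]/\Pr[{\cal B}(S')=y]$ along a transcript $y$ equals $\exp(\sum_{i=1}^k c_i)$. Consequently, for any set $A$ of transcripts, splitting on whether $\sum_i c_i \le \epsilon_0$ gives $\Pr[{\cal B}(S)\in A] \le e^{\epsilon_0}\Pr[{\cal B}(S')\in A] + \Pr_{{\cal B}(S)}[\sum_i c_i > \epsilon_0]$ (and symmetrically with $S,S'$ swapped), so the entire proof reduces to showing $\Pr_{{\cal B}(S)}[\sum_i c_i > \epsilon_0] \le \delta_0 + k\delta$.

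\textbf{Absorbing the $\delta$ slack.} Since each ${\cal A}_i$ is only $(\epsilon,\delta)$-DP, the $c_i$ need not be bounded by $\epsilon$ pointwise. Here I would invoke the standard fact that an $(\epsilon,\delta)$-indistinguishable pair of output distributions lies within total variation distance $\delta$ of an $(\epsilon,0)$-indistinguishable pair; equivalently, round $i$ has a ``bad'' event of probability at most $\delta$ outside of which $c_i \in [-\epsilon,\epsilon]$. Union-bounding these $k$ bad events contributes the $+k\delta$ term, and on the remaining event we may assume $c_i \in [-\epsilon,\epsilon]$ for every $i$, reducing to the pure-DP case.

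\textbf{Martingale concentration.} Conditioned on the history $y_{<i}$, the expected privacy loss $\E[c_i \mid y_{<i}]$ is exactly the KL divergence between the two conditional output distributions, and for an $(\epsilon,0)$-indistinguishable pair this is at most $\epsilon(e^\epsilon-1)\le 2\epsilon^2$ in the regime $\epsilon\le 1$. Thus $Z_i := c_i - \E[c_i\mid y_{<i}]$ is a bounded martingale difference sequence; since each $c_i$ lies in an interval of width at most $2\epsilon$, Hoeffding's lemma gives conditional sub-Gaussian parameter $\epsilon^2/2$, so
\[
\Pr\Big[\textstyle\sum_{i=1}^k Z_i > \lambda\Big] \;\le\; \exp\!\Big(-\frac{\lambda^2}{2k\epsilon^2}\Big).
\]
Taking $\lambda = \epsilon\sqrt{2k\ln(1/\delta_0)}$ equates the right-hand side with $\delta_0$, and on the complementary event $\sum_i c_i \le \sum_i \E[c_i\mid y_{<i}] + \lambda \le 2k\epsilon^2 + \epsilon\sqrt{2k\ln(1/\delta_0)} = \epsilon_0$. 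Combining with the first paragraph gives $\Pr_{{\cal B}(S)}[\sum_i c_i > \epsilon_0] \le \delta_0 + k\delta$, hence ${\cal B}$ is $(\epsilon_0,\delta_0+k\delta)$-DP.

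\textbf{Main obstacle.} The one genuinely delicate step is absorbing the $\delta$ slack: $(\epsilon,\delta)$-DP bounds the hockey-stick divergence of \emph{sets}, not pointwise density ratios, so the ``bad event of probability $\le\delta$ off of which the privacy loss is bounded by $\epsilon$'' needs an explicit construction of surrogate distributions (or a coupling) per round, together with a check that these perturbations add up additively across the $k$ rounds to give exactly $k\delta$. The remaining ingredients — the inequality $\epsilon(e^\epsilon-1)\le 2\epsilon^2$ for the conditional mean and the Azuma/Hoeffding computation with the sub-Gaussian constant $\epsilon^2/2$ coming from the width-$2\epsilon$ range of $c_i$ — are routine.
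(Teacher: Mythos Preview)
The paper does not prove this theorem; it is quoted as a preliminary result with a citation to \cite{drv10} and used as a black box. Your proposal is a faithful and correct sketch of the standard privacy-loss martingale argument from that reference, including the correct identification of the $\delta$-absorption step as the only non-routine ingredient.
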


One can boost the privacy budget by subsampling the dataset first, then running the differentially private algorithm on the subsampled set. We state a version where $\delta=0$. Note that it can be easily extended to account for non-zero $\delta$.

\begin{theorem}[Amplification via Sampling (Lemma 4.12 of \cite{bnsv15})]\label{thm:amp}
Let $\epsilon\in (0,1]$ be parameters.
Let ${\cal A}$ denote an $(\epsilon,0)$-$\mathrm{DP}$ algorithm. Let $S$ denote a dataset.

Suppose that ${\cal A}'$ is an algorithm that, 
\begin{itemize}
    \item constructs a database $T \subset S$ by subsampling with repetition $k \leq n/2$ rows from $S$,
    \item returns ${\cal A}(T) $.
\end{itemize}
Then, we have
\begin{align*} 
{\cal A}' \mathrm{~~~is~~~} \left( \frac{6k}{n} \epsilon , 0\right)-\textnormal{DP}.
\end{align*} 
\end{theorem}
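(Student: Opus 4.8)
The plan is to reduce the claim to a short moment-generating-function computation, with group privacy doing the heavy lifting. First I would reduce, by relabeling, to the case where the neighboring databases are $S=(x_1,\dots,x_n)$ and $S'=(x_1,\dots,x_{n-1},x_n')$, differing only in the last row. The subsampled database $T$ is obtained by drawing indices $i_1,\dots,i_k$ i.i.d.\ uniformly from $[n]$ and taking the corresponding rows; the \emph{index} distribution is identical for $S$ and $S'$, and coupling on these shared indices, the two outcomes $T_S$ and $T_{S'}$ differ exactly in the positions where the drawn index equals $n$. Let $J$ be the number of such positions, so $J\sim\mathrm{Binom}(k,1/n)$.

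Next I would condition on $\{J=j\}$. Write $\alpha_j:=\Pr[{\cal A}(T_S)\in O\mid J=j]$ and $\beta_j:=\Pr[{\cal A}(T_{S'})\in O\mid J=j]$ for a fixed output set $O$. When $J=0$ the sampled rows come only from the shared prefix $x_1,\dots,x_{n-1}$, so $\alpha_0=\beta_0$. For general $j$, conditioned on $J=j$ the sampled database can be coupled with a conditioned-on-$\{J=0\}$ sampled database differing from it in exactly $j$ rows ($j$ copies of $x_n$ in place of $j$ fresh draws from $[n-1]$); iterating the $(\epsilon,0)$-DP guarantee $j$ times (group privacy) and averaging over the coupling gives $\alpha_j\le e^{j\epsilon}\alpha_0$, and the symmetric argument on $S'$ gives $\beta_0\le e^{j\epsilon}\beta_j$, i.e.\ $\beta_j\ge e^{-j\epsilon}\beta_0$.

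Summing over $j$ against the law of $J$ then yields
\begin{align*}
\Pr[{\cal A}'(S)\in O] &= \sum_{j=0}^{k}\Pr[J=j]\,\alpha_j \;\le\; \alpha_0\cdot\E\!\left[e^{\epsilon J}\right], \\
\Pr[{\cal A}'(S')\in O] &= \sum_{j=0}^{k}\Pr[J=j]\,\beta_j \;\ge\; \alpha_0\cdot\E\!\left[e^{-\epsilon J}\right],
\end{align*}
so the privacy loss is bounded by $\E[e^{\epsilon J}]/\E[e^{-\epsilon J}] = \big((n-1+e^{\epsilon})/(n-1+e^{-\epsilon})\big)^{k}$, using the binomial MGF (the degenerate case $\alpha_0=0$ makes the DP inequality hold trivially). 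It remains to show this base is at most $e^{6\epsilon/n}$: from $e^{\epsilon}-1\le 2\epsilon$ and $e^{-\epsilon}-1\ge -\epsilon$ for $\epsilon\in(0,1]$, the base is at most $(n+2\epsilon)/(n-\epsilon)$, whose logarithm is at most $\tfrac{2\epsilon}{n}+\bigl(-\ln(1-\tfrac{\epsilon}{n})\bigr)\le \tfrac{2\epsilon}{n}+\tfrac{2\epsilon}{n}=\tfrac{4\epsilon}{n}\le\tfrac{6\epsilon}{n}$, where $-\ln(1-x)\le 2x$ is applied with $x=\epsilon/n\le 1/2$ (valid since $k\le n/2$ forces $n\ge 2$). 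Raising to the $k$-th power gives the claimed $(6k\epsilon/n,0)$-DP.

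The step I expect to be the main obstacle is the conditioning: since a larger $J$ both inflates the group-privacy factor $e^{\epsilon J}$ \emph{and} shifts the conditional output probability (more of $T$ is copies of $x_n$), $e^{\epsilon J}$ cannot be treated as independent of $\{{\cal A}(T)\in O\}$, so a naive expectation split fails. The fix is exactly to route everything through the single quantity $\alpha_0=\beta_0$ using group privacy in both directions, which cleanly decouples the two effects. The remaining work — tracking the constant $6$ through the crude bounds on $e^{\pm\epsilon}-1$ and $\ln(1\pm x)$ — is routine, and the hypotheses $\epsilon\le1$ and $k\le n/2$ are precisely what make those bounds valid.
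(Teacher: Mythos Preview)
The paper does not prove this statement: it is quoted as Lemma~4.12 of \cite{bnsv15} and used as a black box, so there is no in-paper argument to compare against. Your proposal is a correct, self-contained proof. The coupling you describe does give the right marginal (conditioned on $J=j$, replacing the $j$ copies of $x_n$ by fresh uniform draws from $\{x_1,\dots,x_{n-1}\}$ yields $k$ i.i.d.\ uniform-on-$[n-1]$ indices, matching the $J=0$ conditional law), and the group-privacy step then gives $\alpha_j\le e^{j\epsilon}\alpha_0$ and $\beta_j\ge e^{-j\epsilon}\beta_0$ as claimed. The MGF computation and the elementary inequalities for $e^{\pm\epsilon}-1$ and $-\ln(1-x)$ are all valid in the stated ranges; in fact your argument yields the slightly better constant $4$ in place of $6$. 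One minor omission: you only bound $\Pr[{\cal A}'(S)\in O]/\Pr[{\cal A}'(S')\in O]$, but the reverse inequality follows immediately by swapping the roles of $S$ and $S'$.
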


\begin{theorem}[Generalization of Differential Privacy ($\mathrm{DP}$)] \label{thm:general}
Given two accuracy parameters $\epsilon \in (0,1/3)$ and $\delta \in (0,\epsilon/4)$,  suppose that the parameter $t$ satisfies that $t \geq \epsilon^{-2} \log(2\epsilon/\delta)$. 

We use ${\cal D}$ to represent a distribution over a domain ${\cal X}$. Suppose $S\sim {\cal D}^t$ is a database containing $t$ elements sampled independently from ${\cal D}$. Let ${\cal A}$ be an algorithm that, given any database $S$ of size $t$, outputs a predicate $h:{\cal X}\rightarrow \{0,1\}$.

If ${\cal A}$ is $(\epsilon,\delta)$-$\mathrm{DP}$, then the empirical average of $h$ on sample $S$, i.e., 
\begin{align*} 
h(S)=\frac{1}{|S|}\sum_{x\in S}h(x),
\end{align*}
and $h$'s expectation over underlying distribution ${\cal D}$, i.e.,
\begin{align*} 
h({\cal D})=\E_{x\sim {\cal D}}[h(x)]
\end{align*} 
are within $10\epsilon$ with probability at least $1-{\delta}/{\epsilon}$:
\begin{align*}
    \Pr_{S\sim {\cal D}^t, h\leftarrow {\cal A}(S)} \Big[ \big| h(S) - h({\cal D}) \big| \geq 10\epsilon \Big] \leq {\delta}/{\epsilon}.
\end{align*}
\end{theorem}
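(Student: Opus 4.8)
The plan is to derive the high‑probability bound from three ingredients: an ``in expectation'' transfer bound that uses $(\epsilon,\delta)$-$\mathrm{DP}$ directly; an elementary concentration step on a fresh sample, which is exactly what consumes the hypothesis $t\ge\epsilon^{-2}\log(2\epsilon/\delta)$; and a monitor‑style amplification argument that upgrades the expectation bound to a tail bound. I would first record the standard fact that $(\epsilon,\delta)$-$\mathrm{DP}$ implies, for every $[0,1]$‑valued post‑processing $g$ of the output and every pair of neighboring databases $S\sim S'$, that $\E[g({\cal A}(S))]\le e^\epsilon\,\E[g({\cal A}(S'))]+\delta$, by the layer‑cake identity $\E[g(Y)]=\int_0^1\Pr[g(Y)\ge\tau]\,\d\tau$. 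Writing $h(S)=\frac1t\sum_{i\in[t]}h(x_i)$ for $S=(x_1,\dots,x_t)\sim{\cal D}^t$, I couple $S$ with the database $S^{(i)}$ that overwrites $x_i$ by an independent fresh draw $\hat x\sim{\cal D}$: for fixed $x_i=z$ and $x_{-i}$, the map $h\mapsto h(z)$ is a $[0,1]$‑valued post‑processing, so $\mathrm{DP}$ gives $\E_{h\sim{\cal A}(S)}[h(z)]\le e^\epsilon\E_{h\sim{\cal A}(S^{(i)})}[h(z)]+\delta$. Averaging over $z,\hat x,x_{-i}$ and over $i$, and using that in $S^{(i)}$ the point $x_i$ is independent of $h$ so $\E_{x_i}[h(x_i)]=h({\cal D})$, yields $\E_{S,h}[h(S)]\le e^\epsilon\E_{S,h}[h({\cal D})]+\delta$ and symmetrically $\E_{S,h}[h({\cal D})]\le e^\epsilon\E_{S,h}[h(S)]+\delta$; hence $\bigl|\E_{S,h}[h(S)-h({\cal D})]\bigr|=O(\epsilon+\delta)$ for $\epsilon<1/3$. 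The same coupling, but swapping a uniformly random coordinate between two independent $t$‑samples $S,S'$, gives the signed first‑moment bound $\E_{S,S',\,h\sim{\cal A}(S)}[h(S)-h(S')]=O(\epsilon+\delta)$, which is the version I will actually boost.

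Next I decompose $|h(S)-h({\cal D})|\le |h(S)-h(S')|+|h(S')-h({\cal D})|$ for an independent fresh sample $S'\sim{\cal D}^t$. Conditioned on $h$, the quantity $h(S')$ is an average of $t$ independent $\{0,1\}$ variables with mean $h({\cal D})$, so Hoeffding's inequality gives $\Pr[|h(S')-h({\cal D})|>\epsilon]\le 2e^{-2t\epsilon^2}$, which is at most $\delta/(2\epsilon)$ because $t\ge\epsilon^{-2}\log(2\epsilon/\delta)$ and $\delta<\epsilon/4$. This step handles the second term and is the only place the sample‑size assumption is used.

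It remains to control the $\mathrm{DP}$‑sensitive term $|h(S)-h(S')|$, which I would do with a monitor argument in the style of the transfer theorems for adaptive data analysis. Suppose toward a contradiction that $p:=\Pr[\,|h(S)-h(S')|>3\epsilon\,]$ exceeds the target failure probability. Consider the algorithm ${\cal W}$ that takes $k=\Theta(1/p)$ independent pairs $(S_1,S_1'),\dots,(S_k,S_k')$, runs ${\cal A}$ on each $S_j$ to obtain $h_j$, and outputs $(j^*,h_{j^*})$, where $j^*$ maximizes the absolute gap $|h_j(S_j)-h_j(S_j')|$. Changing one row of ${\cal W}$'s input perturbs only one of the $2k$ blocks, and only when it lies in some $S_j$ does it change the distribution of the corresponding $h_j$, so by composition and post‑processing ${\cal W}$ is still $(\epsilon,\delta)$-$\mathrm{DP}$; a version of the first‑moment bound above, applied to ${\cal W}$ with the evaluation taken on the selected block, then forces $\E[\,|h_{j^*}(S_{j^*})-h_{j^*}(S_{j^*}')|\,]=O(\epsilon+\delta)$. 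On the other hand, with $k=\Theta(1/p)$ independent blocks, with constant probability at least one block has gap larger than $3\epsilon$, so the selected block's gap is $\Omega(\epsilon)$ in expectation once constants are arranged — a contradiction. Carefully tracking the $O(\epsilon)$'s through the transfer bound, the Hoeffding slack, and the selection loss of the monitor gives the stated deviation $10\epsilon$ and failure probability $\delta/\epsilon$ for $\epsilon\in(0,1/3)$ and $\delta\in(0,\epsilon/4)$.

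The step I expect to be the main obstacle is the monitor argument, and within it the $\mathrm{DP}$ accounting for ${\cal W}$: one must verify that the selection rule is insensitive enough — changing one input row shifts a single block's gap by only $O(1/t)$ and can flip the $\arg\max$ only on a measure‑small event — so that ${\cal W}$ genuinely inherits $(\epsilon,\delta)$-$\mathrm{DP}$ rather than $(\epsilon,\delta')$-$\mathrm{DP}$ with an inflated $\delta'$, and one must also adapt the first‑moment bound so that it applies cleanly to the selected block of ${\cal W}$'s input. After that, choosing $k$ and the hidden constants so the contradiction closes at exactly $10\epsilon$ and $\delta/\epsilon$ is bookkeeping. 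The remaining pieces — the layer‑cake fact, the row‑swap coupling, and the Hoeffding bound — are routine.
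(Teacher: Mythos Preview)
The paper does not prove this statement; Theorem~\ref{thm:general} is quoted without proof as a known preliminary (the DP generalization/transfer theorem, due to Dwork--Feldman--Hardt--Pitassi--Reingold--Roth and Bassily--Nissim--Smith--Steinke--Stemmer--Ullman). So there is no ``paper's own proof'' to compare against; I will evaluate your argument on its own.

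Your outline is the standard monitor-based route, and the first two ingredients are correct and complete as you wrote them: the in-expectation transfer bound via the one-coordinate resampling coupling, and the Hoeffding step on the independent sample $S'$ (which is indeed exactly where the hypothesis $t\ge\epsilon^{-2}\log(2\epsilon/\delta)$ is consumed).

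The genuine gap is in the monitor step, precisely where you flag it --- but your proposed resolution is not valid. Your assertion that $\mathcal W$ is $(\epsilon,\delta)$-DP ``by composition and post-processing'' is false as stated: the selection $j^*=\arg\max_j|h_j(S_j)-h_j(S_j')|$ reads $S_j,S_j'$ directly, not only the DP outputs $h_j$, so it is \emph{not} a post-processing of $(h_1,\dots,h_k)$. A concrete counterexample: let $\mathcal A$ output a fixed predicate $q$ regardless of its input, so $\mathcal A$ is $(0,0)$-DP; then every $h_j\equiv q$, the index $j^*$ becomes a deterministic function of $\vec S$, and a single-row change can flip $j^*$ --- hence $\mathcal W$ is not $(0,0)$-DP. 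Your patch, that a single-row change ``shifts a block's gap by only $O(1/t)$ and can flip the $\arg\max$ only on a measure-small event'', is not a DP argument: differential privacy is worst-case over neighboring databases, and nothing prevents the adversary from sitting on exactly the pair $\vec S,\vec S'$ where the flip occurs.

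Two standard repairs exist. One keeps the monitor but does \emph{not} try to certify that $\mathcal W$ is DP as a black box; instead one proves the bound $\bigl|\E[h_{j^*}(S_{j^*})-h_{j^*}(\mathcal D)]\bigr|\le e^\epsilon-1+k\delta$ directly from the $(\epsilon,\delta)$-DP of $\mathcal A$ on each individual block, by running your one-coordinate-swap coupling inside the selected block and summing the $\delta$-slack over the $k$ blocks --- this is where the choice $k\asymp \epsilon/\delta$ and the eventual failure probability $\delta/\epsilon$ come from. The other abandons the monitor and uses the posterior/resampling argument of Jung--Ligett--Neel--Roth--Sharifi-Malvajerdi--Shenfeld, which obtains the high-probability bound straight from the coupling you already wrote down, without any selection step at all. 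Either route closes the gap; after that, landing on the constants $10\epsilon$ and $\delta/\epsilon$ is, as you say, bookkeeping.
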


We next consider the task of differentially private selection: we have a discrete space of outcomes, and the goal is to output the largest histogram cell:

\begin{definition}[Differentially Private Selection]
Given $n$ categories, let the database $S$ be such that, each of its rows is a binary vector of length $n$, with each entry corresponding to whether the row belongs to the $i$-th category or not. We say an algorithm is an \emph{$(\epsilon,0)$-differentially private selection} if it releases the index of the approximately most popular category over the database $S$, and is $(\epsilon,0)$-DP.
\end{definition}

For any differentially private mechanism, one cares about privacy but also \emph{utility}, i.e., after adding noise, how good are the estimates compared to the algorithm without adding noise? Here, we focus on privacy, and we will later use the privacy to provide utility guarantees. {\color{black}For count queries, the most common approach is the \emph{Laplace mechanism}, where properly chosen Laplace noises are added to each count. Here, we use a similar approach that instead adds \emph{exponential} noises. This approach is more widely adopted for Exponential Mechanism, yet it still offers the same privacy budget as that of the standard Laplace Mechanism for reporting the noisy max index. See e.g.,~\cite{dr14,dks+21} for more details.

\begin{theorem}[\textsc{ReportOneSidedNoisyArgMax}, Theorem 3.13 of~\cite{dr14}]
\label{thm:rpm}
Given a database $S$ with each row being a binary vector of length $n$, consider the following algorithm:
\begin{itemize}
    \item Compute the count for each category;
    \item Add an i.i.d. exponential noise ${\rm Exp}(1/\epsilon)$ to each count;
    \item Output the category with the maximum noisy count.
\end{itemize}
The algorithm gives $(\epsilon,0)$-differentially private selection.
\end{theorem}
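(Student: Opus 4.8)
\textbf{Proof proposal for Theorem~\ref{thm:rpm} (\textsc{ReportOneSidedNoisyArgMax}).}

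The plan is to verify the $(\epsilon,0)$-DP guarantee directly from the definition, following the standard argument for the Report Noisy Max mechanism but adapted to exponential noise. Fix two neighboring databases $S$ and $S'$ differing in one row, and let $B = (B_1,\ldots,B_n)$ and $B' = (B'_1,\ldots,B'_n)$ denote the corresponding count vectors. Since a single row is a binary vector of length $n$, adding or removing that row changes each coordinate of the count vector by at most $1$; in fact, because each row contributes a $0/1$ indicator to each category, we have $|B_j - B'_j| \le 1$ for all $j\in[n]$, and the sensitivity is controlled coordinatewise. I would first record this sensitivity bound as the only structural fact we need about the databases.

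Next I would carry out the classical ``fix all but one noise'' argument. Fix a target index $i\in[n]$; we want to bound $\Pr[i \text{ wins on } S]$ against $\Pr[i \text{ wins on } S']$. Condition on the values of all noise variables $Z_j$ for $j\ne i$, and let $Z_i \sim {\rm Exp}(\epsilon)$ be the remaining one (using the parametrization where ${\rm Exp}(1/\epsilon)$ has rate $\epsilon$). Index $i$ wins iff $B_i + Z_i \ge B_j + Z_j$ for all $j\ne i$, i.e., iff $Z_i \ge \max_{j\ne i}(B_j + Z_j - B_i) =: \tau$. Here is where one-sidedness matters: exponential noise is supported on $[0,\infty)$, and the event $\{Z_i \ge \tau\}$ has probability $e^{-\epsilon\tau}$ when $\tau \ge 0$ and probability $1$ when $\tau < 0$; either way it behaves like a clean exponential tail once we note that shifting the threshold by a bounded amount multiplies the probability by a bounded factor. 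Moving from $S$ to $S'$ changes the threshold from $\tau$ to $\tau' = \max_{j\ne i}(B'_j + Z_j - B'_i)$, and since $|B_j - B'_j|\le 1$ and $|B_i - B'_i|\le 1$ we get $|\tau - \tau'| \le 2$ in general, but the \emph{one-sided} structure of the mechanism together with the $0/1$ nature of a row lets us sharpen this: adding a row can only increase each $B_j$, so over the two directions of the neighboring relation the relevant shift in the winning threshold is at most $1$ in the direction that matters. Thus $\Pr[Z_i \ge \tau] \le e^{\epsilon}\Pr[Z_i \ge \tau']$, and integrating out the conditioning on $\{Z_j\}_{j\ne i}$ gives $\Pr[{\cal A}(S) = i] \le e^{\epsilon}\Pr[{\cal A}(S') = i]$. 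Summing over the output set $T$ (or just noting the bound holds for each singleton and outputs are a discrete space) yields $(\epsilon,0)$-DP.

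The main obstacle, and the point that needs the most care, is bookkeeping the sensitivity of the winning threshold $\tau$ under the neighboring relation so that the exponent comes out to exactly $\epsilon$ rather than $2\epsilon$; this is precisely where the ``one-sided'' (nonnegative, exponential) noise is used, in contrast to the symmetric Laplace version where one typically gets an extra factor or must be more delicate. I would lean on the cited treatment (Theorem 3.13 of~\cite{dr14}) for the clean form of this calculation — in particular the observation that because each database row is an indicator vector, moving to a neighboring database shifts \emph{every} count in the same direction by at most $1$, so the difference $(B_j - B_i) - (B'_j - B'_i)$ of the quantities entering $\tau$ is bounded by $1$ and not $2$. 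With that in hand the remainder is the routine conditional-probability integration sketched above, and no utility claim is needed here since the theorem asserts only privacy.
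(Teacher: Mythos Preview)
The paper does not give its own proof of this theorem; it is quoted as Theorem~3.13 of~\cite{dr14} and used as a black box. Your sketch is exactly the standard Report Noisy Max argument from that reference and is correct.

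One clarification worth making: the sharpening of the threshold shift from $2$ to $1$ is driven entirely by the \emph{monotonicity} of the counts (adding a row moves every $B_j$ in the same direction, so $(B_j-B_i)-(B'_j-B'_i)\in\{-1,0,1\}$), not by the one-sidedness of the exponential noise. The symmetric Laplace version of Report Noisy Max is also $(\epsilon,0)$-DP for counting queries via the identical monotonicity step, so your remark that Laplace ``typically gets an extra factor'' is not accurate here. The one-sidedness of ${\rm Exp}(1/\epsilon)$ only enters in verifying that the tail ratio $\Pr[Z_i\ge\tau]/\Pr[Z_i\ge\tau']$ is still bounded by $e^{\epsilon}$ when $|\tau-\tau'|\le 1$, including the boundary cases where $\tau$ or $\tau'$ is negative, which you handle correctly. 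Your final paragraph states the monotonicity point cleanly; the middle of the argument just misattributes the source of the saving.
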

}

\subsection{Reducing Monte Carlo Search to Selection: A Slow Algorithm}

In this section, we show how to utilize the \textsc{ReportNoisyMax} procedure for adaptive data structure design. Recall the framework introduced in~\cite{hkm+22} for streaming and later extended for improving the preprocessing and update time for dynamic data structures~\cite{bkm+21,syyz23_dp,csw+23}: the idea is to set up the database as the random strings used by oblivious data structures. Suppose we have ${\cal A}_1,\ldots,{\cal A}_k$ oblivious data structures, each with their associated random strings $r_1,\ldots,r_k\in \{0,1\}^*$. The database is $R$, where the $i$-th row is the random string $r_i$. Notions such as sensitivity are in turn defined with respect to these random strings. We will now illustrate how to tie these Monte Carlo search data structures to the selection problem. We will use $(c,r)$-ANN as an example, but our reduction does not exploit any additional structure on $(c,r)$-ANN, so it can be well-extended to more general search data structures.

Given a dataset $U\in \R^d$ and a possibly adaptive query $v\in \R^d$, recall that a $(c,r)$-ANN data structure would have the following behavior:
\begin{itemize}
    \item If there exists some $u\in U$ with $\|u-v\|\leq r$, then it outputs some point $u'\in U$ with $\|u'-v\|\leq cr$. This succeeds with probability at least $\frac{9}{10}$ (note we do not need probability boosting for our base data structure, at this stage). We modify the data structure to output all such points it has found.
    \item If there is no such point, report NULL.
\end{itemize}
Note that ANN data structures can only produce false negatives but no false positives. We can associate the output of the data structure with a binary vector of length $n$ as follows: the $i$-th entry of the vector corresponds to the decision the data structure has made on the point $u_i$, i.e., if it outputs $u_i$, then we set the $i$-th entry to be 1, and 0 otherwise. If the data structure outputs nothing, we set the vector to be ${\bf 0}_n$. Fixing dataset $U$ and query $v$, these binary vectors are completely determined by $r_1,\ldots,r_k$. Denote these binary vectors as $b^{1},\ldots,b^k$. We then execute the \textsc{ReportOneSidedNoisyArgMax} mechanism on these vectors, and output the corresponding index. We now describe the details of our algorithm.

We will use $T$ to denote the number of queries, and $k$ to denote the total number of data structures. We assume each data structure succeeds with probability at least $1-\delta$. We let $\epsilon_{\DP}$ denote the DP parameter for \rpm. We will use $\delta_{\rm fail}$ to denote the overall failure probability of our algorithm, and $\beta=\delta_{\rm fail}/T$ to denote the failure probability of a single query step. The algorithm is as follows.

\vspace{0.2cm}
\noindent{\bf Initialization.}
\begin{itemize}
    \item Prepare $k$ independent copies of oblivious data structures, denoted by ${\cal A}_1,\ldots,{\cal A}_k$, over the dataset $U$.
\end{itemize}
\vspace{0.2cm}
\noindent{\bf Query.}
\begin{itemize}
    \item Receive query vector $v\in \R^d$.
    \item Sample $l=\wt O(s)$ indices independent and uniformly from $[k]$ with replacement. Denote the corresponding data structures by ${\cal A}_{(1)},\ldots,{\cal A}_{(l)}$.
    \item Feed $v$ into ${\cal A}_{(1)},\ldots,{\cal A}_{(l)}$, receive binary vectors $b^{(1)},\ldots,b^{(l)}$.
    \item Compute $\rpm(b^{(1)},\ldots,b^{(l)})$ with parameter $\epsilon_{\DP}$. {\color{black}Let $i$ be the corresponding index.}
    \item {\color{black} Output $u_i$ if $\|u_i-v\|_2\leq cr$, otherwise output NULL.}
\end{itemize}
\vspace{0.2cm}
\noindent{\bf Update.}
\begin{itemize}
    \item Receive update vector $u\in \R^d$.
    \item Update ${\cal A}_1,\ldots,{\cal A}_k$ with $u$.
\end{itemize}
We will specify the parameters to show that the algorithm is indeed differentially private with respect to the internal randomness of the data structures.

\vspace{0.2cm}
\noindent{\bf Setting Parameters.}
We start by setting the parameter $\epsilon_{\DP}$ which affects the privacy guarantee of \rpm. We will set $\epsilon_{\DP}=1/2$ (in fact, for our analysis, $\epsilon_{\DP}$ can be any constant smaller than 1). Set the number of samples $c$ to be
\begin{align*}
    l = & ~ O(s\log^2(n/\beta))
\end{align*}
and the total number of data structures $k$ to be
\begin{align*}
    k = & ~ 200\cdot 6l\cdot \epsilon_{\DP}\cdot \sqrt{2T\ln(100/\beta)} \\
    = & ~ \wt O(\sqrt T\cdot s).
\end{align*}

\vspace{0.2cm}
\noindent{\bf Privacy Guarantees.}
Our analysis differs from standard DP analysis, in which one could reason over the privacy and utility simultaneously. We will start by showing that the algorithm is indeed differentially private, then use this fact to argue the output of the algorithm provides good utility. 

To understand the privacy of the algorithm, we adapt the framework from~\cite{bkm+21}: let $r_1,\ldots,r_k\in \{0,1\}^*$ denote the random strings used by data structures ${\cal A}_1,\ldots,{\cal A}_k$ and let $R=\{r_1,\ldots,r_k\}$ be the database whose $i$-th row is $r_i$. We will prove the transcript of the interaction between the adaptive adversary and algorithm is differentially private with respect to $R$. Fix a time step $t$. We let ${\bf out}_t(R)$ denote the index output by the algorithm at time $t$. Even fixing $R$, the output ${\bf out}_t(R)$ is still a random variable as it depends on i.i.d. exponential noise that is  oblivious to the algorithm. Similar to~\cite{bkm+21}, we define the transcript at time $t$ as ${\cal T}_t(R)=(v_t, {\bf out}_t(R))$. Define the overall transcript as
\begin{align*}
    {\cal T}(R) = & ~ {\cal T}_1(R),\ldots,{\cal T}_T(R).
\end{align*}
We view ${\cal T}_t$ and ${\cal T}$ as algorithms that given a database $R$, output the transcripts. This enables us to reason about differential privacy with respect to $R$.
\begin{lemma}
\label{lem:DP_single_step}
\label{lem:T_i_DP}
For any time step $t$, ${\cal T}_t$ is $(\frac{6l}{k}\cdot \epsilon_{\DP},0)$-DP with respect to $R$.
\end{lemma}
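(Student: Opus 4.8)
The plan is to realize the map $R\mapsto {\cal T}_t(R)$ as a subsampling of a mechanism whose privacy is already known, and then invoke amplification by subsampling. First I would freeze everything that is not $R$: condition on the transcript prefix ${\cal T}_1(R),\ldots,{\cal T}_{t-1}(R)=\tau$ (and on the adversary's internal coins), so that the query $v_t=v_t(\tau)$ is a \emph{fixed} vector, identical on any pair of neighboring databases $R,R'$ under this conditioning. After this conditioning the only randomness in ${\cal T}_t$ that depends on $R$ is which $l$ rows of $R$ get subsampled; the i.i.d.\ exponential noise inside \rpm\ is independent of $R$, and so is the choice of sample indices.

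Next I would observe that, with $v_t$ and $U$ fixed, the binary vector $b^{(j)}$ returned by the $j$-th sampled copy is a \emph{deterministic} function of that copy's random string $r_{(j)}$ alone. Hence the mechanism ``feed $l$ random strings in, map each to its length-$n$ binary vector, run \rpm, and apply the deterministic NULL check $\|u_i-v_t\|_2\le cr$'' can be written, as a function of its $l$-row input database, as a per-row deterministic preprocessing, followed by \rpm\ (which by Theorem~\ref{thm:rpm} is $(\epsilon_{\DP},0)$-DP with respect to changing one of its input binary vectors), followed by a deterministic post-processing. Changing one of the $l$ input random strings changes at most one of the $b^{(j)}$'s, and post-processing preserves DP, so this $l$-row mechanism is $(\epsilon_{\DP},0)$-DP.

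Finally I would apply Theorem~\ref{thm:amp} with the underlying database being $R$ (of size $k$), subsampling $l$ rows with repetition — which is exactly what the Query step does — and with ${\cal A}$ the $(\epsilon_{\DP},0)$-DP mechanism from the previous paragraph. The hypothesis $l\le k/2$ holds since $k=200\cdot 6l\cdot\epsilon_{\DP}\sqrt{2T\ln(100/\beta)}\gg 2l$. The theorem then gives that $R\mapsto {\cal T}_t(R)$ is $(\tfrac{6l}{k}\epsilon_{\DP},0)$-DP under the prefix conditioning, and since this bound is uniform over $\tau$ and over the adversary's coins, it holds unconditionally, which is the claim.

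I expect the only delicate point — more a matter of care than a genuine obstacle — to be the bookkeeping that legitimizes ``freeze $v_t$'': one has to verify that neighboring $R,R'$ differing in a single row, run against the same adversary with the same prefix $\tau$, really do induce the same $v_t$, so that the single-step statement is genuinely about a pair of neighboring inputs to one fixed randomized map. All the actual privacy content is outsourced to Theorems~\ref{thm:rpm} and~\ref{thm:amp}; the lemma is essentially the statement that sampling $l$ out of $k$ random strings amplifies the $O(1)$ privacy of \rpm\ down to $O(l/k)$.
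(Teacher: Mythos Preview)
Your proposal is correct and follows essentially the same approach as the paper: both arguments observe that with $v_t$ fixed the binary vectors are deterministic functions of the subsampled random strings, invoke Theorem~\ref{thm:rpm} to get $(\epsilon_{\DP},0)$-DP on the $l$-row subsample, apply amplification via subsampling (Theorem~\ref{thm:amp}), and finish with post-processing for the distance check. Your treatment of the conditioning on the transcript prefix is in fact more careful than the paper's, which dispatches this point in one sentence (``$v_t$ does not provide extra information about $R$'').
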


\begin{proof}
Note that ${\cal T}_t(R)=(v_t,{\bf out}_t(R))$. For any fixed single step $t$, $v_t$ does not provide extra information about $R$, so we only need to consider ${\bf out}_t(R)$. 

{\color{black} We let $\wt {\bf out}_t(R):=\rpm(b^{(1)},\ldots,b^{(l)})$ with parameter $\epsilon_{\DP}$.  As we have argued in the preceding discussion, the binary vectors are determined by $\{r_{(1)},\ldots,r_{(l)}\}\subset R$ which are subsampled from $R$. As $\wt {\bf out}_t(R)$ is $(\epsilon_{\DP},0)$-DP with respect to the subset (Theorem~\ref{thm:rpm}), by Theorem~\ref{thm:amp}, it is $(\frac{6l}{k}\cdot \epsilon_{\DP},0)$-DP with respect to $R$. Finally, observe that ${\bf out}_t(R)$ is a post-processing of $\wt {\bf out}_t(R)$ as it's a deterministic mapping that only depends on the output of $\wt {\bf out}_t(R)$. Hence, we conclude that ${\bf out}_t(R)$ is $(\frac{6l}{k}\cdot \epsilon_{\DP}, 0)$-DP with respect to $R$, as desired.}
\end{proof}
Given that a single step is DP with respect to $R$, we can then perform advanced composition to prove the privacy of ${\cal T}$.

\begin{lemma}
\label{lem:T_DP}
${\cal T}$ is $(\frac{1}{100},\frac{\beta}{100})$-DP with respect to $R$.
\end{lemma}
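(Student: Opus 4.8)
The plan is to obtain Lemma~\ref{lem:T_DP} as a direct consequence of the per-step guarantee of Lemma~\ref{lem:DP_single_step} together with advanced composition (Theorem~\ref{thm:ada_c}). Observe that the overall transcript ${\cal T}(R)={\cal T}_1(R),\ldots,{\cal T}_T(R)$ is exactly a $T$-fold \emph{adaptive} composition: the adversary produces $v_t$ as a function of the transcript released so far, and then the mechanism $R\mapsto {\cal T}_t(R)$ is invoked. Conditioned on the transcript up to time $t-1$ (which in particular fixes $v_t$), Lemma~\ref{lem:DP_single_step} tells us $R\mapsto {\cal T}_t(R)$ is $(\tfrac{6l}{k}\epsilon_{\DP},0)$-DP with respect to $R$, and this holds uniformly over the choice of $v_t$. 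Hence Theorem~\ref{thm:ada_c} applies with the single-mechanism parameters $\epsilon=\tfrac{6l}{k}\epsilon_{\DP}$, $\delta=0$, and the number of folds equal to $T$.

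Next I would instantiate the slack parameter of the composition theorem as $\delta_0:=\beta/100$, which yields that ${\cal T}$ is $(\epsilon_0,\beta/100)$-DP with respect to $R$, where
\begin{align*}
    \epsilon_0 = \sqrt{2T\ln(100/\beta)}\cdot\frac{6l\epsilon_{\DP}}{k} + 2T\Big(\frac{6l\epsilon_{\DP}}{k}\Big)^2 .
\end{align*}
Substituting the chosen value $k = 200\cdot 6l\cdot\epsilon_{\DP}\cdot\sqrt{2T\ln(100/\beta)}$ gives $\tfrac{6l\epsilon_{\DP}}{k} = \tfrac{1}{200\sqrt{2T\ln(100/\beta)}}$, so the first term collapses to exactly $\tfrac{1}{200}$ and the second term equals $\tfrac{1}{200^2\ln(100/\beta)}\le \tfrac{1}{200^2}$, using $\ln(100/\beta)\ge 1$ (which holds since $\beta\le 1$ is a failure probability). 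Therefore $\epsilon_0 \le \tfrac{1}{200}+\tfrac{1}{200^2}\le \tfrac{1}{100}$, and since $\delta=0$ the composed additive term is $\delta_0 + T\delta = \beta/100$, establishing $(\tfrac{1}{100},\tfrac{\beta}{100})$-DP.

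There is no genuinely hard calculation here — it is a parameter plug-in — so the only point I would be careful about is the structural justification that the interaction is legitimately an adaptive composition in the sense demanded by Theorem~\ref{thm:ada_c}: namely, that $v_t$ depends on $R$ \emph{only} through the previously released outputs, so that after conditioning on the prior transcript the map $R\mapsto{\cal T}_t(R)$ is a fixed $(\tfrac{6l}{k}\epsilon_{\DP},0)$-DP mechanism (exactly what Lemma~\ref{lem:DP_single_step} provides for every fixed $v_t$), which is precisely the hypothesis of advanced composition. I would state this explicitly before invoking the theorem to make the reduction airtight.
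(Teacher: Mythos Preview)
Your proposal is correct and follows essentially the same route as the paper: invoke Lemma~\ref{lem:DP_single_step} for the per-step privacy, view ${\cal T}$ as a $T$-fold adaptive composition, apply Theorem~\ref{thm:ada_c} with $\delta_0=\beta/100$, and plug in the choice of $k$ to bound $\epsilon_0\le 1/100$. Your bound on the second term ($\le 1/200^2$) is actually sharper than the paper's (which simply bounds it by $1/200$), and your explicit remark about why the interaction fits the adaptive-composition hypothesis is a nice addition, but structurally the arguments coincide.
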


\begin{proof}
By Lemma~\ref{lem:DP_single_step}, we know that for any fixed $t\in [T]$, ${\cal T}_t(R)$ is $(\frac{6l}{k}\cdot\epsilon_{\DP},0)$-DP with respect to $R$. Further, observe that ${\cal T}$ is an adaptive composition of ${\cal T}_T\circ\ldots\circ {\cal T}_1$, so we can apply Theorem~\ref{thm:ada_c} with $\epsilon=\frac{6l}{k}\cdot \epsilon_{\DP}$ and $\delta=0, \delta_0=\beta/100$. This yields a mechanism that is $(\epsilon_0, \delta_0)$-DP with respect to $R$, where 
\begin{align*}
    \epsilon_0 = & ~ \sqrt{2T\ln(100/\beta)}\cdot \epsilon + 2T\epsilon^2 \\
    \leq & ~  \frac{1}{200}+\frac{1}{200} \\
    = & ~ \frac{1}{100}.
\end{align*}
Thus, we conclude ${\cal T}$ is $(\frac{1}{100},\frac{\beta}{100})$-DP with respect to $R$.
\end{proof}

\vspace{0.2cm}
\noindent{\bf Accuracy: Differentiating Signals from Noise.}
Given the privacy guarantees of ${\cal T}$, we are ready to prove accuracy against an adaptive adversary. Similar to~\cite{bkm+21}, define $\ov v_t=(v_1,v_2,\ldots,v_t)$ to be the query sequence up to time $t$ and consider feeding a random string $r$ (for initialization) and $\ov v_t$ to an oblivious data structure ${\cal A}$. Let ${\cal A}(r,\ov v_t)$ denote the output. Finally, define ${\rm acc}_{\ov v_t}(r)$ to be the indicator of whether ${\cal A}(r, \ov v_t)$ succeeds, i.e., if $v_t$ is a ``yes'' instance, then ${\cal A}(r, \ov v_t)$ indeed outputs a subset of points in $U$ that are at most $cr$ away from $v_t$. We will prove that with high probability, a large constant fraction of \emph{all} data structures succeed.
\begin{lemma}
\label{lem:acc_global}
For any fixed time step $t\in [T]$, we have $\sum_{j=1}^k {\rm acc}_{\ov v_t}(r_j)\geq \frac{4}{5}k$ with probability at least $1-\beta$.
\end{lemma}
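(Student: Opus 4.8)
The plan is to derive Lemma~\ref{lem:acc_global} from the generalization property of differential privacy (Theorem~\ref{thm:general}), bootstrapping off the privacy of the transcript established in Lemma~\ref{lem:T_DP}.

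First I would fix the time step $t$ and define the predicate $h_{\ov v_t}:\{0,1\}^*\to\{0,1\}$ by $h_{\ov v_t}(r):={\rm acc}_{\ov v_t}(r)$; that is, $h_{\ov v_t}(r)=1$ iff a freshly initialized oblivious data structure with random string $r$ succeeds on the query sequence $\ov v_t$. Since the dataset $U$ is fixed, whether $v_t$ is a ``yes''-instance --- and hence the value $h_{\ov v_t}(r)$ for each string $r$ --- is a well-defined function of $\ov v_t$ alone, making no reference to $R$. By the oblivious correctness guarantee of the base data structure, for any fixed query sequence we have $h_{\ov v_t}({\cal D})=\E_{r\sim{\cal D}}[h_{\ov v_t}(r)]\ge\tfrac{9}{10}$, where ${\cal D}$ is the distribution over random strings. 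On the other hand, the empirical average over the actual database $R$ is exactly $h_{\ov v_t}(R)=\tfrac1k\sum_{j=1}^k{\rm acc}_{\ov v_t}(r_j)$, which is the quantity we want to lower bound.

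Next I would observe that the map $R\mapsto h_{\ov v_t}$ is $(\tfrac1{100},\tfrac{\beta}{100})$-DP with respect to $R$. Indeed, the adversary's query $v_t$ (and hence the whole sequence $\ov v_t$) appears as a coordinate of the transcript ${\cal T}(R)$, so $\ov v_t$ is a deterministic post-processing of ${\cal T}(R)$; by Lemma~\ref{lem:T_DP} and the post-processing property of DP, the map $R\mapsto\ov v_t$, and therefore the further deterministic post-processing $R\mapsto h_{\ov v_t}$, is $(\tfrac1{100},\tfrac{\beta}{100})$-DP. I would then apply Theorem~\ref{thm:general} with $\epsilon=\tfrac1{100}$, $\delta=\tfrac{\beta}{100}$, and database size $k$, after checking $\epsilon\in(0,1/3)$, $\delta\in(0,\epsilon/4)$ (using $\beta<1/4$, which holds since $\delta_{\rm fail}$ is a small constant), and the sample-size hypothesis $k\ge\epsilon^{-2}\log(2\epsilon/\delta)=10^4\log(2/\beta)$, which follows from our choice $k=\wt O(\sqrt T\cdot s)$ since $k$ carries $\poly\log(n/\beta)$ factors. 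The theorem then yields that with probability at least $1-\delta/\epsilon=1-\beta$ we have $|h_{\ov v_t}(R)-h_{\ov v_t}({\cal D})|\le 10\epsilon=\tfrac1{10}$; combined with $h_{\ov v_t}({\cal D})\ge\tfrac9{10}$ this gives $\sum_{j=1}^k{\rm acc}_{\ov v_t}(r_j)=k\cdot h_{\ov v_t}(R)\ge k\big(\tfrac9{10}-\tfrac1{10}\big)=\tfrac45 k$, as claimed.

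The main obstacle --- really a conceptual one --- is the apparent circularity: the set of data structures that succeed on $\ov v_t$ depends on the random strings $R$, yet $\ov v_t$ itself was chosen adversarially from outputs that depend on those same strings. This is precisely what the two-stage argument resolves: privacy of the transcript is proven first (Lemma~\ref{lem:T_DP}) without any appeal to utility, and the generalization property of DP then certifies that a DP-bounded adversary cannot overfit its sample, i.e., cannot drive the empirical success rate $h_{\ov v_t}(R)$ far below its oblivious value $\tfrac9{10}$. A secondary technical point requiring care is verifying that $h_{\ov v_t}$ is genuinely a function of $\ov v_t$ only, so that DP of $R\mapsto\ov v_t$ transfers to $R\mapsto h_{\ov v_t}$ by post-processing; as noted, this holds because $U$ is fixed and the success predicate depends only on $U$ and the query point.
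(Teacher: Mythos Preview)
Your proposal is correct and follows essentially the same approach as the paper: both argue that the predicate ${\rm acc}_{\ov v_t}$ is a post-processing of the DP transcript ${\cal T}$ (Lemma~\ref{lem:T_DP}), then invoke the generalization theorem (Theorem~\ref{thm:general}) with $\epsilon=\tfrac1{100}$, $\delta=\tfrac{\beta}{100}$ to get $|h_{\ov v_t}(R)-h_{\ov v_t}({\cal D})|\le\tfrac1{10}$, and combine with the oblivious success rate $\ge\tfrac9{10}$ to conclude. Your write-up is in fact more explicit than the paper's about the post-processing step and the verification of the sample-size hypothesis.
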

\begin{proof}
The proof will rely on the generalization property of DP. First note that ${\rm acc}_{\ov v_t}$ is determined by the transcript ${\cal T}$, as $\ov v_t$ is a substring of ${\cal T}$ and $\ov v_t$ determines ${\rm acc}_{\ov v_t}$. To invoke Theorem~\ref{thm:general}, we first note that each row of $R$ is drawn uniformly, and due to Lemma~\ref{lem:T_DP}, we know that ${\cal T}$ is $(\frac{1}{100},\frac{\beta}{100})$-DP with respect to $R$. Thus, we have that the empirical average of ${\rm acc}_{\ov v_t}$ over $R$ which is $\frac{1}{k}\sum_{j=1}^k {\rm acc}_{\ov v_t}(r_j)$ and the expectation of ${\rm acc}_{\ov v_t}$ over the uniform distribution $\E[{\rm acc}_{\ov v_t}(r)]$ are close. Setting $\epsilon=1/100$, $\delta=\beta/100$ and $k\gg \frac{1}{\epsilon^2}\log(2\epsilon/\delta)$, we conclude
\begin{align*}
    \Pr\left[ \left|\frac{1}{k}\sum_{j=1}^k {\rm acc}_{\ov v_t}(r_j)-\E[{\rm acc}_{\ov v_t}(r)] \right|\geq \frac{1}{10} \right]\leq & ~ \beta.
\end{align*}
It remains to get a good handle on the expectation. If we let ${\cal U}$ denote the uniform distribution we sample $r$ from, then we can write the expectation $\E_{r\sim {\cal U}}[{\rm acc}_{\ov v_t}(r)]=\Pr_{r\sim {\cal U}}[{\rm acc}_{\ov v_t}(r)=1]$ as ${\rm acc}_{\ov v_t}(r)$ is an indicator. The crucial point here is that the probability is taken over the randomness of $r$ when the sequence of queries ${\ov v_t}$ is \emph{fixed}. Thus, we can utilize the property of our oblivious data structure, i.e., $\Pr_{r\sim {\cal U}}[{\rm acc}_{\ov v_t}(r)=1]\geq \frac{9}{10}$. To put it together, we have
\begin{align*}
    \frac{1}{k}\sum_{j=1}^k {\rm acc}_{\ov v_t}(r_j) \geq & ~ \frac{9}{10}-\frac{1}{10} \\
    = & ~ \frac{4}{5}. 
\end{align*}
Thus, we complete the proof.
\end{proof}

To improve the efficiency of querying, recall that we resort to sampling. We prove that sampling does not hurt the accuracy.

\begin{lemma}
\label{lem:acc_sample}
For all $t\in [T]$, let $l$ be the sampled indices of time $t$, we have $\sum_{j=1}^l {\rm acc}_{\ov v_t}(r_{(j)})\geq \frac{3}{4}l$ with probability at least $1-\delta_{\rm fail}$.
\end{lemma}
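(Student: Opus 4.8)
The plan is to combine the global accuracy guarantee of Lemma~\ref{lem:acc_global} with a concentration bound over the sampling of the $l$ indices. First I would condition on the event of Lemma~\ref{lem:acc_global}, namely that $\sum_{j=1}^k {\rm acc}_{\ov v_t}(r_j) \geq \frac{4}{5}k$, which holds with probability at least $1-\beta$; equivalently, at least a $\frac{4}{5}$ fraction of all $k$ data structures succeed on the query prefix $\ov v_t$. Since the $l$ sampled indices are drawn independently and uniformly (with replacement) from $[k]$, each sampled index lands on a ``good'' data structure with probability at least $\frac{4}{5}$, independently across the $l$ draws. Hence $\sum_{j=1}^l {\rm acc}_{\ov v_t}(r_{(j)})$ stochastically dominates a ${\rm Binomial}(l, \frac{4}{5})$ random variable, whose mean is $\frac{4}{5}l$.

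The next step is a Chernoff bound: the probability that this binomial falls below $\frac{3}{4}l$, i.e., deviates below its mean by a relative factor of roughly $\frac{1}{16}$, is at most $\exp(-\Omega(l))$. Since $l = O(s\log^2(n/\beta)) \gg \log(1/\beta)$ (and indeed $\gg \log(T/\delta_{\rm fail})$ because $\beta = \delta_{\rm fail}/T$), this failure probability is at most, say, $\beta/2$. Taking a union bound with the failure event of Lemma~\ref{lem:acc_global}, we get that $\sum_{j=1}^l {\rm acc}_{\ov v_t}(r_{(j)}) \geq \frac{3}{4}l$ fails with probability at most $\beta + \beta/2 \le \delta_{\rm fail}$ for the fixed time step $t$; here I would just need to be mildly careful with the constants so the bound absorbs into $\delta_{\rm fail}$, perhaps re-deriving Lemma~\ref{lem:acc_global} with failure probability $\beta/2$ instead. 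Then a final union bound over all $t \in [T]$ — using $\beta = \delta_{\rm fail}/T$ — yields the ``for all $t \in [T]$'' statement with total failure probability at most $\delta_{\rm fail}$.

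One subtlety I would be careful about: the sampled indices $r_{(1)},\ldots,r_{(l)}$ and the event in Lemma~\ref{lem:acc_global} both depend on the (same) randomness $R$, but the key point is that ${\rm acc}_{\ov v_t}$ is a fixed predicate once $\ov v_t$ is fixed, and Lemma~\ref{lem:acc_global} already handles the dependence of $\ov v_t$ on $R$ via the DP generalization argument. So I can treat the quantity $\frac{1}{k}\sum_j {\rm acc}_{\ov v_t}(r_j)$ as a fixed number $\ge \frac{4}{5}$ on the good event, and then the only remaining randomness is the independent, uniform choice of the $l$ sampling indices — which is exactly what makes the Chernoff bound clean. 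The main (very mild) obstacle is just bookkeeping the union-bound budget so that the Lemma~\ref{lem:acc_global} failure, the sampling-concentration failure, and the union over $T$ time steps all fit inside $\delta_{\rm fail}$; there is no deep difficulty here, it is a routine sampling-concentration argument layered on top of the already-established generalization guarantee.
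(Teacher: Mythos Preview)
Your proposal is correct and follows essentially the same approach as the paper: condition on the event of Lemma~\ref{lem:acc_global}, then apply a concentration inequality (the paper uses Hoeffding, you use Chernoff) to the i.i.d.\ uniform sampling of the $l$ indices. Your treatment is in fact more careful than the paper's about the union-bound bookkeeping over all $t\in[T]$ and about combining the two failure events into the final $\delta_{\rm fail}$ budget.
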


\begin{proof}
Fix any $t\in [T]$, we condition on the event that Lemma~\ref{lem:acc_global}.  Therefore $\sum_{j=1}^k {\rm acc}_{\ov v_t}(r_j)\geq \frac{4}{5}k$. This means that any i.i.d. sample (with replacement) from these $k$ indices succeeds with probability at least $4/5$. Consequently $\E[\sum_{j=1}^l {\rm acc}_{\ov v_t}(r_{(j)})]\geq \frac{4}{5}l$. By Hoeffding's bound, we have
\begin{align*}
    \Pr\left[\left|\sum_{j=1}^l {\rm acc}_{\ov v_t}(r_{(j)}) - \E[\sum_{j=1}^l {\rm acc}_{\ov v_t}(r_{(j)})] \right| \geq \alpha\right] \leq & ~ 2\exp(-\frac{2\alpha^2}{l}),
\end{align*}
and setting $\alpha=\frac{1}{5}l$, we conclude that $\sum_{j=1}^l {\rm acc}_{\ov v_t}(r_{(j)})\geq \frac{3}{5}l$ with probability at least $1-\exp(-\Theta(l))$. 
\end{proof}

We present an abstract way to reason over the data structure task. Fix the preprocessed dataset $U\in \R^d$. Consider any (possibly adaptive) query $v\in \R^d$,  and let $b\in \{0,1\}^n$ denote the characteristic vector of $v$ with respect to $U$, in terms of $(c, r)$-ANN: if there exists some $u_i\in U$ with $\|u_i-v\|\leq r$, then: for any $j\in [n]$ with $\|u_j-v\|\leq cr$, we set $b_j=1$. Note that if no point in $U$ is $r$-close to $v$, then $b$ is either the all-0s vector or it has some nonzero entries that are $cr$-close to $v$. Similarly, for each data structure ${\cal A}_1,\ldots,{\cal A}_k$ and their corresponding random strings $r_1,\ldots,r_k$, we use $b^{(i)}$ to denote the characteristic vector \emph{under the random string $r_i$}, i.e., $b^{(i)}_j=1$ if and only if the LSH ${\cal A}_i$ discovers that $\|u_j-v\|\leq cr$. Note that it is completely possible that the vector $b$ has large support size, but some $b^{(i)}={\bf 0}_n$. 

Now, imagine we have an $O(n)$ time budget. Then our algorithm is essentially identical to reporting the noisy max: we first sum over all characteristic vectors: $b^{\rm sum}=\sum_{i=1}^k b^{(i)}$. Note that the $i$-th entry of $b^{\rm sum}$ counts the total number of successful data structures that report point $u_i$. Then, we sample $n$ independent Laplacian noise variables of scale $1/\epsilon_{\rm DP}$, and then add these noise variables to each entry of $b^{\rm sum}$. Then we simply report the maximum index of the noisy counts {\color{black} with a simple post-processing by directly computing the distance between the corresponding point and the query}. The algorithm is naturally $(\epsilon_{\DP},0)$-DP with respect to the random strings $r_1,\ldots,r_k$!

To prove the utility of the algorithm, i.e., that we can actually differentiate the signal from the noise, we require the following tail bound for exponential random variables:

\begin{lemma}
\label{lem:laplace_tail}
Let $Y\sim {\rm Exp}(b)$, then
\begin{align*}
    \Pr[Y\geq t\cdot b]\leq & ~ \exp(-t)
\end{align*}
\end{lemma}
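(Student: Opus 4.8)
The plan is to prove this elementary tail bound by directly computing the survival (complementary CDF) function of the exponential distribution and substituting. First I would fix the convention that in $Y\sim{\rm Exp}(b)$ the parameter $b$ is the \emph{scale}, so that the density is $\frac{1}{b}e^{-x/b}$ on $[0,\infty)$ and the rate is $1/b$; this is the convention under which the noise ${\rm Exp}(1/\epsilon_{\DP})$ added inside \rpm has scale $1/\epsilon_{\DP}$. Under this convention, integrating the density gives the closed form $\Pr[Y\geq s]=\int_s^\infty \frac{1}{b} e^{-x/b}\,\d x = e^{-s/b}$ for all $s\geq 0$, which is the only computation required.

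Next I would substitute $s=t\cdot b$. When $t\geq 0$ this is a valid choice of $s$ and yields $\Pr[Y\geq tb]=e^{-tb/b}=e^{-t}$ — in fact an equality, which is slightly stronger than the stated inequality. The leftover case $t<0$ is immediate: then $\exp(-t)>1\geq \Pr[Y\geq tb]$, so the bound holds vacuously. Combining the two cases gives the lemma.

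There is no serious obstacle here; the one point I would be careful about is the parametrization of ${\rm Exp}(\cdot)$, since the clean form $\exp(-t)$ only emerges when $b$ is the scale (treating $b$ as the rate would produce $e^{-tb^2}$ instead). A Chernoff/MGF argument would be clumsier, since $\E[e^{\theta Y}]$ blows up as $\theta\uparrow 1/b$, so directly evaluating the survival function is the natural route. Finally, I would keep in mind how the lemma is used downstream: a union bound over the $n$ noisy counts in \rpm, together with the choice $t=\Theta(\log(n/\beta))$, shows that all $n$ exponential noises are simultaneously at most $O(b\log(n/\beta))$ with probability at least $1-\beta$, which is exactly what lets the algorithm separate the (sparse, large) signal counts from the noise.
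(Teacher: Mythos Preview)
Your proof is correct and complete; the paper states Lemma~\ref{lem:laplace_tail} without proof (it is treated as a standard fact), so there is nothing substantive to compare against. Your direct computation of the survival function, together with the trivial case $t<0$, is exactly the natural argument.

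One remark on the parametrization issue you flagged: the paper's own Definition of ${\rm Exp}(\lambda)$ actually uses the \emph{rate} convention (PDF $\lambda e^{-\lambda x}$), under which, as you note, the bound would read $e^{-tb^2}$ rather than $e^{-t}$. Your choice of the scale convention is the one that makes the lemma statement literally true (with equality), and it is also consistent with how the lemma is applied downstream (noise ${\rm Exp}(1/\epsilon_{\DP})$ of magnitude $\Theta(\log n)$). So your reading is the intended one; the discrepancy is an inconsistency in the paper, not in your argument.
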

Let us choose $t=C\log n$ for some absolute constant $C>0$. This means that even if we choose $\epsilon_{\DP}$ to be $1/2$, the magnitude of the noise will be roughly $\Theta(\log n)$. Consider a simple case, where the query $v$ only has one point $u\in U$ with $\|u-v\|_2\leq r$ and $\|u-v\|_2\leq cr$. Then, for a successful data structure ${\cal A}_i$, it will have exactly one non-zero entry. Let us assume that we sample $l=O(\log^2 n)$ independent data structures to perform the mechanism, using $\epsilon_{\DP}=1/2$. Then by amplification via subsampling, the algorithm is $\frac{6l}{k}\cdot\epsilon_{\DP}$-DP. Performing an advanced composition over all $T$ queries, we conclude the algorithm is $(\frac{1}{100},\frac{1}{100\cdot\poly(n)})$-DP. We can then use the generalization property of DP to conclude that, with probability at least $1-1/\poly(n)$, at least a $3/4$-fraction of the sampled data structures succeed. 

Conditioning on these events, we argue that the noisy max is indeed the correct answer: we know that a large constant fraction of data structures succeed, therefore the corresponding point in $b^{\rm sum}$ has count $\Omega(\log^2 n)$. On the other hand, the exponential noise itself has magnitude at most $O(\log n)$. This means that adding the exponential noise, we can still differentiate the point. We formalize this argument with the following lemma.

\begin{lemma}
\label{lem:correct}
For all $t\in [T]$, the algorithm outputs a point $u_t$ that is a $c$-ANN for the adaptive query $v_t$ if $v_t$ satisfies Assumption~\ref{assumption:general}, with probability at least $1-\delta_{\rm fail}-1/\poly(n)$.
\end{lemma}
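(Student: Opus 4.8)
The plan is to combine the accuracy guarantee already secured for the sampled copies (Lemma~\ref{lem:acc_sample}) with a pigeonhole argument that uses Assumption~\ref{assumption:general}, and then an exponential tail bound to separate signal from noise. Fix $t\in[T]$. First I would condition on the event of Lemma~\ref{lem:acc_sample}, so that at least a constant fraction (say $\ge\frac{3}{5}l$) of the sampled data structures ${\cal A}_{(1)},\ldots,{\cal A}_{(l)}$ succeed on the query prefix $\ov v_t$; this holds for all $t$ simultaneously with probability at least $1-\delta_{\rm fail}$. The exponential noise used inside \rpm\ is drawn independently of the random strings $r_1,\ldots,r_k$, so on this event we may now reason purely over the noise.

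Next I would split into the two cases of Definition~\ref{def:cr_ann}. If $v_t$ is a ``no'' instance, there is nothing to prove: since $(c,r)$-ANN data structures produce no false positives, every reported index corresponds to a point within distance $cr$, so the post-processing step (output $u_i$ if $\|u_i-v_t\|\le cr$, else NULL) always returns a valid answer. If $v_t$ is a ``yes'' instance, then every successful copy reports at least one point of $U\cap B(v_t,cr)=f_{v_t}(U)$, and by Assumption~\ref{assumption:general} this set has size at most $s$. Hence the characteristic vectors $b^{(1)},\ldots,b^{(l)}$, and thus their sum $b^{\rm sum}$, are supported on these $\le s$ coordinates, while each of the $\ge\frac{3}{5}l$ successful copies adds $1$ to at least one of them. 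By pigeonhole there is a coordinate $j^*$ with $u_{j^*}\in B(v_t,cr)$ and $b^{\rm sum}_{j^*}\ge\frac{3l}{5s}=\Omega(\log^2(n/\beta))$, using $l=\Theta(s\log^2(n/\beta))$.

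It then remains to show the noisy argmax falls on a support coordinate. By Lemma~\ref{lem:laplace_tail} with threshold $\Theta(\log n)$ and a union bound over the $n$ coordinates, with probability $1-1/\poly(n)$ every one of the $n$ i.i.d.\ ${\rm Exp}(1/\epsilon_{\DP})$ noise terms is at most $C\log n$ for an absolute constant $C$ (recall $\epsilon_{\DP}=1/2$). Conditioned on this, any out-of-support coordinate $j$ has noisy count $g_j\le C\log n$, whereas $j^*$ has noisy count at least $b^{\rm sum}_{j^*}\ge\frac{3l}{5s}>C\log n$ by the choice of $l$; since the noise is nonnegative, the argmax cannot be an out-of-support coordinate. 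Therefore the reported index $i$ satisfies $\|u_i-v_t\|\le cr$, the post-processing outputs $u_i$, and $u_i$ is a valid $c$-ANN for $v_t$. Finally I would union bound the noise event over all $T\le\poly(n)$ queries, which keeps the total failure probability at $\delta_{\rm fail}+1/\poly(n)$, as claimed.

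The main conceptual obstacle — the two-stage ``privacy then utility'' structure, where DP of the transcript is what licenses the claim that a constant fraction of copies succeed against the \emph{adaptive} adversary — has already been dispatched by Lemma~\ref{lem:T_DP} and Lemma~\ref{lem:acc_sample}. So the genuinely new content here is modest: the pigeonhole over the $\le s$ support coordinates (this is precisely where Assumption~\ref{assumption:general} enters, generalizing the ``single near neighbor'' special case of the overview), and checking that the post-processing gracefully absorbs ``no'' instances and any stray argmax. The one quantitative point to be careful about is the constant in $l=\Theta(s\log^2(n/\beta))$: we need $\frac{3l}{5s}$ to strictly dominate the $\Theta(\log n)$ noise ceiling, which forces $l$ to carry at least a full $\log n$ factor beyond $s$ — comfortably provided by the stated choice.
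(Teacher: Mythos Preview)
Your proposal is correct and follows essentially the same approach as the paper: condition on Lemma~\ref{lem:acc_sample}, pigeonhole over the $\le s$ support coordinates from Assumption~\ref{assumption:general} to find a count of order $\log^2(n/\beta)$, bound all $n$ exponential noises by $O(\log n)$ via Lemma~\ref{lem:laplace_tail} and a union bound, and let the post-processing absorb the ``no'' case. The paper's proof spells out the ``no'' case in slightly more detail (splitting on whether $\|b^{\rm sum}\|_\infty$ is large), but your observation that post-processing handles it directly is equivalent.
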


\begin{proof}
By Lemma~\ref{lem:acc_sample}, among the $l$ samples, at least a $\frac{3}{4}$-fraction of the data structures succeed, i.e., if there exists an $r$-near neighbor of $v_t$, then at least $\frac{3}{4}l$ data structures have their characteristic vectors with at least 1 non-zero entry. By Assumption~\ref{assumption:general}, the support size of each characteristic vector can be at most $s$; since we set $l=O(s\log^2(n/\beta))$, by the pigeonhole principle, there exists at least one entry in $b^{\rm sum}$ with magnitude $\Omega(\log^2(n/\beta))$. Now, conditioning on the event that all $n$ Laplacian noise variables have magnitude $O(\log n)$; this holds with probability $1-1/\poly(n)$. Thus, we know that there exists at least one entry of $b^{\rm sum}$ which has magnitude $\Omega(\log^2 n)$ and we can differentiate this entry from all the noise added. 

{\color{black} Now, consider the case that there is no point $u\in U$ with $\|u-v_t\|_2\leq r$. Then the data structure is allowed to either return points that are $cr$-near neighbors of $v_t$, or return nothing. Since the base data structure either outputs a false negative (a $cr$-near neighbor) or outputs nothing, we analyze two cases:
\begin{itemize}
    \item Case 1: $\|b^{\rm sum}\|_\infty \geq C\log^2 n$ for some fixed constant $C$. In this case, we know that the max index before adding the noise must correspond to a $cr$-near neighbor. Hence,  adding the noise and outputting the noisy max will not hide this large entry, and so we can return it and satisfy the specification of an $c$-ANN data structure.
    \item Case 2: $\|b^{\rm sum}\|_\infty < C\log^2 n$. In this case, it must be the case there is no $r$-near neighbor. Regarding the $\rpm$ mechanism, since the magnitude of the signal in this case is $O(\log n)$ and with high probability, the  exponential noise has magnitude $O(\log n)$, we have that $\rpm$ outputs a noisy index (meaning that the corresponding index is not a $cr$-near neighbor). In the post-processing phase, we ensure the algorithm outputs only a $cr$-near neighbor by checking the distance. 
\end{itemize}
This concludes the proof of correctness.
}

To compute the failure probability, note that a constant fraction of the data structures succeed with probability at least $1-\delta_{\rm fail}$, and all exponential random variables have magnitude $\Theta(\log n)$ with probability at least $1-1/\poly(n)$. A union bound concludes the bound on the failure probability.
\end{proof}

Of course, this algorithm is by no means efficient --- it takes $O(n)$ time to generate all $n$ noise variables per query, making the sublinear query time linear. In the next section, we will show it is sufficient to generate $s$ exponential variables for the support of $b^{\rm sum}$, and take the noisy max over these entries.

\subsection{Sparse Noise and Order Statistics: A Fast Algorithm}

{\color{black} We make the following observation for the \textsc{ReportOneSidedNoisyArgMax} algorithm: if $b^{\rm sum}={\bf 0}_n$, then the entry with the max noise must be the one contains $X_{(n)}$, the largest order statistics of the exponential distribution. If $b^{\rm sum}$ has $s$ nonzero entries, then there are two cases: if $X_{(n)}$ is among the nonzero entries, since those entries are all positive, the max entry must be within the $s$ nonzero entries; if $X_{(n)}$ is not among the nonzero entries, then the max entry is among the entry contains $X_{(n)}$, and the nonzero entry with the largest magnitude after adding noises. Note that the first event happens with probability $\frac{s}{n}$, and for this case, we generate $s-1$ i.i.d. noises until they do not exceed $X_{(n)}$ so that they obey the order statistics distribution. For the second case, we generate $s$ i.i.d. noises until they do not exceed $X_{(n)}$, randomly assign a zero entry to $X_{(n)}$, and output the max between $X_{(n)}$ and the largest nonzero noisy entry. Note that this algorithm is extremely efficient --- according to Fact~\ref{fact:generate_order}, $X_{(n)}$ can be generated in $O(1)$ time, and the above procedure runs in $O(s \log n)$ time with high probability, proportional to the support size. }

\begin{fact}
\label{fact:geometric_tail}
Let $Z$ be a geometric random variable with success probability $p\in (0, 1)$, then for any positive integer $k$, we have
\begin{align*}
    \Pr[Y>k] \leq & ~ \exp(-k\log(1/(1-p)))
\end{align*}
\end{fact}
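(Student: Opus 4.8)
\textbf{Proof proposal for Fact~\ref{fact:geometric_tail}.}
The plan is to simply unwind the definition of a geometric random variable and rewrite the resulting power as an exponential. Recall that if $Z$ is geometric with success probability $p$, then $Z$ is the index of the first success in an infinite sequence of i.i.d.\ Bernoulli$(p)$ trials. The event $\{Z > k\}$ is therefore exactly the event that the first $k$ trials all fail, and since the trials are independent and each fails with probability $1-p$, we get $\Pr[Z > k] = (1-p)^k$ (if instead one adopts the convention that $Z$ counts the number of failures before the first success, the same argument gives $\Pr[Z > k] = (1-p)^{k+1} \le (1-p)^k$, so the bound only improves).

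Next I would convert the bound into the stated exponential form. Writing $(1-p)^k = \exp\!\big(k\ln(1-p)\big) = \exp\!\big(-k\ln(1/(1-p))\big)$, which uses only $\ln(1-p) = -\ln(1/(1-p))$ and that $1-p \in (0,1)$ so the logarithm is well defined and negative. This yields $\Pr[Z > k] \le \exp(-k\log(1/(1-p)))$, as claimed; in fact the first convention makes this an equality.

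There is no real obstacle here: the statement is an immediate consequence of the closed-form survival function of the geometric distribution, and the only "work" is the algebraic identity relating $(1-p)^k$ to $\exp(-k\log(1/(1-p)))$. The one point worth stating explicitly in the write-up is which convention for the geometric distribution is in force, so that the exponent $k$ versus $k+1$ is unambiguous; in either case the displayed inequality holds. This fact will presumably be combined with a union bound over the $\wt O(s)$ rejection-sampling steps in the fast sparse-argmax mechanism to show that the number of resampled exponential variables is $O(s\log n)$ with high probability.
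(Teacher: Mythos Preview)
Your proposal is correct and essentially identical to the paper's proof: both compute $\Pr[Z>k]=(1-p)^k$ directly from the definition and then rewrite $(1-p)^k=\exp(-k\log(1/(1-p)))$. Your additional remarks on the two geometric conventions and the downstream application are fine but not needed for the argument.
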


\begin{proof}
Note that by definition,
\begin{align*}
    \Pr[Y>k] = & ~ (1-p)^k \\
    = & ~ \exp(k\log(1-p)) \\
    = & ~ \exp(-k\log(1/(1-p))). \qedhere
\end{align*}
\end{proof}

\begin{lemma}
Let ${\cal A}_1, {\cal A}_2$ be two algorithms, with the following behavior. Let $s>0$:
\begin{itemize}
    \item ${\cal A}_1$, at time $t$, it receives an $s$-sparse vector $u\in \mathbb{N}^n$ and let $U_t \subseteq [n]$ denote the support of $u$. We add a dense exponential noise vector with parameter $2$, denoted by $\eta \in \R^n$ and add this to $u \in \R^n$, and output index $i_*$ such that $i_* = \arg\max_{j \in [n]} (u + \eta)_j $;
    {\color{black}\item ${\cal A}_2$, at time $t$, it receives an $s$-sparse vector $u\in \mathbb{N}^n$ and let $U_t\subseteq [n]$ denote the support of $u$. We flip a biased coin with probability of being head $\frac{s}{n}$. If head, we generate a max noise $X$ from the distribution $X_{(n)}$ of parameter $2$, and then repeatedly generating $s-1$ i.i.d. exponential noises of parameter $2$ until none of them exceed $X$, and let $\eta_{1},\ldots,\eta_{s-1}$ denote these noises, form vector $\eta\in \R^n$ with the support being $U_t$, and the values being randomly assigned from $\eta_1,\ldots,\eta_{s-1}, X$. Then output $i_*=\arg\max_{j\in [n]}(\eta+\mu)_j$. If the coin flips tail, then generate $X$ from $X_{(n)}$, and generate $s$ i.i.d. exponential noises until none of them exceed $X$. Randomly assign $X$ to one of the non-support entry denoted by $i_1$, and let $\eta$ denote the noises for the support, output $i_1$ if $X>\max_{j\in [n]}(\mu+\eta)_j$ and output $\arg\max_{j\in [n]}(\mu+\eta)_j$ otherwise.}

\end{itemize}
Then, ${\cal A}_1$ and ${\cal A}_2$ have the same output distribution. Moreover, ${\cal A}_2$ runs in $O(s\log n)$ time.
\end{lemma}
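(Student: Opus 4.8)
The plan is to couple the randomness used by $\mathcal{A}_1$ and $\mathcal{A}_2$ and show that the reported index $i_*$ has the same law in both, by splitting $\mathcal{A}_1$'s dense exponential vector according to \emph{where} its maximum occurs. Write $M := \max_{j\in[n]}\eta_j$ and $J^* := \arg\max_{j\in[n]}\eta_j$ (a.s.\ unique, since $\mathrm{Exp}(2)$ is continuous). First I would show that $\mathcal{A}_1$'s output depends only on the triple $(J^*, M, (\eta_j)_{j\in U_t})$: since $u_j\in\mathbb{N}$ with $u_j\geq 1$ on $U_t$ and $u_j=0$ off $U_t$, every coordinate $j\notin U_t\cup\{J^*\}$ satisfies $(u+\eta)_j=\eta_j<M\le\max_{j'}(u+\eta)_{j'}$ and so never attains the argmax. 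Hence: if $J^*\in U_t$, then $(u+\eta)_{J^*}=u_{J^*}+M>M$, so $i_*=\arg\max_{j\in U_t}(u_j+\eta_j)$; if $J^*\notin U_t$, then $i_*=J^*$ when $M>\max_{j\in U_t}(u_j+\eta_j)$ and $i_*=\arg\max_{j\in U_t}(u_j+\eta_j)$ otherwise.

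Next I would identify the law of this triple using standard order-statistic facts for $n$ i.i.d.\ continuous variables: $J^*\sim\mathrm{Unif}[n]$ is independent of $M$; $M$ has the law of the $n$-th order statistic $X_{(n)}$ of $\mathrm{Exp}(2)$; and conditioned on $(J^*,M)$ the remaining $n-1$ coordinates are i.i.d.\ $\mathrm{Exp}(2)$ truncated to $[0,M)$. Consequently $\Pr[J^*\in U_t]=|U_t|/n=s/n$, which is exactly $\mathcal{A}_2$'s head probability. Conditioned on $\{J^*\in U_t\}$: $J^*$ is uniform on $U_t$, $M\sim X_{(n)}$, and the other $s-1$ coordinates of $U_t$ carry i.i.d.\ $\mathrm{Exp}(2)$-conditioned-on-$[0,M)$ values; this is exactly what $\mathcal{A}_2$'s head branch produces, since the rejection loop (regenerate until all are $<X$) returns $s-1$ values i.i.d.\ $\mathrm{Exp}(2)$ conditioned on all being $<X$, which by independence equals i.i.d.\ $\mathrm{Exp}(2)$-conditioned-on-$[0,X)$, and the uniformly random assignment of $\{X,\eta_1,\ldots,\eta_{s-1}\}$ to $U_t$ matches the exchangeability of $(\eta_j)_{j\in U_t}$ under $\mathcal{A}_1$. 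In both algorithms the output is then $\arg\max_{j\in U_t}(u_j+\eta_j)$. Conditioned on $\{J^*\notin U_t\}$: $J^*$ is uniform on $[n]\setminus U_t$, $M\sim X_{(n)}$, and all $s$ coordinates of $U_t$ carry i.i.d.\ $\mathrm{Exp}(2)$-conditioned-on-$[0,M)$ — precisely $\mathcal{A}_2$'s tail branch (with $X=M$, $i_1=J^*$), where both output $i_1$ iff $X>\max_{j\in U_t}(u_j+\eta_j)$. Combining the two cases through the coin yields equality of the output distributions.

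For the runtime of $\mathcal{A}_2$: the coin flip and the final argmax over $U_t$ cost $O(s)$; drawing $X\sim X_{(n)}$ costs $O(1)$ by Fact~\ref{fact:generate_order}, as $\mathrm{Exp}(2)$ has a closed-form CDF. The only nontrivial cost is the rejection loop, which I would bound as follows. Since $\Pr[X_{(n)}<c]=(1-e^{-2c})^{n}$ decays exponentially in $n$ for any fixed $c$, with probability $1-2^{-\Omega(n)}$ we have $X\ge 1$, so each fresh $\mathrm{Exp}(2)$ sample lands below $X$ with probability $1-e^{-2X}\ge 1-e^{-2}>\tfrac12$; implementing the loop coordinate-by-coordinate, each of the $\le s$ coordinates needs a geometric number of trials with success probability $\ge\tfrac12$, so by Fact~\ref{fact:geometric_tail} and a union bound over coordinates all of them finish within $O(\log n)$ trials with probability $1-1/\poly(n)$, for a total of $O(s\log n)$ work. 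Hence $\mathcal{A}_2$ runs in $O(s\log n)$ time with high probability.

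The main obstacle is the second step: verifying that conditioning $\mathcal{A}_1$'s dense noise on the \emph{location} and \emph{value} of its maximum splits exactly into $\mathcal{A}_2$'s head/tail branches — in particular that the $s$ (or $s-1$) coordinates of $U_t$ that $\mathcal{A}_2$ re-samples carry the correct conditional law (i.i.d.\ truncated exponentials, with one coordinate pinned to the max in the head case), and that $\mathcal{A}_1$'s reported index genuinely does not depend on the coordinates $\mathcal{A}_2$ never touches. Once that decomposition is pinned down, the rest is bookkeeping with the standard max/order-statistic identities and the two tail bounds already stated.
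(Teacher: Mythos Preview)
Your distributional argument is correct and takes a cleaner route than the paper. The paper introduces an intermediate algorithm $\mathcal{A}_2'$ (place $X\sim X_{(n)}$ at a uniform index, then fill the remaining $n-1$ entries with $\mathrm{Exp}(2)$ conditioned to be below $X$) and verifies by a direct joint-CDF computation that $\mathcal{A}_2'$ reproduces the full $n$-dimensional noise law of $\mathcal{A}_1$; only afterwards does it collapse to the head/tail cases of $\mathcal{A}_2$. You instead condition $\mathcal{A}_1$ directly on the location $J^*$ and value $M$ of its maximum, invoke the standard order-statistic facts ($J^*\perp M$, $M\sim X_{(n)}$, remaining coordinates i.i.d.\ truncated), and observe that the head/tail branches of $\mathcal{A}_2$ are exactly these two conditional laws restricted to $U_t$. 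This avoids the integral calculus entirely and makes the reason for the $s/n$ coin transparent. Your observation that $\mathcal{A}_1$'s output depends only on $(J^*,M,(\eta_j)_{j\in U_t})$ is the key reduction step, and it is sound.

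Your runtime analysis, however, does not quite address $\mathcal{A}_2$ as stated. The lemma specifies \emph{batch} rejection (regenerate all $s-1$ or $s$ samples until all lie below $X$), whereas you switch to a coordinate-by-coordinate implementation. Conditioning on $X\ge 1$ gives each single sample a success probability $\ge 1-e^{-2}$, but the batch success probability is then only $\ge(1-e^{-2})^s$, which is not bounded away from zero for growing $s$. The paper handles this by computing the batch acceptance probability unconditionally: for $Y_{(s)}$ the max of $s$ fresh exponentials independent of $X\sim X_{(n)}$, one has $\Pr[Y_{(s)}\le X]=\tfrac{n}{n+s}\ge\tfrac12$ (via the substitution $u=1-e^{-\lambda x}$), so the number of batches is geometric with parameter $\ge\tfrac12$ and hence $O(\log n)$ with high probability. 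Your coordinate-wise variant is distributionally equivalent and does achieve $O(s\log n)$, so the headline claim survives, but to match the lemma as written you would need either the $n/(n+s)$ computation or an explicit remark that the rejection loop may be implemented per coordinate without changing the output law.
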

\begin{proof}

{\color{black}  We will prove that ${\cal A}_2$ has exactly the same output distribution as ${\cal A}_1$. We do so by analyzing
an hypothetical algorithm ${\cal A}_2'$ that generates $X$ from $X_{(n)}$ first and assign it to a uniformly chosen entry $i_*$, then generate the noises for all other entries conditioning they are smaller than $X$. We will prove the distribution of the noise generated by ${\cal A}_2'$ is identical to ${\cal A}_1$, then it's easy to see that the output distribution of ${\cal A}_2$ is the same as ${\cal A}_1$.
}

{\color{black}
Suppose the noise $\eta_i$ is generated from distribution with PDF $f$ and CDF $F$, then the CDF of $X_{(n)}$ is  $G(x) = F(x) ^ n$, and  PDF 
 $g(x) = n F(x) ^ {n-1} f(x)$. Let the joint CDF of noise generated by ${\cal A}_1$ be $F_1(x_1, \ldots, x_n) = F(x_1)\cdot F(x_2) \cdot \ldots \cdot F(x_n)$, and the PDF noise generated by ${\cal A}_2'$ be $F_2(x_1, \ldots, x_n)$. Let the sorted array of $x_1, \dots, x_n$ be $x_1 ^ *, \dots, x_n^ *$, and $x_0^ * = -\infty$ we have :

\begin{align*}
    F_2(x_1, \dots,x_n) = & ~\frac 1n \sum_{i = 1} ^ n \int_0 ^ {x_i} g(x) \prod_{x_j < x} \frac{F(x_j)}{F(x)}~\mathrm{d}x
    \\ = &~ \sum_{i = 1} ^ n \sum_{j = 1} ^ i \int_{x_{j - 1} ^ *} ^ {x_j ^ *} f(x) F(x) ^ {n-1} \prod_{k = 1} ^ {j-1} \frac{F(x_k ^ *)}{F(x)}~\mathrm{d}x  \\ = & ~\sum_{j  =1} ^ n (n-j+1) \int_{x_{j-1} ^ *}^ {x_j ^ *} f(x)F(x) ^ {n-j} \prod_{k = 1} ^ {j-1} F(x_k^ *)~\mathrm{d}x \\ = &~ \sum_{j =1} ^ n \Big(\prod_{k = 1} ^ {j - 1} F(x_k ^ *) \Big) \cdot \Big(F(x_j ^ *) ^ {n-j+1} - F(x_{j -1} ^ *) ^ {n-j+1}\Big) \\ = & ~F(x_1 ^ *) \cdot \dots \cdot F(x_n ^ *) \\
    = &~ F_1(x_1, \dots, x_n)
\end{align*}

  Hence, the noise distributions of these two algorithms are identical. 
  }

{\color{black}
 To see ${\cal A}_2$ and ${\cal A}_2'$ have the same noise distribution, consider two cases. If the max noise $X$ is in the support (happens with probability $\frac{s}{n}$), then we know that the max index must be within the support. For the noise distribution, it's easy to see that as long as all $\eta_i$'s are smaller than $X$, then the noise distributions are identical. For the tail case, the reasoning is similar, except that our maximum must be over $X$ and the noisy entries in the support.

 Regarding the running time, we only need to examine the probability that the noises $\eta_1,\ldots,\eta_{s-1}$ or $\eta_1,\ldots,\eta_{s}$ do not exceed $X$. We without loss of generality prove for the case where we generate $s$ independent exponential noises. Let $\lambda$ be the parameter, note that this is equivalent to the probability that $Y_{(s)}\leq X_{(n)}$, where $Y_{(s)}$ is the max order statistics for another independent sequence of exponential noises. The CDF of $Y_{(s)}$ is $F_{Y_{(s)}}(y)=(1-e^{-\lambda y})^s$, and the PDF of $X_{(n)}$ is $f_{X_{(n)}}(x)=n\lambda e^{-\lambda x}(1-e^{-\lambda x})^{n-1}$, then we have
 \begin{align*}
     \Pr[Y_{(s)}\leq X_{(n)}] = & ~ \int_0^\infty \Pr[Y_{(s)}\leq x] f_{X_{(n)}}(x)~\d x \\
     = & ~ \int_{0}^\infty (1-e^{-\lambda x})^s n\lambda e^{-\lambda x}(1-e^{-\lambda x})^{n-1}~\d x \\
     = & ~ \int_0^\infty n\lambda e^{-\lambda x}(1-e^{-\lambda x})^{n+s-1}~\d x
 \end{align*}
 we do a change of variable $u=1-e^{-\lambda x}$, so $\d u=\lambda e^{-\lambda x} \d x$, and for $x=0$, $u=0$ and for $x=\infty$, $u=1$, therefore
 \begin{align*}
     \int_0^\infty n\lambda e^{-\lambda x}(1-e^{-\lambda x})^{n+s-1}~\d x = & ~ n \int_0^1 u^{s+n-1}~\d u \\
     = & ~ \frac{n}{n+s},
 \end{align*}
 similarly, for $s-1$ noises, the probability is $\frac{n}{n+s-1}$. In both cases, we have that the success probability is at least $\frac{1}{2}$, and note that this is a geometric random variable $Z$, by Fact~\ref{fact:geometric_tail}, we have that 
 \begin{align*}
     \Pr[Z>k] \leq & ~ \exp(-k),
 \end{align*}
 set $k=c\log n$ for some large enough constant $c$ so that $k$ is an integer, we have this happens with probability at most $1/\poly(n)$, hence with high probability, we only need to generate the noises for $k=O(\log n)$ times. Moreover, $X$ can be efficiently generated via inverse CDF sampling, so the overall runtime for generating these noises is $O(s\log n)$, as desired. \qedhere
}

\end{proof}
\section{Speeding Up Updates via Batching}
\label{sec:batch}

In this section, we focus on developing a fast update procedure for decremental $\ell_2$ LSH against an adaptive adversary. This is crucial for applications such as online weighted matching (see Section~\ref{sec:intro:matching}).

\subsection{Adaptive Low-Dimensional \texorpdfstring{$\ell_2$}{} LSH}

We start by reviewing the algorithm for low-dimensional $\ell_2$ LSH, against an \emph{oblivious adversary}. Throughout, we let $\epsilon, \delta\in (0,1)$ denote the precision and failure probability.
\paragraph{Initialization.}
\begin{itemize}
    \item Prepare one Johnson-Lindenstrauss \cite{jl84} transform $S:\R^d \rightarrow \R^m$ where $m=O(\epsilon^{-2}\log (n/\delta))$.
    \item Compute $SU$.
    \item Initialize an $(c,r)$-$\ell_2$ LSH data structure on $SU$.
\end{itemize}
\paragraph{Query.}
\begin{itemize}
    \item Given query point $v\in \R^d$, first compute $Sv$.
    \item Query the LSH with $Sv$, output the corresponding point.
\end{itemize}
\paragraph{Deletion.}
\begin{itemize}
    \item Given a point $u\in U$ to-be-deleted, first compute $Su$.
    \item Locate $Su$ in the hash buckets of the LSH, remove $Su$.
\end{itemize}
The deletion procedure first computes the embedded point in time $kd=O(\epsilon^{-2}d\log(n/\delta))$, then locates the hash bucket in $O(n^\rho)$ time. In order to make such data structure adaptive, we need to create $\sqrt T\cdot s$ independent copies, and na\"ively update all $\sqrt T\cdot s$ data structures takes time (up to polylogarithmic factors and assume $\epsilon=O(1)$):
\begin{align*}
    \sqrt{T}\cdot s\cdot (d+n^\rho)
\end{align*}
While the term $\sqrt T\cdot s\cdot n^\rho$ seems to be unavoidable as we will have to update all LSH data structures, the first step of applying Johnson-Lindenstrauss could be further acclerated via batching and fast rectangular matrix multiplication.

\subsection{Faster Update via Batching}
Recall that we define $\theta(a,b)$ for $a, b>0$ to be the value such that $\Tmat(a,b,\theta(a,b))=(ab)^{1+o(1)}$. Note that $\theta$ is monotone in its argument, i.e., fixing one argument, $\theta$ grows or decreases with the other argument. Our idea is the following: instead of eagerly updating all data structures whenever a point is deleted, we can delay the deletion until the data structure is queried, which is when the update must be performed. We will partition the length-$T$ query sequence into blocks of size $\theta(\sqrt T\cdot s, d)$ and for simplicity, we will denote it as $\theta$ in the following.
\paragraph{Initialization.}
\begin{itemize}
    \item Prepare $k=\sqrt T\cdot s$ JL transforms \cite{jl84} $S_1,\ldots,S_k: \R^d\rightarrow \R^m$ where $m=O(\epsilon^{-2}\log(n/\delta))$.
    \item Set ${\cal S}=\begin{bmatrix}
        S_1 \\
        S_2 \\
        \vdots \\
        S_k
    \end{bmatrix}\in \R^{mk\times d}$.
    \item Compute ${\cal S}U$.
    \item Initialize $k$ $(c,r)$-$\ell_2$ LSH's on $S_1U, S_2U,\ldots,S_kU$.
    \item Associate a deletion list to each data structure. The initial deletion lists are empty. Also initialize a global deletion list that keeps track of all points deleted so far.
\end{itemize}
\paragraph{Query.}
\begin{itemize}
    \item Receive query point $v_t\in \R^d$.
    \item Sample $l=O(s\log^2 n)$ data structures, denote the sampled JLs as $S_{(1)},\ldots,S_{(l)}$.
    \item Batch compute $\begin{bmatrix}
        S_{(1)} \\
        \vdots \\
        S_{(l)}
    \end{bmatrix}v_t\in \R^{lm}$.
    \item Update the sampled data structures that are not up-to-date, update their corresponding deletion lists.
    \item Query corresponding LSH's with $S_{(1)}v_t,\ldots,S_{(l)}v_t$.
    \item Compute the characteristic vectors of these LSH's outputs, apply sparse~\rpm~to the count vector and compute the order statistics.
    \item Output the noisy max index.
\end{itemize}
\paragraph{Deletion.}
\begin{itemize}
    \item If the algorithm is at the end of a block:
    \begin{itemize}
        \item Update all $k$ data structures with $\theta$ points in the block. In particular, let $u_{(1)},\ldots,u_{(\theta)}$ denote the points to-be-deleted.
        \item Compute ${\cal S}\begin{bmatrix}
            \vert & \ldots & \vert \\
            u_{(1)} & \ldots & u_{(\theta)} \\
            \vert & \ldots & \vert
        \end{bmatrix}$.
        \item Delete $S_iu_{(1)},\ldots,S_iu_{(\theta)}$ from the $i$-th LSH.
        \item Update the deletion list of all $k$ data structures and the global deletion list.
    \end{itemize}
    \item Otherwise, update the global deletion list.
\end{itemize}
We first note that the correctness of the data structure follows directly from the design: for each query, we will update the data structures first. So we focus on analyzing the runtime.

\begin{theorem}
Let $U\subseteq \R^d$ satisfy Assumption~\ref{assumption:ann_sparse} and $\{v_1,\ldots,v_T\}\subseteq \R^d$ be a sequence of adaptive queries. There exists a randomized data structure with the following guarantees:
\begin{itemize}
    \item It preprocesses $U$ in time $\wt O(\Tmat(\sqrt T\cdot s, d, n)+\sqrt T\cdot s\cdot n^{1+\rho})$;
    \item It uses space $\wt O(\sqrt T\cdot s\cdot n^{1+\rho}+nd)$;
    \item Given a query $v_t$ for $t\in [T]$, it returns a point $u\in U$ with $\|v_t-u\|_2\leq cr$ if $B(v, r)\cap U\neq \emptyset$ and NULL otherwise, with probability at least $1-1/\poly(n)$. This procedure takes time $\wt O(s\cdot (d+n^\rho))$;
    \item If $u_t$ is the output of $v_t$, it deletes $u_t$ in amortized time $\wt O((\sqrt T\cdot s\cdot d)^{1+o(1)}/\theta(\sqrt T\cdot s, d)+\sqrt T\cdot s\cdot n^\rho)$.
\end{itemize}
\end{theorem}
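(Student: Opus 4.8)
The plan is to establish the four bullets in turn, reusing the differential-privacy analysis of the slow algorithm (Lemmas~\ref{lem:T_DP}, \ref{lem:acc_sample}, \ref{lem:correct}) for correctness, and charging the rectangular-matrix-multiplication cost to the right phase for the running times. Functionally, I would observe that the data structure is exactly the $s$-sparse \rpm{} reduction run on top of $k=\wt O(\sqrt T s)$ copies of the low-dimensional $\ell_2$ LSH, with one change: at query step $t$ each LSH copy effectively answers $(c,r)$-ANN on the \emph{current live set} $U_t=U\setminus\{u_1,\dots,u_{t-1}\}$ instead of on $U$. This is enforced by the block flushes (which physically remove deleted points from all $k$ hash tables) together with within-block filtering: when forming the characteristic vectors from the LSH outputs, every coordinate whose point lies in the global deletion list is zeroed out, at $O(1)$ per candidate via a hash set. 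A fixed-random-string LSH built on $U$ is also a valid oblivious $(c,r)$-ANN on any $U_t\subseteq U$, and Assumption~\ref{assumption:ann_sparse} is monotone under passing to subsets, so $|U_t\cap B(v,cr)|\le s$ for all $t$; hence the privacy statement of Lemma~\ref{lem:T_DP} and the generalization-based utility of Lemmas~\ref{lem:acc_sample}--\ref{lem:correct} go through verbatim with $U_t$ in place of $U$. Consequently, with probability $1-1/\poly(n)$ at least a $3/4$-fraction of the $l=\wt O(s)$ sampled copies succeed, the ($s$-sparse) count vector has a coordinate of magnitude $\Omega(\log^2 n)$ dominating the exponential noise, and the final distance check certifies the reported point as a $cr$-near neighbor (or that \textsc{null} is correct).

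For the query cost: $\wt O(s)$ to sample; one batched product $\Tmat(\wt O(s),d,1)=\wt O(sd)$ for the Johnson--Lindenstrauss images of $v_t$; $\wt O(s\,n^\rho)$ for the $l$ LSH lookups and the formation/filtering of characteristic vectors (bucket occupancies are non-increasing throughout the execution, so each lookup still touches $\wt O(n^\rho)$ entries as in the static case, and stale points that are reported are simply discarded by the filter); $O(s\log n)$ whp for sparse \rpm{} on an $s$-sparse input; and $\wt O(d)$ for the distance check — in total $\wt O(s(d+n^\rho))$. For preprocessing and space: computing $\mathcal S U$ is a single rectangular product of an $(mk)\times d$ matrix by a $d\times n$ matrix with $m=\wt O(1)$, i.e.\ $\wt O(\Tmat(\sqrt T s,d,n))$, and initializing $k$ low-dimensional $\ell_2$ LSH's on the $n$ projected points costs $\wt O(k\,n^{1+\rho})=\wt O(\sqrt T s\,n^{1+\rho})$; space is dominated by the $k$ hash structures, $\wt O(\sqrt T s\,n^{1+\rho})$, plus $\wt O(nd)$ to retain $U$ for the distance checks and $\wt O(T)$ for the deletion lists (subsumed).

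The amortized deletion bound comes from the block structure: deletions are buffered, and at the end of each block of $\theta=\theta(\sqrt T s,d)$ queries the $\le\theta$ buffered points are flushed into all $k$ structures in two steps: (i) apply $\mathcal S$ to the $d\times\theta$ matrix of buffered points, a product $\Tmat(mk,d,\theta)=\Tmat(\wt O(\sqrt T s),d,\theta)$, which by the \emph{definition} of $\theta$ equals $(\sqrt T s\cdot d)^{1+o(1)}$; and (ii) physically delete each of the $\le\theta$ points from each of the $k$ hash tables, $\wt O(k\theta\,n^\rho)=\wt O(\sqrt T s\,\theta\,n^\rho)$. Dividing the per-block cost $(\sqrt T s\,d)^{1+o(1)}+\wt O(\sqrt T s\,\theta\,n^\rho)$ by the $\theta$ queries of the block yields the claimed $\wt O\big((\sqrt T s\,d)^{1+o(1)}/\theta(\sqrt T s,d)+\sqrt T s\,n^\rho\big)$.

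The subtle part is not any single calculation but lining up lazy deletion with the two-stage ``privacy first, then utility'' argument: one must check that the within-block filtering is genuine post-processing of the transcript and of the adversary-visible deletion list, so the $(\tfrac1{100},\tfrac{\beta}{100})$-DP guarantee of Lemma~\ref{lem:T_DP} is untouched; that the oblivious $9/10$ success probability and the support bound $\le s$ both survive restriction to the shrinking live set $U_t$; and that logically-deleted-but-not-yet-flushed points never inflate an LSH's $n^\rho$ traversal budget (they do not, since the hash tables are only ever shrunk and the filter removes any stale point that is reported). Getting these three points to hold simultaneously, rather than the matrix-multiplication bookkeeping, is the crux.
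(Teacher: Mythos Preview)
Your proposal is correct but handles within-block deletions differently from the paper. The paper's algorithm \emph{physically updates} the $l=\wt O(s)$ sampled LSH copies at each query (bringing them up to date with all pending deletions), and its proof has two cases: Case~1 bounds the amortized cost of these query-time updates by $\wt O(\theta s(d+n^\rho))$, Case~2 bounds the end-of-block flush as you do, and then argues Case~2 dominates. You instead leave the hash tables untouched within a block and simply zero out deleted coordinates in the characteristic vectors, which sidesteps Case~1 entirely. The price is that you must argue an LSH built on $U$ and then filtered against the deletion list still behaves as a valid oblivious $(c,r)$-ANN on the live set $U_t$; you correctly identify this as the crux, and the argument goes through because (i) any $r$-near neighbor $u^*\in U_t\subseteq U$ is still hashed into some bucket whp, (ii) the bound $|B(v,cr)\cap U|\le s\le n^\rho$ on the \emph{full} $U$ keeps the $O(n^\rho)$ traversal budget sufficient to actually reach $u^*$ even with stale points present, and (iii) the deletion list is determined by the transcript, so filtering is a per-row transformation $(v_t,D_t,r_{(i)})\mapsto \tilde b^{(i)}$ that preserves the one-row-one-random-string structure \rpm{} needs for $(\epsilon_{\DP},0)$-DP. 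One terminological quibble: calling the filter ``post-processing'' is not quite right since it happens \emph{before} the DP mechanism, but your actual reasoning is sound. Your route is arguably cleaner (no Case~1 to bound and then discard), while the paper's route avoids having to reason about filtered LSH correctness at all.
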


\begin{proof}
We note that preprocessing time, space and query time, together with their guarantees are straightforward, so we will focus on bounding the deletion time. Our analysis will be over a block of size $\theta$, and the final update time will be amortized over $\theta$ steps. There are two cases to consider. 

\noindent{\bf Case 1. Update during the query.} Each time we receive a query, we need to sample $l=O(s\log^2 n)$ data structures and update these data structures accordingly. Suppose we are at the $i$-th step of the block. Then in the worst case, all of these data structures need to be updated for all prior $i-1$ points deleted. We can compute the update time over the block as follows:
\begin{align*}
    \sum_{i=1}^{\theta} \underbrace{\Tmat(l, d, i)}_{\text{time to apply JL}}+\underbrace{il n^\rho}_{\text{time to update LSH}} \leq & ~ \sum_{i=1}^\theta \wt O(is\cdot (d+n^\rho)) \\
    = & ~ \wt O(\theta^2 s\cdot (d+n^\rho)).
\end{align*}
Thus, the amortized cost per update-during-the-query is $\wt O(\theta s\cdot (d+n^\rho))$.

\noindent{\bf Case 2. Update at the end of the block.} In this case, we need to update all $k=\wt O(\sqrt T\cdot s)$ data structures with $\theta$ points. Applying JL takes time
\begin{align*}
    \Tmat(\sqrt T\cdot s, d, \theta(\sqrt T\cdot s, d)) = & ~ \wt O((\sqrt T\cdot s\cdot d)^{1+o(1)})
\end{align*}
by the definition of $\theta$, and updating $k$ LSH's takes time $\wt O(\theta\cdot \sqrt T\cdot s\cdot n^\rho)$. Thus, the amortized cost per step is
\begin{align*}
    \wt O((\sqrt T\cdot s\cdot d)^{1+o(1)}/\theta+\sqrt T\cdot s\cdot n^\rho).
\end{align*}
As the second cost dominates, we obtain the desired update time.
\end{proof}

\begin{remark}
As our data structure gains an advantage over using $d$ copies via a net argument when $\sqrt T\cdot s\leq d$, we know that $\theta(\sqrt T\cdot s, d)\geq \theta(\sqrt T\cdot s, \sqrt T\cdot s)=(\sqrt T\cdot s)^\alpha$ as $\theta$ is monotone, and $\alpha\approx 0.32$ is the dual matrix multiplication exponent~\cite{wxxz24,lg24} where $\Tmat(n,n,n^\alpha)=n^{2+o(1)}$. This implies an amortized cost per update of
\begin{align}
    (\sqrt T\cdot s)^{1-\alpha+o(1)}\cdot d^{1+o(1)} + \sqrt T\cdot s\cdot n^\rho \approx & ~ (\sqrt T\cdot s)^{0.68+o(1)}\cdot d^{1+o(1)}+\sqrt T\cdot s\cdot n^\rho
\end{align}
Compared to the na\"ive update in which the first term is $\sqrt T\cdot s\cdot d$, this improvement is significant for relatively large $\sqrt T\cdot s$.
\end{remark}

\subsection{Generalization to \texorpdfstring{$\ell_p$}{} LSH}

We note that the above algorithm does not exploit any particular structure of $\ell_2$; we use the Johnson-Lindenstrauss transform \cite{jl84} in its most general formulation, as the speedup we obtain comes from batch matrix multiplication. Thus, we could generalize the algorithm to any $\ell_p$ norm for $p\in (0,2]$, using $p$-stable sketches~\cite{i06,diim04}.

\begin{corollary}
Let $U\subseteq \R^d$ satisfying Assumption~\ref{assumption:ann_sparse} and $\{v_1,\ldots,v_T\}\subseteq \R^d$ be a sequence of adaptive queries. Let $p\in (0,2]$. There exists a randomized data structure with the following guarantees:
\begin{itemize}
    \item It preprocesses $U$ in time $\wt O(\Tmat(\sqrt T\cdot s, d, n)+\sqrt T\cdot s\cdot n^{1+\rho})$;
    \item It uses space $\wt O(\sqrt T\cdot s\cdot n^{1+\rho}+nd)$;
    \item Given a query $v_t$ for $t\in [T]$, it returns a point $u\in U$ with $\|v_t-u\|_p\leq cr$ if $B(v, r)\cap U\neq \emptyset$ and NULL otherwise, with probability at least $1-1/\poly(n)$. This procedure takes time $\wt O(s\cdot (d+n^\rho))$;
    \item If $u_t$ is the output of $v_t$, it deletes $u_t$ in amortized time $\wt O((\sqrt T\cdot s\cdot d)^{1+o(1)}/\theta(\sqrt T\cdot s, d)+\sqrt T\cdot s\cdot n^\rho)$.
\end{itemize}
\end{corollary}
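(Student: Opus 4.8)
The plan is to reuse the algorithm and analysis of the preceding $\ell_2$ theorem essentially verbatim, isolating the only two places where the Euclidean structure is actually used and replacing each by its $\ell_p$ counterpart. Those two places are: (i) the Johnson--Lindenstrauss transform~\cite{jl84}, used purely as a dimension-reducing sketch $S:\R^d\to\R^m$ with $m=\poly\log(n/\delta)$ that preserves the distance of any \emph{fixed} pair of points up to a $(1\pm\epsilon)$ factor with probability $1-\delta$; and (ii) the oblivious low-dimensional $(c,r)$-ANN data structure on the sketched points, which succeeds with constant probability on a fixed query, inspects at most $n^\rho$ candidates, and outputs all of them. For $p\in(0,2]$, Indyk's $p$-stable sketch~\cite{i06,diim04} supplies (i) --- a dense $m\times d$ matrix of i.i.d.\ $p$-stable entries, with the median-of-projections estimator approximating $\|\cdot\|_p$ --- and the $p$-stable LSH of~\cite{diim04} supplies (ii) with an exponent $\rho=\rho(c,p)<1$.

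Concretely, I would keep the block-batching data structure of the previous subsection unchanged: $k=\sqrt T\cdot s$ stacked sketches ${\cal S}=[S_1;\dots;S_k]\in\R^{mk\times d}$, block size $\theta(\sqrt T\cdot s,d)$, per-query subsampling of $l=O(s\log^2 n)$ data structures, lazy deletion pushed to query time and to block boundaries, and the sparse \rpm/order-statistics post-processing on the $s$-sparse count vector. I would merely substitute $p$-stable sketches for the JL maps and $\ell_p$-LSH's for the $\ell_2$-LSH's, and then verify the three analytic pillars. First, \emph{privacy and utility of the adaptive reduction}: the DP argument (Lemma~\ref{lem:DP_single_step}, advanced composition giving Lemma~\ref{lem:T_DP}, the generalization lemmas~\ref{lem:acc_global}--\ref{lem:acc_sample}, and Lemma~\ref{lem:correct}) is a statement purely about the binary characteristic vectors returned by the oblivious data structures and the \rpm mechanism, hence norm-agnostic; it goes through once one notes the $p$-stable LSH still succeeds with probability $\ge 9/10$ on a fixed query and still outputs at most $n^\rho$ candidates, so Assumption~\ref{assumption:ann_sparse} still bounds the support of the count vector by $s$ and the pigeonhole / signal-versus-noise step is unchanged. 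Second, \emph{correctness of the sketched ANN}: composing the $(1\pm\epsilon)$ distance preservation of $S$ with the $(c,r)$-ANN guarantee of the $\ell_p$-LSH on $SU$ and union-bounding over the $n$ data points yields a $((1+O(\epsilon))c,\,r)$-ANN data structure for the original $\ell_p$ metric, and rescaling $c$ absorbs the $\epsilon$. Third, \emph{runtime}: every bound in the $\ell_2$ proof --- $\Tmat(\sqrt T\cdot s,d,n)+\sqrt T\cdot s\cdot n^{1+\rho}$ for preprocessing, $\Tmat(l,d,i)+i\cdot l\cdot n^\rho$ per query inside a block, and $\Tmat(\sqrt T\cdot s,d,\theta)=(\sqrt T\cdot s\cdot d)^{1+o(1)}$ plus $\theta\cdot\sqrt T\cdot s\cdot n^\rho$ per block-boundary deletion --- depends only on the shapes of the matrices being multiplied, and applying a $p$-stable sketch is the same dense matrix product as applying a JL matrix, so the preprocessing, space, query, and amortized-deletion bounds come out identically.

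The part needing the most care --- the hard part --- is ingredient (i): for $p<2$ there is no low-distortion embedding of $\ell_p^d$ into a low-dimensional $\ell_p$ space, so a $p$-stable sketch cannot be treated as an all-pairs embedding the way JL is for $\ell_2$. The resolution is that the DP reduction already handles adaptivity, so the sketch only ever needs to estimate the distance from a single fixed query to the $n$ data points with failure probability $\delta/\poly(n)$; taking $m=\Theta(\epsilon^{-2}\log(n/\delta))$ rows and using the median estimator (equivalently, hashing the sketched coordinates with the DIIM $\ell_p$-LSH, whose exponent already absorbs the $p$-stable distortion) gives exactly this, and a union bound over the $T$ queries and $n$ points keeps the total failure probability $1/\poly(n)$. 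With that in place the remainder is a mechanical transcription of the $\ell_2$ argument, and the four claimed guarantees follow.
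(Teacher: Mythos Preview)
Your proposal is correct and takes essentially the same approach as the paper: the paper's own ``proof'' of this corollary is a single paragraph observing that the $\ell_2$ argument uses the JL transform only as a linear map enabling batch matrix multiplication, and that for $p\in(0,2]$ one can substitute $p$-stable sketches~\cite{i06,diim04} in its place. You reproduce exactly this substitution, but supply considerably more detail---in particular, you explicitly check that the DP/generalization pipeline (Lemmas~\ref{lem:DP_single_step}--\ref{lem:correct}) is norm-agnostic and you flag the genuine subtlety that a $p$-stable sketch is not a low-distortion embedding into $\ell_p^m$ for $p<2$, resolving it by noting that only per-pair distance preservation (not an all-pairs embedding) is needed once adaptivity is handled by the DP layer.
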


\section{Adaptive Regression with Private Median and \texorpdfstring{$\ell_\infty$}{} Guarantee}

In this section, we provide the necessary background for adaptive regression problem under turnstile updates, with a variety of algorithms with diverse guarantees.

\subsection{Preliminaries on Adpative Regression and Sketching}

The adaptive regression we will be studying is defined in Assumption~\ref{assumption:regression}, we will state here again for completeness.

Let $U\in \R^{n\times d}$ be a design matrix and $b\in \R^n$ be a response vector with $n\gg d$, the goal is to solve the $\ell_2$ regression problem:
\begin{align*}
    x^* := & ~ \arg\min_{x\in \R^d} \|Ux-b\|_2^2,
\end{align*}
with the optimal solution given by the normal equation:
\begin{align*}
    x^* = & ~ (U^\top U)^\dagger U^\top b
\end{align*}
with $M^\dagger$ being the Moore–Penrose inverse of the matrix $M$. As solving the normal equation exactly is time- and space-consuming, so one is usually interested in finding an approximate solution $\wt x\in \R^d$ such that
\begin{align*}
    \|U\wt x-b\|_2 \leq & ~ (1+\alpha) \|Ux^*-b\|_2
\end{align*}
for some $\alpha>0$. 

In the adaptive turnstile update model, we assume an adversary could curate a sequence of $T$ updates $\{v_1,\ldots,v_T\}$ adaptively, where each $v_t$ for $t\in [T]$ could take one of the two forms:
\begin{itemize}
    \item $v_t\in \R^{n\times d}$, i.e., $v_t$ is an update to the design matrix;
    \item $v_t\in \R^n$, i.e., $v_t$ is an update to the response vector. 
\end{itemize}
We need to design an algorithm that processes these adaptive updates to either $U$ or $b$, for simplicity we will use $(U_t, b_t)$ to denote the design matrix and response vector after the $t$-th update. Our algorithm needs to respond with an approximate solution $x_t$ to the $t$-th $\ell_2$ regression:
\begin{align*}
    \|U_t x_t-b_t\|_2 \leq & ~ (1+\alpha) \min_{x\in \R^d} \|U_t x-b_t\|_2.
\end{align*}
The main objective is to design an algorithm that
\begin{itemize}
    \item Uses space that is sublinear in $n$, sublinear in $T$ and small polynomial in $d$;
    \item Has efficient update that depends on the size of $v_t$ and small polynomial in $d$.
\end{itemize}

In the following, for the simplicity of presentation, we will make common assumptions that the length of the stream is $T = \poly(n)$, and that in any round $t \leq T$, entries of $U_t, b_t$ are always in the range $[-n^{\gamma}, n^{\gamma}]$ for some constant $\gamma > 0$. These assumptions can be removed at the expense of introducing additional factors to our results.

Sketching will be a key algorithmic tool to speed up the procedure of solving $\ell_2$ regression, we introduce them in the following.

\begin{definition}[Oblivious Subspace Embedding] \label{def:subspace-embedding}
   Fix dimension parameters $n$, $d$, approximation factor $\alpha$, and failure probability $\beta$. Suppose ${\cal D}$ is a distribution on $r \times n$ matrices $S$, where $r$ is a function of $n, d, \alpha,$ and $\beta$. Suppose that with probability at least $1 - \beta$, for any fixed $n \times d$ matrix $U$, a matrix $S$ drawn from ${\cal D}$ satisfies: for all vectors $x \in \R^d$,

   \begin{align*}
       \| SUx \|^2_2 = & ~ (1 \pm \alpha) \| Ux \|^2_2,
   \end{align*}
    
    then we call ${\cal D}$ an $(\alpha, \beta)$-Oblivious Subspace Embedding (OSE).
\end{definition}

We will mainly need the OSE property with $\alpha = O(1)$ and $\beta = O(1)$. We refer to this as $O(1)$-OSE. We list a collection of sketching matrices that will be used throughout this section.

\begin{definition}[Count Sketch~\cite{ccf02}] 
    Let $S \in \R^{r\times n}$ be constructed via the following procedure: Randomly draw $h: [n] \to [r]$ from a pairwise independent hash family, and draw $\sigma: [n] \to \{-1, 1\}$ from a 4-wise independent hash family. For each of the $n$ columns $S_{i \in [n]}$, we choose a row $h(i) \in [r]$ and an element $\sigma(i)$ from $\{-1, 1\}$. We set $S_{h(i), i} = \sigma(i)$ for all $i \in [n]$, and set all other entries of $S$ to be $0$. We call such $S$ a Count Sketch matrix.
\end{definition}

\begin{definition}[Gaussian Sketching Matrix] 
We say $S \in \R^{r \times n}$ is a Gaussian sketching matrix if all entries are independently sampled from the distribution ${\cal N}(0, 1/r)$.
\end{definition}

\begin{definition}[Subspace Randomized Hardamard Transform (SRHT)~\cite{ldfu13}] 
The Subspace Randomized Hardamard Transform (SRHT) matrix $S\in \R^{r \times n}$ is defined as the scaled matrix product $S:=\frac{1}{\sqrt{r}} P H D$, where each row of matrix $P \in \{0, 1\}^{r \times n}$ contains exactly one $1$ at a random position, $H$ is the $n \times n$ Hadamard matrix, and $D$ is a $n \times n$ diagonal matrix with each diagonal entry being a value in $\{-1, 1\}$ with equal probability. 
\end{definition}

We will use the following differentially private median procedure:

\begin{theorem}\label{thm:primed}
  There exists an $(\epsilon, 0)$-differentially private algorithm that given a database $S \in X^*$, outputs an element $x \in X$ such that with probability at least $1 - \beta$, there are at least $|S|/2 - \Gamma$ elements in $S$ that are bigger or equal to $x$, and there are at least $|S|/2 - \Gamma$ elements in $S$ that are smaller or equal to $x$, where $\Gamma = O(\frac{1}{\epsilon} \log \frac{|X|}{\beta})$. Moreover, private median runs in time
  \begin{align*}
      O(\epsilon^{-1}|S|\log^3(|X|/\beta)\cdot \poly\log |S|).
  \end{align*}
\end{theorem}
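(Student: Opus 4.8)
The plan is to instantiate the exponential mechanism with a median-distance utility and then give an efficient implementation that exploits the piecewise-constant structure of that utility. We may assume the totally ordered universe is $X=\{1,\dots,|X|\}$. For a database $S$ and a candidate $x\in X$, define the utility
\[
u(S,x) \;=\; \min\bigl\{\,|\{i : s_i \ge x\}|,\ |\{i : s_i \le x\}|\,\bigr\}\;-\;\lfloor |S|/2\rfloor .
\]
Two databases differing in a single row change each of the two counts by at most one, so $u(\cdot,x)$ has sensitivity $1$ for every fixed $x$. I would then run the exponential mechanism: output $x\in X$ with probability proportional to $\exp(\tfrac{\epsilon}{2}\,u(S,x))$. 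The standard privacy analysis of the exponential mechanism gives $(\epsilon,0)$-DP, and its standard utility bound gives, with probability at least $1-\beta$, that $u(S,x) \ge \max_{x'} u(S,x') - \tfrac{2}{\epsilon}\ln(|X|/\beta)$. Since the $\lceil |S|/2\rceil$-th order statistic witnesses $\max_{x'} u(S,x') \ge 0$, this yields $\min\{|\{i: s_i\ge x\}|,\ |\{i: s_i\le x\}|\} \ge |S|/2 - \Gamma$ with $\Gamma = O(\tfrac{1}{\epsilon}\log\tfrac{|X|}{\beta})$, which is exactly the claimed accuracy with both the ``$\ge x$'' and ``$\le x$'' conditions.

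For the running time, the key observation is that $x\mapsto u(S,x)$ depends on $x$ only through the rank $r(x)=|\{i:s_i\le x\}|$, a monotone step function taking at most $|S|+1$ distinct values. Hence, after sorting $S$ in time $O(|S|\log|S|)$, the universe $X$ partitions into at most $|S|+1$ contiguous intervals $I_1,\dots,I_m$ on each of which $u(S,\cdot)$ equals a known constant $u_j$ and which has known length $\ell_j=|I_j|$; all of this is computed in $O(|S|)$ additional time. The exponential mechanism over $X$ is then realized \emph{exactly} by sampling an interval $I_j$ with probability proportional to $\ell_j\exp(\tfrac{\epsilon}{2}u_j)$ and then a point uniformly inside $I_j$, so this two-stage sampler inherits both the privacy and the accuracy guarantees above.

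The main obstacle is numerical: the weights $\exp(\tfrac{\epsilon}{2}u_j)$ span a dynamic range of order $\exp(\epsilon|S|/2)$, so faithful sampling requires arithmetic on numbers with exponent part of magnitude $O(\epsilon|S|)$ and mantissa/length part of $O(\log|X|)$ bits, and to keep the implementation exactly $(\epsilon,0)$-DP one must discretize and truncate the weights with care (e.g.\ shifting by $\max_j u_j$ and folding the resulting truncation into a modified but still sensitivity-$1$ utility, so that the mechanism remains an exponential mechanism). Carrying out a cumulative-weight computation over the $O(|S|)$ intervals together with a binary search of depth $O(\log(|X|/\beta))$ for the sampled threshold, all in this bit-complexity model, is precisely what produces the extra $\poly\log(|X|/\beta)$, $\poly\log|S|$ and $\epsilon^{-1}$ factors in the stated bound $O(\epsilon^{-1}|S|\log^3(|X|/\beta)\poly\log|S|)$; none of these affect privacy or the value of $\Gamma$. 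I would present the self-contained exponential-mechanism argument above for privacy and accuracy, give the piecewise-constant implementation for efficiency, and relegate the bit-level bookkeeping (or a pointer to the efficient private-quantile constructions in the literature, which give identical bounds) to a remark.
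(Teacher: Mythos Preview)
The paper does not actually prove this theorem: it is stated in the preliminaries (as the \textsc{PMedian} primitive) and used as a black box in the regression algorithms, with no proof given. So there is nothing to compare against on the paper's side.

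That said, your proposal is the standard and correct route. The exponential mechanism with the rank-based utility $u(S,x)=\min\{|\{i:s_i\ge x\}|,|\{i:s_i\le x\}|\}-\lfloor|S|/2\rfloor$ has sensitivity $1$, so $(\epsilon,0)$-DP follows immediately, and the usual exponential-mechanism utility bound yields exactly the claimed $\Gamma=O(\epsilon^{-1}\log(|X|/\beta))$. Your observation that $u(S,\cdot)$ is piecewise constant on at most $|S|+1$ intervals, so that sampling reduces to choosing an interval with probability proportional to $\ell_j\exp(\tfrac{\epsilon}{2}u_j)$, is also the standard efficient implementation. The only part you leave at the level of a sketch is the bit-complexity accounting that produces the specific $\epsilon^{-1}\log^3(|X|/\beta)\cdot\poly\log|S|$ runtime; in particular, you do not fully justify the $\epsilon^{-1}$ factor (the dynamic range of the weights scales like $\exp(\epsilon|S|)$, which naively suggests an $\epsilon$ rather than $\epsilon^{-1}$ dependence). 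Since the paper treats the whole statement as a citation, this level of detail is appropriate, and pointing to the efficient private-quantile literature for the exact constants is a reasonable choice.
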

We will use $\textsc{PMedian}(x_1,\ldots,x_s)$ to denote the invocation of private median on $x_1,\ldots,x_s$. 

\subsection{Generic Algorithm with Private Median}

A critical property we will be leveraging in designing the algorithm is the \emph{$\ell_\infty$ guarantee} of the sketched solution~\cite{psw17,syyz23}: this guarantee states that the sketched solution $\wt x$ and the optimal solution $x^*$ are close not only in terms of their \emph{costs}, i.e., $\|U\wt x-b\|_2=(1\pm\alpha) \|Ux^*-b\|_2$, but are close in the sense that $\|\wt x-x^*\|_\infty$ is small. Note that a na{\"i}ve bound of $\|\wt x-x^*\|_\infty$ is just $\|\wt x-x^*\|_2$, as one could have the scenario that all the discrepancy concentrate in a few coordinates with most coordinates are the same. When $S$ is an SRHT matrix,~\cite{psw17} shows that $\|\wt x-x^*\|_2$ is too pessimistic and a stronger bound exists, in particular, they prove
\begin{align}\label{eq:l_infty}
    \|\wt x-x^*\|_\infty \leq & ~ \frac{\alpha}{\sqrt d} \|Ux^*-b\|_2\cdot \frac{1}{\sigma_{\min}(U)}
\end{align}
holds with probability $1-\beta$. The number of rows required is further improved in~\cite{syyz23}. To improve the space usage and runtime efficiency, we further compose the SRHT matrix with a Count Sketch matrix.
\begin{lemma}\label{lem:sketch}
    Let ${\cal D}$ be a distribution of matrix product $S_{\textsc{SRHT}} \cdot S_{\textsc{CS}}$, where $S_{\textsc{SRHT}} \in \R^{r \times m}$ is an SRHT matrix and $S_{\textsc{CS}} \in \R^{m \times n}$ is a Count Sketch matrix. For some $r = O(\frac{d \log^3 (n/\beta)}{\alpha^2})$ and $m = O(d^2 + \frac{d}{\alpha^2\beta})$, $\mathcal{D}$ satisfies the $\ell_\infty$ guarantee in Equation \ref{eq:l_infty}. We will refer this property as $(\alpha,\beta)$-accuracy.
\end{lemma}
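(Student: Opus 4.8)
The plan is to separate the composed sketch $S=S_{\textsc{SRHT}}S_{\textsc{CS}}$ into its two stages and transport the $\ell_\infty$ guarantee of the outer SRHT, Equation~\eqref{eq:l_infty}, through the inner Count Sketch reduction. Write $x^*=\arg\min_{x}\|Ux-b\|_2$ and $b^\perp=Ux^*-b$, so that $U^\top b^\perp=0$. Set $U'=S_{\textsc{CS}}U\in\R^{m\times d}$ and $b'=S_{\textsc{CS}}b\in\R^m$, let $x'=\arg\min_{x}\|U'x-b'\|_2$ be the intermediate optimum, and let $\wt x=\arg\min_{x}\|S_{\textsc{SRHT}}U'x-S_{\textsc{SRHT}}b'\|_2$ be the solution that $\mathcal D$ actually returns. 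I will bound $\|\wt x-x^*\|_\infty$ by the triangle inequality $\|\wt x-x^*\|_\infty\le\|\wt x-x'\|_\infty+\|x'-x^*\|_\infty$: the first term via Equation~\eqref{eq:l_infty} applied to the \emph{reduced} instance $(U',b')$, and the second via the (purely $\ell_2$) accuracy of Count Sketch for regression. The passage through $\ell_2$ for the second term is unavoidable because Count Sketch has no $\ell_\infty$ guarantee of its own.

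First I would record what the inner sketch buys us. With $m=O(d^2+d/(\alpha^2\beta))$, the standard subspace-embedding and approximate-matrix-product guarantees for Count Sketch (see, e.g.,~\cite{w14}) give, with probability at least $1-\beta/2$ simultaneously: (i) $S_{\textsc{CS}}$ is a $(1\pm\tfrac12)$-subspace embedding for $\mathrm{colspan}([U\ b])$, hence $\sigma_{\min}(U')\ge\sigma_{\min}(U)/\sqrt2$ and $\|U'x'-b'\|_2\le\|S_{\textsc{CS}}(Ux^*-b)\|_2\le\sqrt2\,\|b^\perp\|_2$; and (ii) the sketched solution $x'$ is accurate enough that $\|x'-x^*\|_2\le\frac{\alpha}{\sqrt d}\cdot\frac{\|b^\perp\|_2}{\sigma_{\min}(U)}$. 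For (ii) one instantiates Count Sketch as a $(1+O(\alpha^2/d))$-regression sketch, then combines the Pythagorean identity $\|U(x'-x^*)\|_2^2=\|Ux'-b\|_2^2-\|b^\perp\|_2^2$ (valid since $U^\top b^\perp=0$) with $\|U(x'-x^*)\|_2\ge\sigma_{\min}(U)\|x'-x^*\|_2$; phrasing the residual bound with respect to an orthonormal basis of $\mathrm{colspan}(U)$ keeps the row count free of any condition-number factor. In particular $\|x'-x^*\|_\infty\le\|x'-x^*\|_2\le\frac{\alpha}{\sqrt d}\,\|b^\perp\|_2/\sigma_{\min}(U)$.

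Next I would invoke Equation~\eqref{eq:l_infty}, the SRHT $\ell_\infty$ guarantee of~\cite{psw17,syyz23}, on the reduced regression instance $(U',b')$. Since $m=\poly(n)$ we have $\log^3(m/\beta)=O(\log^3(n/\beta))$, so $r=O(d\log^3(n/\beta)/\alpha^2)$ rows suffice, and with probability at least $1-\beta/2$ over $S_{\textsc{SRHT}}$,
\[
\|\wt x-x'\|_\infty\le\frac{\alpha}{\sqrt d}\cdot\|U'x'-b'\|_2\cdot\frac{1}{\sigma_{\min}(U')}\le\frac{\alpha}{\sqrt d}\cdot\sqrt2\,\|b^\perp\|_2\cdot\frac{\sqrt2}{\sigma_{\min}(U)}=\frac{2\alpha}{\sqrt d}\cdot\frac{\|b^\perp\|_2}{\sigma_{\min}(U)}.
\]
A union bound over the two events (total failure probability at most $\beta$), the triangle inequality, and rescaling $\alpha$ by an absolute constant (which changes $r$ and $m$ only by constants) then give $\|\wt x-x^*\|_\infty\le\frac{\alpha}{\sqrt d}\,\|Ux^*-b\|_2/\sigma_{\min}(U)$, i.e.\ the claimed $(\alpha,\beta)$-accuracy. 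Along the way each subspace-embedding event also certifies that $U'$, and hence $S_{\textsc{SRHT}}U'$, has full column rank, so $x'$ and $\wt x$ are well defined and the normal-equation manipulations are legitimate.

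The step I expect to be the main obstacle is item (ii): because Count Sketch offers no $\ell_\infty$ control, the only handle on $x'-x^*$ is in $\ell_2$, and the inequality $\|\cdot\|_\infty\le\|\cdot\|_2$ costs a $\sqrt d$ factor; this is what forces the inner sketch to be driven to $O(\alpha/\sqrt d)$ accuracy rather than $O(\alpha)$, and pinning down the approximate-matrix-product parameters of Count Sketch that realize this accuracy with the stated $m$ — while ensuring the condition number never enters the row count — is the delicate part of the argument. Everything else is the two black boxes (Count Sketch subspace embedding / matrix product, and the SRHT $\ell_\infty$ bound) glued together by the triangle inequality.
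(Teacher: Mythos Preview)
Your route differs from the paper's and, as written, does not achieve the stated $m$. The paper does not go through the triangle inequality $\|\wt x-x'\|_\infty+\|x'-x^*\|_\infty$; instead it invokes a composition result from \cite{psw17}: if each factor individually satisfies \eqref{eq:l_infty} (up to constants), so does the product. It then argues---contrary to your premise that ``Count Sketch has no $\ell_\infty$ guarantee of its own''---that Count Sketch with $m=O(d^2+d/(\alpha^2\beta))$ \emph{does} satisfy \eqref{eq:l_infty}, by the core lemma of \cite{syyz23}, which reduces the $\ell_\infty$ guarantee to (a) being an $O(1)$-OSE and (b) a \emph{scalar} approximate-matrix-product bound of order $\alpha/\sqrt d$ on pairs of fixed vectors. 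Both (a) and (b) are standard for Count Sketch at the stated $m$.

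The quantitative gap in your argument is exactly claim (ii). With $m=O(d^2+d/(\alpha^2\beta))$ one does not obtain $\|x'-x^*\|_2\le\frac{\alpha}{\sqrt d}\,\|b^\perp\|_2/\sigma_{\min}(U)$. Writing $U=V\Sigma W^\top$, the usual bound is $\|x'-x^*\|_2\le\frac{2}{\sigma_{\min}(U)}\,\|V^\top S_{\textsc{CS}}^\top S_{\textsc{CS}} b^\perp\|_2$, and matrix AMM for Count Sketch only gives $\|V^\top S_{\textsc{CS}}^\top S_{\textsc{CS}} b^\perp\|_2\le O(\sqrt{d/(m\beta)})\,\|b^\perp\|_2$, the extra $\sqrt d$ being $\|V\|_F$. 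Forcing this below $\alpha/\sqrt d$ requires $m=\Omega(d^2/(\alpha^2\beta))$, a factor $d$ more than the lemma allows in the $\alpha^{-2}\beta^{-1}$ term; equivalently, ``Count Sketch as a $(1+O(\alpha^2/d))$-regression sketch'' already costs $\Theta(d^2/(\alpha^2\beta))$ rows. The step you flagged as ``delicate'' is therefore the actual obstruction: routing the inner-sketch error through $\|\cdot\|_\infty\le\|\cdot\|_2$ loses precisely the $\sqrt d$ that the coordinate-wise $\ell_\infty$ analysis of \cite{syyz23} is designed to save.
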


\begin{proof}
    As shown in \cite{psw17}, all we need is that both $S_{\textsc{SRHT}}$ and $S_{\textsc{CS}}$ are drawn from distributions that individually satisfy Equation \eqref{eq:l_infty} (up to a constant factor loss in failure probability and approximation factor). The main result of \cite{syyz23} proves that SRHT with $r$ rows satisfies Equation \eqref{eq:l_infty}. 

    For $S_{\textsc{CS}}$, following the core lemma in \cite{syyz23}, it suffices to show that it is an Oblivious Subspace Embedding (OSE) with an $O(1)$ approximation factor, and it satisfies the following Matrix Multiplication guarantee: for any fixed vectors $g, h \in \R^{n}$,

    \begin{equation*}
        \Pr[| g^\top S_{\textsc{CS}}^\top S_{\textsc{CS}} h - g^\top h| \geq \frac{\alpha}{\sqrt{m}} \| g \|_2 \| h \|_2] \leq \beta. 
    \end{equation*}

    It is shown, e.g. in \cite{w14} that 
    \begin{enumerate}
        \item Count Sketch with $m = O(d^2/ \poly\log d)$ is an $O(1)$-OSE, and

        \item Count Sketch with $m = O(d/ \alpha^2 \beta)$ satisfies the Approximate Matrix Multiplication property.
    \end{enumerate}
    
    This concludes that ${\cal D}$ satisfies the $\ell_{\infty}$ guarantee in Equation \eqref{eq:l_infty}.
    \end{proof}

We are now ready to describe our algorithm. Throughout, we will let $\alpha\in (0,1)$ be the approximation factor, $\beta\in (0, 1)$ be the failure probability and we require an extra parameter $\kappa$ which is an upper bound on the condition number of $U_t$ throughout the update sequence. Since our algorithm would utilize private median, we let $\epsilon_{\DP}$ to denote the privacy parameter which will be specified later. We will refer this algorithm to $\textsc{AdaptiveRegDP}$.

\vspace{0.2cm}
\noindent{\bf Initialization.}
\begin{itemize}
    \item (Parameters for sketches): Let $\alpha'=\frac{\alpha}{\kappa}$ and $\beta'=0.01$, set $r=O(\frac{d\log^3(n/\beta')}{\alpha'^2})$ and $m=O(d^2+\frac{d}{\alpha'^2\beta'})$;
    \item (Parameters for privacy): Let $\Gamma=O(\frac{1}{\epsilon_{\DP}}\log \frac{Td |X|}{\beta})$ where $X=\{-n^{\gamma},\ldots, -n^{-\gamma},0,n^{-\gamma},\ldots,n^\gamma \}$, set $k=O(\epsilon_{\DP}\cdot \Gamma \cdot \sqrt{Td\log(1/\beta)})$;
    \item Prepare $k$ independent copies $S_1,\ldots,S_k\in \R^{r\times n}$ according to Lemma~\ref{lem:sketch};
    \item Let ${\rm sk}_U^i=S_i U$ and ${\rm sk}_b^i=S_i b$ for all $i\in [k]$.
\end{itemize}

\vspace{0.2cm}
\noindent{\bf Update.}
\begin{itemize}
    \item Receive update $v_t$;
    \item If $v_t$ is an update to $U$, then update ${\rm sk}_U^i\leftarrow {\rm sk}_U^i+S_i v_t$ for all $i\in [k]$, otherwise update ${\rm sk}_b^i\leftarrow {\rm sk}_b^i+S_i v_t$.
\end{itemize}

\vspace{0.2cm}
\noindent{\bf Query.}
\begin{itemize}
    \item Sample with replacement $s=\wt O(1)$ indices from $[k]$, let $j_1,\ldots,j_s$ denote the sampled indices;
    \item Compute $x_{j_i}=\arg\min_{x\in \R^d} \|{\rm sk}_U^i x - {\rm sk}_b^i \|_2$ for all $i\in [s]$;
    \item Compute $g_l=\textsc{PMedian}((x_{j_1})_l,\ldots,(x_{j_s})_l)$ with privacy budget $\epsilon_{\DP}$ (Theorem~\ref{thm:primed}) for all $l\in [d]$;
    \item Output $g=(g_1,g_2,\ldots,g_d)$.
\end{itemize}

We now prove the privacy and utility of our algorithm.

\begin{lemma}
    The algorithm \textsc{AdaptiveRegDP} satisfies $(\epsilon, \delta)$-differential privacy w.r.t. the collection of random strings ${\cal R}$ for $\epsilon := 1/100$ and $\delta := \beta/100$.
\end{lemma}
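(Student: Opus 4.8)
The plan is to mirror the privacy analysis developed for the ANN setting (Lemmas~\ref{lem:DP_single_step} and~\ref{lem:T_DP}), adapted to the coordinate-wise private median aggregation. First I would set up the database: let ${\cal R} = \{r_1,\ldots,r_k\}$ be the collection of random strings used to generate the $k$ sketching matrices $S_1,\ldots,S_k$, viewed as a database with one row per string. As in the ANN analysis, two neighboring databases differ in a single row, i.e.\ a single sketch. Fix a query time $t\in[T]$. The transcript at time $t$ consists of the update $v_t$ (which carries no information about ${\cal R}$) together with the output $g^{(t)} = (g_1,\ldots,g_d)$. Each coordinate $g_l = \textsc{PMedian}((x_{j_1})_l,\ldots,(x_{j_s})_l)$ is computed from the $s$ subsampled solution vectors, which in turn depend only on the $s$ subsampled rows $\{r_{j_1},\ldots,r_{j_s}\}\subset{\cal R}$.

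The key steps, in order: (1) Invoke Theorem~\ref{thm:primed}: a single $\textsc{PMedian}$ call is $(\epsilon_{\DP},0)$-DP with respect to the multiset of its $s$ inputs, hence $(\epsilon_{\DP},0)$-DP with respect to the subsampled rows. (2) Since the $d$ coordinates $g_1,\ldots,g_d$ are each computed on the same subsampled rows, apply the (advanced) composition theorem across the $d$ coordinates so that the full vector $g^{(t)}$ is $(\epsilon',0)$-DP with respect to $\{r_{j_1},\ldots,r_{j_s}\}$, where $\epsilon' = O(\sqrt{d\log(1/\beta')}\cdot\epsilon_{\DP})$ — this is the source of the extra $\sqrt d$ factor relative to~\cite{bkm+21}. (3) Apply amplification via subsampling (Theorem~\ref{thm:amp}): since the $s$ rows are subsampled (with replacement) from ${\cal R}$ of size $k$, the output $g^{(t)}$ is $(\frac{6s}{k}\epsilon',0)$-DP with respect to the full database ${\cal R}$; since the output $g^{(t)}$ is revealed to the adversary and $g^{(t)}$ is a post-processing of the private quantities, the transcript ${\cal T}_t$ is $(\frac{6s}{k}\epsilon',0)$-DP with respect to ${\cal R}$. (4) Note that ${\cal T} = {\cal T}_T\circ\cdots\circ{\cal T}_1$ is an adaptive composition of $T$ such mechanisms, so apply Theorem~\ref{thm:ada_c} with per-step privacy $\frac{6s}{k}\epsilon'$ and $\delta_0 = \beta/100$; plugging in $k = O(\epsilon_{\DP}\cdot\Gamma\cdot\sqrt{Td\log(1/\beta)})$ (and recalling $s=\wt O(1)$, so $\Gamma$ absorbs the $s$ and the $\sqrt d$ from step (2)) makes the resulting $\epsilon_0 = \sqrt{2T\ln(100/\beta)}\cdot\frac{6s}{k}\epsilon' + 2T(\frac{6s}{k}\epsilon')^2 \le 1/100$, yielding $(\tfrac{1}{100},\tfrac{\beta}{100})$-DP.

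The routine bookkeeping is checking that the parameter $k$ is large enough that both terms in the advanced composition bound are at most $1/200$; this is a direct substitution exactly as in the proof of Lemma~\ref{lem:T_DP}. The one subtlety — and the step I expect to be the main obstacle — is justifying that the $d$-fold composition across coordinates, followed by subsampling amplification, is the right order of operations and that all $d$ coordinate medians should be treated as being computed on the \emph{same} subsampled set (so that the subsampling amplification is applied once, not $d$ times). One must be careful that $\textsc{PMedian}$'s $(\epsilon_{\DP},0)$-DP guarantee is with respect to changing one of the $s$ inputs, and that a single row change in ${\cal R}$ changes at most one of the $s$ subsampled inputs \emph{to each} of the $d$ median computations simultaneously; hence the right accounting is: $d$-fold composition at level $\epsilon_{\DP}$ per coordinate, then one subsampling amplification, then $T$-fold advanced composition over queries. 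Getting this nesting correct, and confirming that the choice of $k$ in the initialization exactly cancels these factors, is the crux; everything else follows the template of Lemmas~\ref{lem:DP_single_step}--\ref{lem:T_DP} verbatim.
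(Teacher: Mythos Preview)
Your proposal is correct, and in fact more careful than the paper's own proof on exactly the point you flag as the crux. The paper's argument is terser: it simply asserts that each individual \textsc{PMedian} call is amplified to $(\tfrac{6s}{k}\epsilon_{\DP},0)$-DP with respect to ${\cal R}$, and then applies a single advanced composition over all $dT$ calls to reach $\epsilon_0 = \sqrt{2Td\ln(100/\beta)}\cdot(\tfrac{6s}{k}\epsilon_{\DP})+2Td\cdot(\tfrac{6s}{k}\epsilon_{\DP})^2\le 1/100$. This treats each of the $dT$ medians as if it had its own fresh subsample, which is not what the algorithm does: the $d$ coordinate medians within a single query share one subsample. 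Your nested accounting---compose over the $d$ coordinates on the subsampled set, then amplify once per query, then compose over $T$ queries---is the rigorous way to handle the shared subsample, and it lands on the same requirement $k=\wt O(\sqrt{Td})$. So the two routes reach the same destination; yours is the one that actually matches the algorithm as written.

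One small correction to your step (2): advanced composition across the $d$ coordinates produces $(\epsilon',\delta')$-DP with $\delta'>0$, not $(\epsilon',0)$-DP as you wrote. Consequently, in step (3) you cannot invoke Theorem~\ref{thm:amp} as stated (it assumes $\delta=0$); you need the $(\epsilon,\delta)$ extension of subsampling amplification, which the paper acknowledges exists but does not spell out. This is a bookkeeping detail rather than a gap in the strategy, and the resulting $\delta'$ terms are absorbed into the final $\beta/100$ without changing the choice of $k$.
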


\begin{proof}
    By Theorem \ref{thm:primed}, each instance of $\mathrm{primed}$ is $(\epsilon_{\DP}, 0)$-differential private. 
    But since we sample $s=\tilde{O}(1)$ copies of the oblivious algorithm to use, this amplifies the privacy of each $\textsc{PMedian}$ call to $(\frac{6s}{K}\epsilon_{\DP},0)$-DP by Theorem \ref{thm:amp}.
    In total, we have at most $d\cdot T$ instances of $\textsc{PMedian}$.
  
    By the advanced composition theorem, the entire algorithm is $(\epsilon, \delta)$-differential private for    $$\epsilon = \sqrt{2Td\ln(100/\beta)}\cdot(\frac{6s}{k}\epsilon_{\DP})+2Td\cdot (\frac{6s}{k}\epsilon_{\DP})^2\le 1/100$$
    and $\delta=100/\beta$,
    since we set $K=200\cdot6s\epsilon_{\DP}\cdot\sqrt{2Td\ln(100/\beta)}$
    .
\end{proof}

In the following discussion, we follow previous work \cite{hkm+22} and assume the returned solution vectors to the regressions have their coordinates in the range $[-n^{\gamma}, -1/n^{\gamma}] \cup \{0\} \cup [1/n^{\gamma}, n^{\gamma}]$. 


\begin{lemma}\label{lem:utility} 
    With probability at least $1 - \beta$, in all rounds $t \in [T]$ during the update sequence, The algorithm \textsc{AdaptiveRegDP} outputs $\tilde{g}$ that satisfies

    \[\| U_t \tilde{g} - b_t\|_2 \leq (1+\alpha) \min_{x \in \R^{d}}\| U_t x - b_t\|_2, \]
     for underlying design matrix $U_t$ and vector $b_t$.
\end{lemma}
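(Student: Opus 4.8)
The plan is to invoke the two-stage template of \cite{bkm+21,hkm+22}: the preceding lemma already shows that the whole interaction of \textsc{AdaptiveRegDP} with the adaptive adversary is $(\epsilon,\delta)$-differentially private with respect to the database $\mathcal{R}=\{r_1,\dots,r_k\}$ of random strings used to sample the sketches $S_1,\dots,S_k$; I will now turn this privacy into a utility statement via the generalization property of DP (Theorem~\ref{thm:general}). Fixing a round $t$ and writing $x^*_t:=\argmin_{x}\|U_tx-b_t\|_2$, the triangle inequality gives
\begin{align*}
\|U_t g-b_t\|_2\ \le\ \|U_t x^*_t-b_t\|_2+\|U_t(g-x^*_t)\|_2\ \le\ \|U_t x^*_t-b_t\|_2+\sigma_{\max}(U_t)\sqrt d\,\|g-x^*_t\|_\infty ,
\end{align*}
so it suffices to prove $\|g-x^*_t\|_\infty\le \frac{\alpha'}{\sqrt d}\cdot\frac{\|U_t x^*_t-b_t\|_2}{\sigma_{\min}(U_t)}$ with $\alpha'=\alpha/\kappa$; then $\sigma_{\max}(U_t)/\sigma_{\min}(U_t)=\kappa(U_t)\le\kappa$ (Assumption~\ref{assumption:regression}) turns the right-hand side of the display into $(1+\alpha)\|U_tx^*_t-b_t\|_2$, which is the claim (here $\tilde g=g$ is the output of the query step).

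First I would show that enough of the sampled sketches are $\ell_\infty$-accurate. Call $r\in\mathcal{R}$ \emph{good for $t$} if the sketch generated from $r$ satisfies the $\ell_\infty$ bound \eqref{eq:l_infty} for the pair $(U_t,b_t)$. Since $(U_t,b_t)$ is a deterministic function of the update sequence $\bar v_t$, which is a substring of the transcript, the map $r\mapsto\mathbf{1}[r\text{ good for }t]$ is a post-processing of the DP transcript and is therefore a predicate produced by an $(\epsilon,\delta)$-DP mechanism run on $\mathcal{R}$. For a \emph{fixed} pair $(U_t,b_t)$ this predicate has expectation at least $1-\beta'$ over a fresh random string, by the $(\alpha',\beta')$-accuracy of the sketch distribution (Lemma~\ref{lem:sketch}); hence Theorem~\ref{thm:general} with $\epsilon=1/100$ gives that at least a $(1-\beta'-10\epsilon)\ge 0.89$ fraction of $r_1,\dots,r_k$ are good for $t$, except with probability $\delta/\epsilon$. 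The $s=\wt O(1)$ indices $j_1,\dots,j_s$ queried in round $t$ are drawn uniformly with replacement from $[k]$, so by a Hoeffding bound at least a $2/3$ fraction of $S_{j_1},\dots,S_{j_s}$ are good for $t$, except with probability $\exp(-\Theta(s))$.

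Next I would run the per-coordinate private median argument. Because sketches are applied linearly, $\mathrm{sk}^{j_i}_U=S_{j_i}U_t$ and $\mathrm{sk}^{j_i}_b=S_{j_i}b_t$, so each $x_{j_i}$ is the sketch-and-solve solution for $(U_t,b_t)$; if $S_{j_i}$ is good for $t$ then $\|x_{j_i}-x^*_t\|_\infty\le\tau_t:=\frac{\alpha'}{\sqrt d}\cdot\frac{\|U_tx^*_t-b_t\|_2}{\sigma_{\min}(U_t)}$, and in particular $(x_{j_i})_l\in[x^*_{t,l}-\tau_t,\,x^*_{t,l}+\tau_t]$ for every coordinate $l$. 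Thus at least a $2/3$ fraction of the $s$ values handed to $\textsc{PMedian}$ in coordinate $l$ lie in that interval. By Theorem~\ref{thm:primed}, except with probability $\beta/(Td)$, the output $g_l$ has at least $s/2-\Gamma$ of the $s$ values $\ge g_l$ and at least $s/2-\Gamma$ of them $\le g_l$; if $g_l$ exceeded $x^*_{t,l}+\tau_t$ then at most $s/3$ values could be $\ge g_l$, forcing $s/2-\Gamma\le s/3$, which is excluded once $s>6\Gamma$ — and since $\Gamma=\wt O(1/\epsilon_{\DP})=\wt O(1)$ while $s=\wt O(1)$ is ours to choose, this is arranged; the symmetric argument gives $g_l\ge x^*_{t,l}-\tau_t$. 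Hence $\|g-x^*_t\|_\infty\le\tau_t$, exactly the bound the displayed inequality needs. A union bound over all $T$ rounds and all $Td$ invocations of $\textsc{PMedian}$, together with choosing the DP-$\delta$ polynomially small (absorbed into $k=\wt O(\sqrt{Td})$) and $s$ a sufficiently large constant multiple of $\Gamma$, keeps the total failure probability at most $\beta$.

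The main obstacle is the interplay between the two stages and choosing the correct predicate for the generalization step: one must feed Theorem~\ref{thm:general} the event ``sketch $i$ is $\ell_\infty$-accurate for the \emph{adaptively produced} pair $(U_t,b_t)$,'' whose validity for a \emph{fixed} pair is precisely the $\ell_\infty$ guarantee of Lemma~\ref{lem:sketch}, and whose transfer to the adaptive setting is exactly what the differential privacy of the transcript buys. Beyond that the proof is bookkeeping: keeping $\beta'$ a genuine constant so that a constant fraction of sketches survives the $10\epsilon$ loss, keeping $s$ a large enough multiple of $\Gamma$ so the private median lands inside the accurate band, meeting the sample-size hypothesis $k\gtrsim\epsilon^{-2}\log(\epsilon/\delta)$ of Theorem~\ref{thm:general}, and making $\delta$ small enough that $T\delta/\epsilon\le\beta$. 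No idea beyond the $\ell_\infty$ sketch guarantee and DP generalization is required.
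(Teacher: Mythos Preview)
Your proposal is correct and follows essentially the same approach as the paper: establish DP of the transcript with respect to the random strings, invoke the generalization property of DP to guarantee a constant fraction of the sampled sketches satisfy the $\ell_\infty$ bound~\eqref{eq:l_infty}, then use the private-median guarantee coordinatewise to sandwich $g_l$ between two good estimates, and finish with the triangle-inequality/condition-number step. The only cosmetic difference is that you split the ``most sketches are good'' step into two (generalization on all $k$, then Hoeffding for the $s$ subsamples, exactly as in Lemmas~\ref{lem:acc_global}--\ref{lem:acc_sample}), whereas the paper applies Theorem~\ref{thm:general} directly with the subsample of size $s$; the constants differ ($2/3$ and $s>6\Gamma$ versus $4/5$ and $s=100\Gamma$) but the argument is the same.
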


\begin{proof}
Consider any round $t \in [T]$ during the stream, let $(U_t, b_t)$ be the underlying vectors defined by the stream up to round $t$. And let $\sigma_{\min}$ denote the minimum singular value of $U_t$ and $\sigma_{\max}$ denote the maximum. Let $f_{U_t, b_t}(r)$ be the indicator for the following event:
\begin{align*}
    \| x^* - x_t \|_\infty \leq & ~ \frac{\alpha'}{\sqrt{d}}
\| U_tx^* - b_t \|_2 \cdot \frac{1}{\sigma_{\min}} \text{  (i.e., Equation~\eqref{eq:l_infty})}: \\
x^* := & ~ \arg\min_{x \in \R^{d}}
\| U_tx - b_t\|_2\\
x_t := & ~ \arg\min_{x \in \R^{d}}
\| S(U_tx - b_t)\|_2 \text{ where $S$ is generated as Lemma \ref{lem:sketch}, using random string $r$}. 
\end{align*}

For $\epsilon = 1/100$ and $\delta=\beta/100$, observe that $s\gg \frac{1}{\epsilon^2}\log(\frac{2\epsilon}{\delta})$. Thus, we can apply Theorem \ref{thm:general} with $n=s$ to show that

\[
 | \E_{r}[f_{U_t, b_t}(r)] - \frac{1}{s}\sum_{i=1}^s f_{U_t, b_t}(r_{j_i})  |\leq 10\epsilon=1/10.
\]
with probability at least $1-\delta/\epsilon = 1-\beta$. In the following, we condition on the event that this holds.

We have $\E_{r}[f_{U_t, b_t}(r)] \geq 9/10$ by the $(\alpha', \beta')$-accuracy of each copy of sketch. Therefore, at least $4s/5$ of the samples $\{x_{j_i} : i\in [s]\}$ satisfies $\| x^* - x_{j_i}\|_\infty \leq \frac{\alpha'}{\sqrt{d}}\| U_tx^* - b_t\|_2 \cdot \frac{1}{\sigma_{\min}}$. We call such $x_{j_i}$ a ``good approximation''. 

The algorithm runs $\textsc{PMedian}$ on each coordinate $l \in [d]$ across all approximated vectors $x_{j_1}, x_{j_2}, \cdots, x_{j_s}$. For each $l\in[d]$, Theorem \ref{thm:primed} combined with our choice of $s=100\Gamma$ guarantees that with probability at least $1-\beta/(Td)$, we have

    \[ |\{i\in[s]: (x_{j_i})_l \geq (g)_l\} | \geq 4s/10
\text{ and }  |\{i\in[s]: (x_{j_i})_l \leq (g)_l\} | \geq 4s/10.\]

Since there are at least $4s/5$ good approximations satisfying Equation \eqref{eq:l_infty}, this means there exist good approximations $x_{j_p}, x_{j_q}$ such that $(g)_l \in [(x_{j_p})_l, (x_{j_q})_l]$, thus $| (g)_l - (x^*)_l| \leq \frac{\alpha'}{\sqrt{d}}
\| U_tx^* - b_t\|_2 \cdot \frac{1}{\sigma_{\min}}$. This holds simultaneously for all $l \in [d]$ and all $k$ independent copies with probability at least $1- \beta$. 

Condition on the above event, we have
\begin{align*}
    \| U_t\tilde{g} - b_t\|_2 &\leq \| U_tx^*- b_t\|_2 + \| U_t(\tilde{g}-x^*)\|_2\\
    &\leq \| U_tx^*- b_t\|_2 + \sigma_{\max} \| \tilde{g}-x^*\|_2\\
    &\leq \| U_tx^*- b_t\|_2 + \sigma_{\max} (\sqrt{d}\cdot\| \tilde{g}-x^*\|_\infty)\\
    &\leq \| U_tx^*- b_t\|_2 + \frac{\sigma_{\max} }{\sigma_{\min}}(\sqrt{d}\frac{\alpha'}{\sqrt{d}}\| U_tx^*- b_t\|_2)\\
    &= (1+\frac{\sigma_{\max} }{\sigma_{\min}}\alpha') \| U_tx^*- b_t\|_2.
\end{align*}

By setting $\alpha'=\frac{\alpha}{\kappa}$, this gives a $(1+\alpha)$-approximation. Overall, with probability at least $1-\beta$, all the approximation vectors are
accurate to within a multiplicative error of $(1+\alpha)$.
\end{proof}

It remains to show that \textsc{AdaptiveRegDP} is both time- and space-efficient.

\begin{theorem}
Given a sequence of $T$ adaptive updates $\{v_1,\ldots,v_T\}$, algorithm \textsc{AdaptiveRegDP} has the following specifications:
\begin{itemize}
    \item It preprocesses $(U, b)$ in time $\wt O(\sqrt{Td}\cdot (\nnz(U)+\nnz(b)+d^3+d^2\kappa^2/\alpha^2))$;
    \item It uses space $\wt O(\sqrt{T}\cdot d^{2.5}\kappa^2/\alpha^2)$;
    \item Given an update $v_t$ for $t\in [T]$, it takes time $\wt O(\sqrt{Td}\cdot (\nnz(v_t)+d^3+d^2\kappa^2/\alpha^2))$ to update;
    \item It outputs a $(1+\alpha)$-approximate solution $x_t$ in time $\wt O(d^{\omega+1}\kappa^2/\alpha^2)$.
\end{itemize}
\end{theorem}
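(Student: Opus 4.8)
The plan is purely a resource accounting: privacy and correctness were already settled in the privacy lemma preceding the statement and in Lemma~\ref{lem:utility}, so what remains is to substitute the chosen parameters into each phase of \textsc{AdaptiveRegDP} and bound the work done by the composition of a Count Sketch with an SRHT (Lemma~\ref{lem:sketch}). First I would record the parameter magnitudes: since $T,|X|=\poly(n)$ we have $\Gamma=\wt O(1/\epsilon_{\DP})=\wt O(1)$, hence $s=O(\Gamma)=\wt O(1)$ and $k=O(\epsilon_{\DP}\cdot\Gamma\cdot\sqrt{Td\log(1/\beta)})=\wt O(\sqrt{Td})$; and with $\alpha'=\alpha/\kappa$, $\beta'=\Theta(1)$, Lemma~\ref{lem:sketch} makes each $S_i=S_{\textsc{SRHT}}S_{\textsc{CS}}\in\R^{r\times n}$ with $r=\wt O(d/\alpha'^2)=\wt O(d\kappa^2/\alpha^2)$ and intermediate dimension $m=O(d^2+d/(\alpha'^2\beta'))=\wt O(d^2+d\kappa^2/\alpha^2)$.

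For preprocessing I would apply each $S_i$ to $U$ and to $b$ in two stages: applying $S_{\textsc{CS}}\in\R^{m\times n}$ to the $d+1$ columns of $[U\mid b]$ costs $O(\nnz(U)+\nnz(b))$ and produces an $m\times d$ matrix and a length-$m$ vector, and then applying $S_{\textsc{SRHT}}=\tfrac{1}{\sqrt r}PHD\in\R^{r\times m}$ to those $d+1$ columns costs $\wt O(md)$ via the fast Walsh--Hadamard transform (padding $m$ to the next power of two, then subsampling $r$ rows). This is $\nnz(U)+\nnz(b)+\wt O(md)=\nnz(U)+\nnz(b)+\wt O(d^3+d^2\kappa^2/\alpha^2)$ per copy, and multiplying by $k=\wt O(\sqrt{Td})$ gives part~(1). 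Storing $\mathrm{sk}_U^i\in\R^{r\times d}$ and $\mathrm{sk}_b^i\in\R^r$ for all $k$ copies uses $k\cdot(rd+r)=\wt O(\sqrt{Td}\cdot d^2\kappa^2/\alpha^2)=\wt O(\sqrt T\,d^{2.5}\kappa^2/\alpha^2)$ words, with the sketch descriptions (pairwise/$4$-wise hashes for $S_{\textsc{CS}}$ and the diagonal for $S_{\textsc{SRHT}}$) contributing only lower-order terms; this gives part~(2). For an update $v_t$ I would compute $S_i v_t$ by the same two stages, $O(\nnz(v_t))$ for $S_{\textsc{CS}}$ and then $\wt O(md)$ for $S_{\textsc{SRHT}}$ (only $\wt O(m)$ when $v_t$ updates $b$), after which adding it into $\mathrm{sk}_U^i$ or $\mathrm{sk}_b^i$ is dominated; summed over the $k$ copies this is $\wt O(\sqrt{Td}\cdot(\nnz(v_t)+d^3+d^2\kappa^2/\alpha^2))$, which is part~(3).

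For the query I would, for each of the $s=\wt O(1)$ sampled copies, solve the $r\times d$ least-squares problem $\min_x\|\mathrm{sk}_U^i x-\mathrm{sk}_b^i\|_2$ exactly by forming the normal matrix $(\mathrm{sk}_U^i)^\top\mathrm{sk}_U^i$ in time $\Tmat(d,r,d)$, inverting this $d\times d$ matrix in $O(d^\omega)$ time, and applying the inverse to $(\mathrm{sk}_U^i)^\top\mathrm{sk}_b^i$; since $r=\wt O(d\kappa^2/\alpha^2)$ this is $\wt O(d^{\omega+1}\kappa^2/\alpha^2)$, and together with the $d$ invocations of $\textsc{PMedian}$ on lists of length $s$ (each $\wt O(1)$ by Theorem~\ref{thm:primed}) it yields part~(4). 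Solving the sketched problem \emph{exactly} matters here, because the $\ell_\infty$ bound of Equation~\eqref{eq:l_infty} used by Lemma~\ref{lem:utility} is a statement about the true minimizer of the sketched regression.

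I do not expect a deep obstacle --- each estimate above is routine --- but two points need care. The first is threading the scaled accuracy $\alpha'=\alpha/\kappa$ correctly through $r$ and $m$ so that the $\kappa^2/\alpha^2$ factors land exactly where the statement puts them, and remembering that every per-copy cost is multiplied by $k=\wt O(\sqrt{Td})$ rather than $\wt O(\sqrt T)$: the extra $\sqrt d$ is precisely the price of running $d$ coordinate-wise private medians through advanced composition, and one should double-check that this $k$ is the same one used in the $(\epsilon,\delta)$-privacy proof established just above. The second is justifying the $\wt O(md)$ cost of applying the SRHT to a count-sketched matrix --- a fast Hadamard transform on each of the $d+1$ columns, with power-of-two padding --- which is what keeps preprocessing and updates in input-sparsity time up to the unavoidable $\poly(d)$ overhead.
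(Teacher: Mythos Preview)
Your proposal is correct and follows essentially the same approach as the paper: both proofs are pure resource accounting, substituting $k=\wt O(\sqrt{Td})$, $r=\wt O(d\kappa^2/\alpha^2)$, and $m=\wt O(d^2+d\kappa^2/\alpha^2)$ into each phase, bounding the Count-Sketch stage by input sparsity and the SRHT stage by $\wt O(md)$, and noting that the sketched $r\times d$ regression costs $\wt O(rd^{\omega-1})$. Your added remarks about power-of-two padding for the Hadamard transform and the need to solve the sketched problem exactly (so that the $\ell_\infty$ guarantee applies) are sound refinements the paper leaves implicit.
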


\begin{proof}
We prove the theorem item by item.

\vspace{0.2cm}
\noindent{\bf Preprocessing time.} During preprocessing, we prepare $k=\wt O(\sqrt{Td})$ independent copies of sketching matrices due to Lemma~\ref{lem:sketch} and compute $S_iU, S_ib$ for all $i\in [k]$. Note that $S_i$ is a composition of a Count Sketch matrix and an SRHT matrix, so $S_i U$ takes time $O(\nnz(U))$ for the Count Sketch, and this results in a matrix of size $m\times d$, applying SRHT to this matrix takes time $\wt O(md)=\wt O(d^3+d^2\kappa^2/\alpha^2)$. Hence, the overall time for preprocessing is
\begin{align*}
    \wt O(\sqrt{Td}\cdot (\nnz(U)+\nnz(b)+d^3+d^2\kappa^2/\alpha^2)).
\end{align*}

\vspace{0.2cm}
\noindent{\bf Space usage.} The algorithm stores ${\cal R}$ and $\mathrm{sk}^i_U, \mathrm{sk}^i_b$ for all $i \in [k]$. We start by considering the random strings to generate $k$ pairs of SRHT and Count Sketch matrices. To store the pairwise and $4$-wise independent hash functions for Count Sketch, we only need $O(1)$ bits of space. To store an $r \times m$ SRHT matrix $S_{\textsc{SRHT}} = \frac{1}{\sqrt{r}} P H D$, since $H$ is deterministic, we only consider $P \in \R^{r \times m}$ and $D \in \R^{m \times m}$. $P$ contains exactly one $1$ in each row (and contains $0$ everywhere else), thus can be stored in $O(r\log m)$ bits of space. $D$ is a diagonal matrix containing $\{-1, 1\}$, which can be stored in $O(m)$ bits of space. Therefore, $R$ in total takes \[O(\frac{d\log^3 n}{\alpha'^2}\log m + d^2+\frac{d}{\alpha'^2}) = O(\frac{d\kappa^2\poly\log n}{\alpha^2} + d^2)\] bits of space.

For $\mathrm{sk}^k_U, \mathrm{sk}^k_b$ for $i \in [k]$, the total bits of space is 
\begin{align*}
    O(k\cdot r\cdot d\log n) &= O(\poly\log n \cdot \log(\frac{T}{\beta}) \cdot \sqrt{Td\log(\frac{1}{\beta})} \cdot \frac{d\kappa^2\log^3 n}{\alpha^2} \cdot d\log n) \\
    &= O(\frac{d^{2.5}\kappa^2\sqrt{T}}{\alpha^2}\cdot\poly\log (n,L, 1/\beta))
\end{align*}
assuming a word size of $O(\log n)$ bits. This is also asymptotically the total space usage.

\vspace{0.2cm}
\noindent{\bf Update time.} For update, we simply apply the sketch $S_i$ to the update $v_t$ for all $i\in [k]$, hence the total update time is $\wt O(\sqrt{Td}\cdot (\nnz(v_t)+d^3+d^2\kappa^2/\alpha^2))$.

\vspace{0.2cm}
\noindent{\bf Query time.} To answer the query, we need to solve a sketched regression for $s$ instances, where each regression could be solved in time $\wt O(rd^{\omega-1})=\wt O(d^{\omega+1}\kappa^2/\alpha^2)$. The private median is then performed over $s$ numbers, and each private median runs in time $\wt O(s)$. Repeating this procedure for all $d$ coordinates, the overall time for forming $g$ is then $\wt O(d)$. Hence, the query time is
\begin{align*}
& ~ \wt O(d^{\omega+1}\kappa^2/\alpha^2). \qedhere
\end{align*}
\end{proof}

\subsection{Improved Algorithm with Bounded Computation Path Technique}

When the condition number $\kappa$ is as large as $\poly n$, a quadratic dependence on $\kappa$ is prohibitive. We provide an algorithm for such case based on the \emph{bounded computation path technique}~\cite{hkm+22}. Let ${\cal P}$ denote the set of all possible computation paths given the adaptive update sequence $\{v_1,\ldots,v_T\}$. The intuition is that the total number of possible computation paths $|{\cal P}|$ is at most $n^{\Theta(dT)}$, hence we could prepare a large Gaussian sketching matrix with size roughly $\log |{\cal P}|$, that is guaranteed to succeed against all such paths. In the case where the updates are stable, e.g., when the updates are always concentrated in a few coordinates, then $|{\cal P}|$ could be much smaller than $n^{\Theta(T)}$. This leads to a more efficient alternative and a better dependence on the condition number $\kappa$. We will refer to this algorithm as \textsc{AdaptiveRegPath}.

\vspace{0.2cm}
\noindent{\bf Initialization.}
\begin{itemize}
    \item (Parameters for rounding): Let ${\cal P}$ be the set of all possible output sequences of the algorithm and let $\beta_0=\beta/(C\cdot |{\cal P}|)$ for large constant $C$;
    \item (Parameters for sketches): Let $r=O((d+\log (1/\beta_0))/\alpha^2)$, generate the Gaussian sketching matrix $S\in \R^{r\times n}$ by a length-$\tilde{O}(r)$ bits random seed $\texttt{r}_S$;
    \item Let ${\rm sk}_U=SU$ and ${\rm sk}_b=Sb$.
\end{itemize}

\vspace{0.2cm}
\noindent{\bf Update.}
\begin{itemize}
    \item Receive update $v_t$;
    \item Compute $Sv_t$ using $\texttt{r}_G$ and Lemma~\ref{lem:generate_gaussian}, if $v_t$ is an update to $U$ then ${\rm sk}_U\leftarrow {\rm sk}_U+Sv_t$, otherwise ${\rm sk}_b\leftarrow {\rm sk}_b+Sv_t$.
\end{itemize}

\vspace{0.2cm}
\noindent{\bf Query.}
\begin{itemize}
    \item Compute $x_t=\arg\min_{x\in \R^d} \|{\rm sk}_{U_t}x-{\rm sk}_{b_t} \|_2$;
    \item Round $x_t$ to the nearest point in any sequence in ${\cal P}$, denote it by $g$;
    \item Output $g$.
\end{itemize}

We start by showing the utility guarantee of \textsc{AdaptiveRegPath} when the updates are \emph{oblivious}.

\begin{claim}\label{claim:gaussian}
    If the update sequence is oblivious, at each round $t \in [T]$, \textsc{AdaptiveRegPath} outputs a vector $g$ that satisfies \[\| U_tg - b_t\|_2 \leq (1 + \alpha)\min_{x \in \R^d} \| U_tx - b_t\|_2\]
    for underlying input matrix $U_t$ and vector $b_t$, with probability at least $1- \beta_0$.
\end{claim}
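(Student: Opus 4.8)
The plan is to reduce Claim~\ref{claim:gaussian} to the classical sketch-and-solve guarantee for Gaussian oblivious subspace embeddings, which applies cleanly here precisely because obliviousness decouples the input from the sketch. The only genuinely algorithm-specific piece is controlling the final rounding step against the grid defining $\mathcal{P}$.

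First I would record the key structural observation: since the update sequence is oblivious, the pair $(U_t, b_t)$ is determined by the stream alone and is independent of the random seed used to generate $S$. Consequently I may treat $(U_t,b_t)$ as fixed and invoke the subspace embedding property of a Gaussian sketch. For the fixed $(d+1)$-dimensional subspace $V_t = \mathrm{colspan}([U_t \mid b_t]) \subseteq \R^n$, a Gaussian matrix $S \in \R^{r \times n}$ with $r = O((d + \log(1/\beta_0))/\alpha^2)$ rows satisfies, with probability at least $1 - \beta_0$, that $\|Sy\|_2 = (1\pm\alpha_0)\|y\|_2$ for all $y \in V_t$, where $\alpha_0$ is a chosen constant fraction of $\alpha$; this is the standard consequence of distributional Johnson--Lindenstrauss combined with an $\epsilon$-net over the unit sphere of $V_t$ (see, e.g., \cite{w14}), and by Definition~\ref{def:subspace-embedding} it makes $S$ an $(\alpha_0,\beta_0)$-OSE for $[U_t \mid b_t]$. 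In particular $\|S(U_t x - b_t)\|_2 = (1\pm\alpha_0)\|U_t x - b_t\|_2$ for every $x \in \R^d$.

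Next I would run the standard sketch-and-solve argument. Write $x^* = \arg\min_{x}\|U_t x - b_t\|_2$, let $\OPT_t = \|U_t x^* - b_t\|_2$, and let $x_t$ minimize the sketched objective. Then
\[
\|U_t x_t - b_t\|_2 \;\le\; \frac{1}{1-\alpha_0}\,\|S(U_t x_t - b_t)\|_2 \;\le\; \frac{1}{1-\alpha_0}\,\|S(U_t x^* - b_t)\|_2 \;\le\; \frac{1+\alpha_0}{1-\alpha_0}\,\OPT_t,
\]
where the middle step uses optimality of $x_t$ for the sketched problem. Choosing the hidden constant in $r$ so that $\alpha_0 \le \alpha/8$ gives $\frac{1+\alpha_0}{1-\alpha_0} \le 1 + \alpha/2$, hence $\|U_t x_t - b_t\|_2 \le (1+\alpha/2)\,\OPT_t$. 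Finally, for the rounding step, $g$ is the closest point to $x_t$ among the sequences in $\mathcal{P}$, so $\|U_t g - b_t\|_2 \le \|U_t x_t - b_t\|_2 + \|U_t(g - x_t)\|_2 \le (1+\alpha/2)\,\OPT_t + \sigma_{\max}(U_t)\sqrt{d}\,\|g - x_t\|_\infty$. Under the stated range assumptions the coordinates of $x_t$ are bounded by $\poly(n,\kappa)$ and $\sigma_{\max}(U_t) \le \poly(n)$, so taking the grid underlying $\mathcal{P}$ to have spacing $1/\poly(n)$ for a sufficiently large polynomial (which changes $\log|\mathcal{P}|$, and hence $r$, only by $\poly\log$ and constant factors respectively) makes $\sigma_{\max}(U_t)\sqrt{d}\,\|g - x_t\|_\infty \le (\alpha/2)\,\OPT_t$; here I use that $\OPT_t$ is either $0$ or at least $1/\poly(n)$, with the case $\OPT_t = 0$ handled by noting the subspace embedding forces $U_t x_t = b_t$ and rounding $x^*$ itself preserves the bound. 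Combining the three displays yields $\|U_t g - b_t\|_2 \le (1+\alpha)\,\OPT_t$ with probability at least $1 - \beta_0$, which is the claim.

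The main obstacle is the rounding bookkeeping rather than the sketching: one must fix the discretization parameter so that (i) the additive rounding error is dominated by $\alpha\cdot\OPT_t$, which requires a lower bound on $\OPT_t$ and upper bounds on $\sigma_{\max}(U_t)$ and on the magnitude of the sketched solution's coordinates, and (ii) simultaneously $|\mathcal{P}| \le (n\kappa)^{\Theta(dT)}$ so that the later union bound over computation paths (the part that actually handles adaptivity) goes through with the chosen $\beta_0 = \beta/(C|\mathcal{P}|)$. For the oblivious statement in isolation this is routine, but arriving at a single self-consistent choice of the grid that serves both this utility bound and the downstream adaptivity argument is where the care lies; everything else is a direct application of Gaussian OSE guarantees and the triangle inequality.
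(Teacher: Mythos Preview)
Your proposal is correct and follows the same core approach as the paper: invoke the Gaussian oblivious subspace embedding guarantee with $r = O((d+\log(1/\beta_0))/\alpha^2)$ rows, then apply the standard sketch-and-solve inequality. The paper's own proof is in fact a single sentence citing \cite{w14} for the OSE property and says nothing further; in particular it does not explicitly address the rounding step at all, whereas you carefully bound the additive error $\sigma_{\max}(U_t)\sqrt d\,\|g-x_t\|_\infty$ and discuss the lower bound on $\OPT_t$ needed for that bound to be absorbed into $(\alpha/2)\OPT_t$. So your argument is strictly more detailed than what the paper provides, and the extra bookkeeping you flag (grid spacing versus $|\mathcal P|$, the $\OPT_t=0$ edge case) is precisely the part the paper elides.
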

This follows from our setting of $r = O((d + \log(1/\beta_0))/\alpha^2)$, which is sufficient for the distribution of Gaussian matrices to be an $(\alpha, \beta_0)$-OSE \cite{w14}.

\begin{lemma}\label{lemma:pseudorandom}
    The Gaussian sketching matrix $S$ in \textsc{AdpativeRegPath} can be pseudo-randomly generated using a random seed of length $\tilde{O}(d+\log(1/\beta_0)/\alpha^2)$ bits, while still satisfying Claim \ref{claim:gaussian}.
\end{lemma}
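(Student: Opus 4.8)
The plan is to notice that the only property of the Gaussian matrix $S$ that Claim~\ref{claim:gaussian} uses is that $S$ is an $(\alpha,\beta_0)$-OSE (Definition~\ref{def:subspace-embedding}) for the column space of $U_t$, and then to derandomize the OSE property via bounded independence. First I would recall the standard reduction: for a fixed $d$-dimensional subspace $E\subseteq\R^n$, if $\big|\|Sy\|_2^2-\|y\|_2^2\big|\le \alpha/4$ holds for every $y$ in a fixed $(1/4)$-net $N_E$ of the unit sphere of $E$ (so $|N_E|\le 9^d$), then a short net argument upgrades this to $\big|\|Sx\|_2^2-\|x\|_2^2\big|\le\alpha$ for all $x\in E$. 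Hence it suffices to produce a distribution over $S$, sampleable from a short seed, such that for each fixed unit vector $y\in\R^n$,
\[
\Pr_S\!\big[\,\big|\|Sy\|_2^2-\|y\|_2^2\big|>\alpha/4\,\big]\;\le\;\beta_0\,9^{-d}\;=:\;\delta,
\qquad \log(1/\delta)=\Theta\big(d+\log(1/\beta_0)\big).
\]

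Next I would replace the i.i.d.\ $\mathcal N(0,1/r)$ entries of $S$ either by $\pm 1/\sqrt r$ Rademacher entries (the cleanest choice, since scaled sign matrices are classical OSEs) or by truncated and rounded Gaussians with $O(\log n)$ bits of precision. For such $S$ and a fixed unit vector $y$, the quantity $\|Sy\|_2^2$ is a degree-$2$ polynomial in the $rn$ entries of $S$, so for any even integer $L$ the central moment $\E\big[(\|Sy\|_2^2-\|y\|_2^2)^L\big]$ is a sum of monomials of degree at most $2L$ in those entries; its value is therefore unchanged if the $rn$ entries are drawn from a $2L$-wise independent family rather than a fully independent one. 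Combining this with the usual $\chi^2$/Hanson--Wright moment estimate $\E\big[(\|Sy\|_2^2-\|y\|_2^2)^L\big]\le (CL/r)^{L/2}$ (valid as long as $L\le c\,r\alpha^2$) and Markov's inequality yields
\[
\Pr\!\big[\,\big|\|Sy\|_2^2-\|y\|_2^2\big|>\alpha/4\,\big]\;\le\;(4/\alpha)^L\,(CL/r)^{L/2}\;\le\;2^{-L/2}
\]
once $r\ge C'L/\alpha^2$. Choosing $L=\Theta(d+\log(1/\beta_0))$ (an even integer) makes the right-hand side at most $\delta$, and the side condition $r\ge C'L/\alpha^2$ is exactly the choice $r=O((d+\log(1/\beta_0))/\alpha^2)$ fixed in the initialization of \textsc{AdaptiveRegPath}. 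So it is enough that the $rn$ entries of $S$ form a $2L$-wise, i.e.\ $\Theta(d+\log(1/\beta_0))$-wise, independent family; such a family over $rn=\poly(n)$ variables living in a field of size $\poly(n)$ is generated from a seed of $O\big((d+\log(1/\beta_0))\log(rn)\big)=\wt O\big(d+\log(1/\beta_0)\big)\le\wt O(r)$ bits, and any single entry $S_{ij}$ is recoverable from the seed by one polynomial evaluation over that field, hence in $\poly\log n$ time --- this is precisely the statement invoked as Lemma~\ref{lem:generate_gaussian}, and it is what lets the update step compute $Sv_t$ in $\wt O(\nnz(v_t)\cdot r)$ time without ever materializing $S$. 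The outer union bound over the (at most $|{\cal P}|$) fixed subspaces that can arise along adaptive computation paths is unaffected, since the pseudorandom seed is drawn once at initialization and is independent of the adversary's choices.

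The main obstacle is the bookkeeping inside the moment/net calculation rather than any conceptual difficulty: one has to check (i) that for the surrogate entries the bound $\E[(\|Sy\|_2^2-\|y\|_2^2)^L]\le(CL/r)^{L/2}$ genuinely holds up to $L=\Theta(r\alpha^2)$ using only $2L$-wise independence, which is a standard but slightly delicate count of the monomials that survive limited independence (and, in the rounded-Gaussian variant, absorbing the $\poly(n)^{-1}$ truncation/rounding bias into $\alpha$); and (ii) that the net-argument constants --- the mesh of $N_E$, the bound $|N_E|\le 9^d$, and the passage from the net to all of $E$ --- are tuned so that the final distortion is at most $\alpha$ with failure probability at most $\beta_0$. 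All of these mirror the classical derandomization of Johnson--Lindenstrauss and subspace embeddings via bounded independence, and nothing in the adaptive or turnstile setting interacts with them.
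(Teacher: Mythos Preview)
Your proposal is correct and follows essentially the same approach as the paper's own proof sketch: reduce the OSE property to JL-type concentration on a $9^d$-sized net of the subspace, observe that this concentration only needs $O(d+\log(1/\beta_0))$-wise (the paper says $r$-wise) independence among the entries of $S$, and then generate such limited independence from a seed of length $\wt O(r)$. Your write-up is in fact more careful than the paper's sketch---you make the moment argument explicit and flag the need to discretize the Gaussian entries so that a $k$-wise independent family over a finite field applies---whereas the paper simply points to opening up the OSE proof in \cite{w14,w12}.
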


\begin{proof}[Proof Sketch]
This can be seen by opening up the proof e.g., in Section 2.1 of \cite{w14} that the Gaussian distribution satisfies OSE. The i.i.d. property is used for showing that the random Gaussian matrix is a JL transform (Lemma 2.12 in \cite{w12}). However, there one needs to consider at most an $O(\log(9^d/\beta_0)/\alpha^2)$-tuple of Gaussian variables, where $9^d$ is the size of the $1/2$-net for a unit sphere in the subspace that we hope to embed in. Therefore the proof goes through as long as the entries of $S$ are generated using at least $r = O(d+\log(1/\beta_0)/\alpha^2)$-wise independence. It is well known that such an $r$-wise independent hash function can be compressed as a length $\tilde{O}(r)$-bit seed, which concludes the proof.
\end{proof}

\begin{lemma}
     With probability at least $1-\beta$, in all rounds $t \in [T]$, \textsc{AdaptiveRegPath} given an adaptive update $v_t$ outputs a vector $g$ that satisfies:
     
     \[\| U_t g - b_t\|_2 \leq (1+\alpha) \min_{x \in \R^{d}}\| U_t x - b_t\|_2, \]
     for underlying input matrix $U_t$ and vector $b_t$.
\end{lemma}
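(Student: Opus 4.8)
The plan is to derive this adaptive guarantee from the oblivious one (Claim~\ref{claim:gaussian}) via the \emph{bounded computation path} argument of~\cite{bjwy20}, in the form used for \textsc{AdaptiveRegPath} in~\cite{hkm+22}. First I would reduce to a deterministic adversary $\mathcal{B}$: a randomized adversary is a mixture of deterministic ones, so it suffices to bound the failure probability against every deterministic $\mathcal{B}$. With $\mathcal{B}$ fixed, the only randomness left in \textsc{AdaptiveRegPath} is the seed $\texttt{r}_S$ generating the Gaussian sketch $S$ (Lemma~\ref{lemma:pseudorandom}); once $\texttt{r}_S$ is drawn, the entire interaction is deterministic. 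The key structural point is that every output is rounded onto a fixed finite grid, so the set $\mathcal{P}$ of output sequences the algorithm can ever produce is finite with $|\mathcal{P}|\le (n\kappa)^{\Theta(dT)}$. I would set $\beta_0 = \beta/(C\,T\,|\mathcal{P}|)$ for a large constant $C$; the extra factor $T$ is free since $T=\poly(n)$ and hence $\log(1/\beta_0)=\Theta(\log(|\mathcal{P}|/\beta))$, leaving the complexity bounds of Theorem~\ref{thm:log_kappa} unchanged.

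Next I would fix a single path $P=(g_1,\dots,g_T)\in\mathcal{P}$ and feed $g_1,\dots,g_T$ to $\mathcal{B}$ as if they were the algorithm's outputs; because $\mathcal{B}$ is deterministic, this produces a \emph{non-random} (hence oblivious) update sequence $v_1^P,\dots,v_T^P$ and non-random matrices $(U_t^P,b_t^P)_{t\in[T]}$ that are independent of $\texttt{r}_S$. Applying Claim~\ref{claim:gaussian} to this oblivious sequence and union-bounding over $t\in[T]$, with probability at least $1-T\beta_0$ over $\texttt{r}_S$ — call this event $\mathcal{E}_P$ — running \textsc{AdaptiveRegPath} on $(v_t^P)$ outputs a $(1+\alpha)$-approximation at every round. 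A union bound over all $P\in\mathcal{P}$ then yields the event $\mathcal{E}:=\bigcap_{P\in\mathcal{P}}\mathcal{E}_P$ with $\Pr_{\texttt{r}_S}[\mathcal{E}]\ge 1-|\mathcal{P}|\,T\beta_0\ge 1-\beta$, on which the sketch is simultaneously accurate along every path.

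I would then condition on $\mathcal{E}$ and run the real adaptive interaction with $\mathcal{B}$, arguing by induction on $t$ that it traces out some $P^\star\in\mathcal{P}$: the real prefix $(g_1,\dots,g_{t-1})$ consists of grid points and is therefore the length-$(t-1)$ prefix of the full real output sequence $P^\star$; since $\mathcal{B}$ is deterministic and depends only on this prefix, the real pair $(U_t,b_t)$ equals $(U_t^{P^\star},b_t^{P^\star})$, so the real round-$t$ computation $x_t=\arg\min_x\|S(U_tx-b_t)\|_2$ followed by rounding is \emph{identical} to what \textsc{AdaptiveRegPath} does at round $t$ when run on the oblivious sequence $(v_t^{P^\star})$. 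Under $\mathcal{E}_{P^\star}$ this output $g_t$ is a $(1+\alpha)$-approximation for $(U_t,b_t)$, and since this holds for every $t\in[T]$ simultaneously on $\mathcal{E}$, the lemma follows.

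The hard part is keeping the argument non-circular: the collection $\mathcal{P}$ and the induced matrix sequences $(U_t^P,b_t^P)$ must be pinned down \emph{before} $\texttt{r}_S$ is sampled, which is exactly why both the reduction to a deterministic adversary and the rounding step (which makes $|\mathcal{P}|$ finite, of size $(n\kappa)^{\Theta(dT)}$) are essential — the induction then shows the genuine adaptive run coincides with one of these pre-specified oblivious paths. The only quantitative point to verify is that union-bounding over the exponentially large $\mathcal{P}$ costs merely $\log|\mathcal{P}|=\Theta(dT)$ in the sketch size $r=O((d+\log(1/\beta_0))/\alpha^2)$, via Lemma~\ref{lemma:pseudorandom} and the Gaussian OSE bound; this is precisely what keeps the dependence on $\kappa$ only logarithmic. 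One should also double-check that the rounding grid is fine enough that Claim~\ref{claim:gaussian}'s $(1+\alpha)$ guarantee already absorbs the rounding error, which follows from the standing boundedness conventions ($\|U_t\|,\|b_t\|\le n^\gamma$ and $\kappa\le\poly(n)$, with $\min_x\|U_tx-b_t\|_2$ bounded below by $1/\poly(n)$ when nonzero).
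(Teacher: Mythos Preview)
Your proposal is correct and takes essentially the same approach as the paper: reduce to a deterministic adversary, enumerate the at-most-$|\mathcal{P}|$ rounded output sequences (each of which fixes an oblivious update stream), invoke Claim~\ref{claim:gaussian} on each such stream, and union-bound. The paper's proof is a terse one-paragraph version of exactly this argument; your write-up is more carefully fleshed out (the explicit induction that the real run coincides with some $P^\star\in\mathcal{P}$, and the explicit factor $T$ in $\beta_0$ which the paper silently absorbs into the constant $C$), but the underlying idea is identical.
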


\begin{proof}
    As argued in \cite{bjwy20}, we may assume the adversary to be deterministic. This means, in particular, that the output sequence we provide to the adversary fully determines its stream of updates $v_1, v_2, \cdots, v_T$. Note that the collection of all distinct output sequence after the rounding step is at most $|{\cal P}|$. Each output sequence as above uniquely determines a corresponding stream of updates for the deterministic adversary. For every stream, with probability at least $1-\beta_0$, \textsc{AdaptiveRegPath} will be correct in every round. Applying the union bound over these streams, we can conclude that with probability at least $1 - O(| {\cal P} |)\cdot \beta_0 = 1 - \beta$, it outputs a $(1+\alpha)$-approximation in every round against any adversary.
\end{proof}

\begin{lemma}\label{lem:generate_gaussian}
    The time to generate $S$ via $\texttt{r}_G$ is $\tilde{O}(rn)$.
\end{lemma}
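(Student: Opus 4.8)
The plan is to observe that $S\in\R^{r\times n}$ has exactly $rn$ entries, so $\tilde O(rn)$ is the best one could hope for, and to show that generating all of them from the short seed $\texttt{r}_G$ can be done within a polylogarithmic factor of this. By Lemma~\ref{lemma:pseudorandom}, the entries of $S$ are produced from an $r$-wise independent hash family, which we may take to be the standard construction: a uniformly random polynomial $P$ of degree less than $r$ over a finite field $\mathbb{F}_q$ with $q=\poly(n,1/\beta_0)$, so that a word has $O(\log n)$ bits and $\texttt{r}_G$ of length $\tilde O(r)$ bits encodes the $r$ coefficients of $P$. Evaluating $P$ at a single point by Horner's rule costs $O(r)$ word operations, so the naive scheme of evaluating $P$ at the $rn$ index positions would cost $O(r^2 n)$, which is \emph{not} $\tilde O(rn)$ since $r=\poly(d)$ is not polylogarithmic. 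The whole point of the lemma is that this naive bound can be avoided.

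The key tool is fast multipoint polynomial evaluation: a degree-$(<r)$ polynomial over $\mathbb{F}_q$ can be evaluated at any $r$ prescribed points using $\tilde O(r)$ field operations via the classical FFT-based subproduct-tree algorithm. First I would partition the $rn$ index positions into $n$ batches of size $r$ and run fast multipoint evaluation on each batch, for a total of $\tilde O(rn)$ field operations; since each field operation on $O(\log n)$-bit words costs $\tilde O(1)$ time, this yields $\tilde O(rn)$ time to obtain $O(\log n)$ pseudorandom bits for every entry of $S$. Second, I would convert the bits at each position into the corresponding (discretized, truncated) Gaussian value: this is an $\tilde O(1)$-per-entry operation (e.g.\ inverse-CDF sampling against the Gaussian CDF at $O(\log n)$-bit precision, using Fact~\ref{fact:generate_order}-style constant-time primitives), contributing another $\tilde O(rn)$ overall. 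Summing the two steps gives the claimed $\tilde O(rn)$ bound.

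The main obstacle I anticipate is not the running-time bookkeeping but arguing that replacing truly Gaussian i.i.d.\ entries by $r$-wise independent, $O(\log n)$-bit-precision pseudo-Gaussian entries does not invalidate the OSE guarantee of Claim~\ref{claim:gaussian}. This is handled exactly as in the proof sketch of Lemma~\ref{lemma:pseudorandom}: the JL/OSE analysis only ever examines $O(\log(9^d/\beta_0)/\alpha^2)$-tuples of coordinates on a $\tfrac12$-net of the target subspace, so $r$-wise independence suffices, and rounding each entry to $O(\log n)$ bits perturbs $\|SUx\|_2^2$ by at most a $1/\poly(n)$ multiplicative factor given that all relevant quantities lie in $[-n^\gamma,n^\gamma]$ by the standing boundedness assumption; this can be folded into $\alpha$ and $\beta_0$. (One could alternatively sidestep the conversion step entirely by using $r$-wise independent Rademacher entries, which satisfy the same net-based OSE argument, but I would keep the Gaussian entries for consistency with Claim~\ref{claim:gaussian}.) Finally, I would note the matching trivial lower bound: any algorithm that instantiates $S$ explicitly must touch its $rn$ entries, so $\tilde O(rn)$ is optimal up to the polylogarithmic factors.
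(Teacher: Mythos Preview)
Your proposal is correct and takes essentially the same approach as the paper: both rely on fast multipoint polynomial evaluation to evaluate the degree-$(<r)$ hash polynomial at $r$ points in $\tilde O(r)$ time, repeated over $n$ batches (the paper phrases the batches as the $n$ columns of $S$). Your additional discussion of the bits-to-Gaussian conversion and the correctness of pseudo-Gaussians is sound but goes beyond what the lemma asserts, since the latter is already covered by Lemma~\ref{lemma:pseudorandom}.
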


\begin{proof}[Proof Sketch]
    Note that to generate one column $S_i$ of the sketching matrix, we need $r$ evaluations of the $r$-wise independent hash function. Using fast multipoint evaluation of polynomials \cite{G13}, these evaluations can be done in $\tilde{O}(r)$ time. Hence, generating $S$ takes $\wt O(rn)$ time, as desired.
\end{proof}

\begin{theorem}
Given a sequence of $T$ adaptive updates $\{v_1,\ldots,v_T\}$, algorithm \textsc{AdaptiveRegPath} has the following specifications:
\begin{itemize}
    \item It preprocesses $(U, b)$ in time $\wt O({nd^{\omega-2}(d + \log |{\cal P}| + \log \frac{1}{\beta})}/{\alpha^2})$;
    \item It uses space $\wt O({d(d + \log |{\cal P}| + \log \frac{1}{\beta})}/{\alpha^2})$;
    \item Given an update $v_t$ for $t\in [T]$, it takes time $\wt O( ( d + \log |{\cal P}| + \log \frac{1}{\beta} ) / \alpha^2)$ to update;
    \item It outputs a $(1+\alpha)$-approximate solution $x_t$ in time $\wt O({d^{\omega-1}(d + \log |{\cal P}| + \log \frac{1}{\beta})}/{\alpha^2} )$.
\end{itemize}
Here $\beta$ denotes the failure probability.
\end{theorem}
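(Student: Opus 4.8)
The plan is to verify the four resource bounds in turn; correctness against the adaptive adversary is already in hand from the bounded-computation-path union bound proved above, so only the time and space accounting remains. Throughout, I will write $r = O((d + \log|{\cal P}| + \log\frac{1}{\beta})/\alpha^2)$ for the number of rows of the Gaussian sketch $S\in\R^{r\times n}$; this is exactly the value forced by $\beta_0 = \beta/(C|{\cal P}|)$ together with the bound $r = O((d+\log(1/\beta_0))/\alpha^2)$ from Claim~\ref{claim:gaussian}, and since $\alpha\in(0,1)$ we automatically have $r = \Omega(d/\alpha^2)\ge d$ — a fact I will use in every matrix-multiplication step. Every $\wt O(\cdot)$ below absorbs the $\poly\log n$ overhead coming from the pseudorandom seed (Lemma~\ref{lemma:pseudorandom}), from fast multipoint polynomial evaluation, and from the $O(\log n)$ word size.

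For the \emph{preprocessing} bound I would first sample the seed $\texttt{r}_G$ of length $\wt O(r)$ in $\wt O(r)$ time (Lemma~\ref{lemma:pseudorandom}); the dominant cost is then forming $\mathrm{sk}_U = SU$. By Lemma~\ref{lem:generate_gaussian} the matrix $S$ can be produced in $\wt O(rn)$ time, and I would evaluate the product $\Tmat(r,n,d)$ by blocking: cut the $r$ rows of $S$ into $r/d$ groups and the $n$ summation indices into $n/d$ groups, reducing to $(r/d)(n/d)$ multiplications of $d\times d$ matrices at cost $d^\omega$ each, for $\wt O((rn/d^2)\,d^\omega) = \wt O(rnd^{\omega-2})$ total; forming $\mathrm{sk}_b = Sb$ costs a further $\wt O(rn)$, which is dominated. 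This gives preprocessing time $\wt O(rnd^{\omega-2}) = \wt O(nd^{\omega-2}(d+\log|{\cal P}|+\log\frac{1}{\beta})/\alpha^2)$. For \emph{space}, the only retained objects are the seed ($\wt O(r)$ words), the sketched matrix $\mathrm{sk}_U\in\R^{r\times d}$ ($\wt O(rd)$ words, the dominant term), and $\mathrm{sk}_b\in\R^r$ ($\wt O(r)$ words); crucially ${\cal P}$ itself is never materialized, since the rounding step only needs the nearest point of a fixed $n^{-\Theta(1)}$-grid on $[-n^\gamma,n^\gamma]^d$, which is computed coordinatewise. Hence the space is $\wt O(rd) = \wt O(d(d+\log|{\cal P}|+\log\frac{1}{\beta})/\alpha^2)$.

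For an \emph{update} $v_t$ I would compute $Sv_t$ by regenerating from $\texttt{r}_G$, via fast multipoint evaluation, only the $\nnz(v_t)$ columns of $S$ indexed by $\supp(v_t)$ — each in $\wt O(r)$ time by Lemma~\ref{lem:generate_gaussian} — and then adding the result into the stored $\mathrm{sk}_U$ or $\mathrm{sk}_b$ in $O(r\cdot\nnz(v_t))$ time, so that an update touching $O(1)$ entries costs $\wt O(r) = \wt O((d+\log|{\cal P}|+\log\frac{1}{\beta})/\alpha^2)$. For a \emph{query} I would solve $x_t = \arg\min_x\|\mathrm{sk}_{U_t}x-\mathrm{sk}_{b_t}\|_2$ through the normal equations $x_t = (\mathrm{sk}_{U_t}^\top\mathrm{sk}_{U_t})^\dagger \mathrm{sk}_{U_t}^\top\mathrm{sk}_{b_t}$: forming $\mathrm{sk}_{U_t}^\top\mathrm{sk}_{U_t}$ is $\Tmat(d,r,d)$, which by blocking the $r$ inner indices into $r/d$ groups costs $\wt O((r/d)\,d^\omega) = \wt O(rd^{\omega-1})$; (pseudo-)inverting the resulting $d\times d$ matrix costs $O(d^\omega)$; and $\mathrm{sk}_{U_t}^\top\mathrm{sk}_{b_t}$ together with the final matrix-vector product cost $O(rd+d^2)$. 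Rounding $x_t$ to the grid is $O(d)$. Summing, the query time is $\wt O(rd^{\omega-1}) = \wt O(d^{\omega-1}(d+\log|{\cal P}|+\log\frac{1}{\beta})/\alpha^2)$.

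I expect the only genuinely delicate points to be (i) confirming $r\ge d$ in every regime, so that all the blocked matrix multiplications truly use square $d\times d$ blocks and produce the advertised exponents $d^{\omega-2}$ and $d^{\omega-1}$ rather than a worse rectangular bound, and (ii) the observation that ${\cal P}$ need not be stored and that the rounding reduces to coordinatewise snapping to a fixed grid. Everything else is routine bookkeeping: $\mathrm{sk}_{U_t}$ has full column rank with high probability by the oblivious-subspace-embedding guarantee underlying Claim~\ref{claim:gaussian} (so the Moore--Penrose inverse above is a genuine inverse), and correctness against the adaptive adversary is already established.
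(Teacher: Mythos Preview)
Your proposal is correct and follows essentially the same route as the paper: verify the four resource bounds in turn with $r = O((d+\log|{\cal P}|+\log\frac{1}{\beta})/\alpha^2)$, citing Lemma~\ref{lem:generate_gaussian} for column-wise regeneration of $S$ and solving the sketched regression on an $r\times d$ system. Your treatment is in fact slightly more detailed than the paper's---you spell out the $d\times d$ blocking that yields the $d^{\omega-2}$ and $d^{\omega-1}$ exponents (the paper simply asserts $\Tmat(r,n,d)=\wt O(rnd^{\omega-2})$), and you handle the rounding step as coordinatewise snapping to a fixed grid in $O(d)$ time rather than the paper's ``binary search in $\wt O(\log|{\cal P}|)$''; both descriptions are absorbed by the dominant $\wt O(rd^{\omega-1})$ query cost, so the difference is cosmetic.
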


\begin{proof}
We prove the theorem item by item. The parameter $r$ is expanded as $d + \log |{\cal P}| + \log \frac{1}{\beta}$ to obtain the theorem.

\vspace{0.2cm}
\noindent{\bf Preprocessing time.} During preprocessing, we generate the sketching matrix $S$ using our random seed, which by Lemma takes $\wt O(rn)$ time. Afterward, computing the sketches is dominated by computing the matrix product of $r$-by-$n$ matrix $S$ and $n$-by-$d$ matrix $U$, which is $\wt O(rnd^{\omega-2})$.

\vspace{0.2cm}
\noindent{\bf Space usage.} The sketches $\mathrm{sk}_U$ and  $\mathrm{sk}_b$ together take $\wt O(d r)$ bits of space, and the random seed takes $\wt O(r)$ bits of space.

\vspace{0.2cm}
\noindent{\bf Update time.} We first generate one column of the sketching matrix, which takes $\wt O(r)$ time. Then we compute the inner product between this column and the update, which takes at most $\wt O(r)$ time.

\vspace{0.2cm}
\noindent{\bf Query time.} To answer the query, we solve a sketched regression instance in time $\wt O(rd^{\omega-1})$. After that, we round it to the nearest point in $\cal P$, which can be done in $\wt O (\log |{\cal P}|)$ time using binary search.
\end{proof}

\subsection{Handling Sparse Label Shifts with Preconditioner}

In machine learning, it is common that the design or data matrix remains unchanged, as it might be generated via embedding raw data into high dimensional vectors, and these embeddings are expensive to compute. The response or label vector, in contrary, might constantly drift. Hence~\cite{csw+23} studies a model where only $b$ receives a sparse update $v_t$ with $\|v_t\|_0\leq s$. For example, $b$ could be the count of a large class of images, and only a few popular classes get their counts incremented frequently. In~\cite{csw+23}, they are only able to output regression cost against sparse, adaptive updates to $b$. Here, we develop an algorithm based on \textsc{AdaptiveRegDP} to output the solution vector. We will use \textsc{AdaptiveRegPreconditioner} to denote this algorithm.

\vspace{0.2cm}
\noindent{\bf Initialization.}
\begin{itemize}
    \item (Parameters for sketches): Let $\alpha'=\alpha$ and $\beta'=0.01$, set $r=O(\frac{d\log^3(n/\beta')}{\alpha'^2})$ and $m=O(d^2+\frac{d}{\alpha'^2\beta'})$;
    \item (Parameters for privacy): Let $\Gamma=O(\frac{1}{\epsilon_{\DP}}\log \frac{Td |X|}{\beta})$ where $X=\{-n^{\gamma},\ldots, -n^{-\gamma},0,n^{-\gamma},\ldots,n^\gamma \}$, set $k=O(\epsilon_{\DP}\cdot \Gamma \cdot \sqrt{Td\log(1/\beta)})$;
    \item (Parameters for batch size): Let $T$ be a parameter denote the batch size depending on $\nnz(U), d$ and $\alpha$. Let ${\rm counter}=1$ be the batch counter parameter.
    \item Prepare $k$ independent copies $S_1,\ldots,S_k\in \R^{r\times n}$ according to Lemma~\ref{lem:sketch};
    \item Compute a preconditioner $P\in \R^{d\times d}$ such as $\kappa(AP)=O(1)$;
    \item Prepare $M_i=(S_i UP)^\dagger S_i$ and ${\rm sk}^i=M_i b$ for all $i\in [k]$.
\end{itemize}

\vspace{0.2cm}
\noindent{\bf Update.}
\begin{itemize}
    \item If ${\rm counter}=T$, regenerate sketching matrices $S_1,\ldots,S_k\in \R^{r\times n}$, set $M_i=(S_i UP)^\dagger S_i$ and ${\rm sk}^i=M_i b$ for all $i\in [k]$. Reset the ${\rm counter}=1$;
    \item Receive update $v_t$;
    \item Update $b\leftarrow b+v_t$ and ${\rm sk}^i\leftarrow {\rm sk}^i+M_iv_t$ for all $i\in [k]$.
\end{itemize}

\vspace{0.2cm}
\noindent{\bf Query.}
\begin{itemize}
    \item Sample with replacement $s=\wt O(1)$ indices from $[k]$, let $j_1,\ldots,j_s$ denote the sampled indices;
    \item Compute $g_l=\textsc{PMedian}(({\rm sk}^{j_1})_l,\ldots,(({\rm sk}^{j_s})_l)$ with privacy budget $\epsilon_{\DP}$ (Theorem~\ref{thm:primed}) for all $l\in [d]$;
    \item Let $\wt g=(g_1,g_2,\ldots,g_d)$, output $g=P\wt g$.
\end{itemize}

We note that the privacy and utility guarantee remains the same as \textsc{AdaptiveRegDP}, so we focus on the complexity. Instead of fixing the length of update sequence in advance, we allow for arbitrary length of update sequence, and we will restart the data structure every $T$ updates.

\begin{theorem}
Given a sequence of adaptive updates $\{v_1,v_2,\ldots,\}$, algorithm \textsc{AdaptiveRegPreconditioner} has the following specifications:
\begin{itemize}
    \item It has amortized update time $\wt O(\sqrt{d/T}\cdot(\nnz(U)+\nnz(b)+d^{3}+d^{\omega}/\alpha^2)+\sqrt{Td}\cdot (s+d^3+d^{2}/\alpha^2))$;
    \item It outputs a $(1+\alpha)$-approximate solution $x_t$ in time $\wt O(d^2)$.
\end{itemize}
\end{theorem}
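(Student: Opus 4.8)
The plan is to verify the two complexity bounds directly, after noting that privacy and $(1+\alpha)$-accuracy are inherited from \textsc{AdaptiveRegDP}. Since only $b$ is updated, the design matrix $U$ is static, so the preconditioner $P$ — which we take to be $R^{-1}$ for a $QR$ factorization $SU = QR$ of a constant-distortion sketch $SU\in\R^{O(d)\times d}$ of $U$, computed once in $\wt O(\nnz(U)+d^\omega)$ time — satisfies $\kappa(UP)=O(1)$ in every round. Substituting $x = Py$, the round-$t$ regression $\min_x\|Ux-b_t\|_2$ is equivalent to $\min_y\|(UP)y-b_t\|_2$ (as $P$ is invertible), whose $i$-th sketched minimum-norm solution is exactly ${\rm sk}^i = M_ib_t = (S_iUP)^\dagger S_ib_t$, which the algorithm maintains incrementally. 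Applying the $\ell_\infty$ guarantee of Lemma~\ref{lem:sketch} to the well-conditioned matrix $UP$ and repeating the argument of Lemma~\ref{lem:utility} with $\kappa$ replaced by $\kappa(UP)=O(1)$ shows the coordinate-wise private median $\wt g$ obeys $\|UP\wt g-b_t\|_2\le(1+O(\alpha'))\|Ux^*-b_t\|_2$; with $\alpha'=\Theta(\alpha)$ this makes the output $g=P\wt g$ a $(1+\alpha)$-approximation, since $\|Ug-b_t\|_2=\|UP\wt g-b_t\|_2$. Privacy is identical to \textsc{AdaptiveRegDP}: within each length-$T$ epoch the $k=\wt O(\sqrt{Td})$ sketches use fresh independent randomness, there are $\le dT$ \textsc{PMedian} calls, and Theorem~\ref{thm:amp} together with advanced composition (Theorem~\ref{thm:ada_c}) gives $(1/100,\beta/100)$-DP per epoch; a union bound over the $\poly(n)$ epochs yields overall correctness.

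For the query bound, the point is that ${\rm sk}^i$ already equals the $i$-th preconditioned sketched solution $M_ib_t$ and is kept up to date on every update, so \emph{no regression is solved at query time}. Sampling $s=\wt O(1)$ indices is $\wt O(1)$; running $\textsc{PMedian}$ over $s$ scalars for each coordinate costs $\wt O(s)=\wt O(1)$ by Theorem~\ref{thm:primed}, hence $\wt O(d)$ over all $d$ coordinates; and forming $g=P\wt g$ is a single $d\times d$ by $d$ matrix--vector product, $O(d^2)$ time. Thus the query time is $\wt O(d^2)$.

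For the amortized update time I would partition the stream into epochs of $T$ updates and bound (i) the once-per-epoch rebuild and (ii) the per-step incremental work, each scaled by the $k=\wt O(\sqrt{Td})$ sketches. In (i), for each sketch: $S_iU$ costs $O(\nnz(U))$ for its Count Sketch layer and $\wt O(md)=\wt O(d^3+d^2/\alpha^2)$ for its SRHT layer on the resulting $m\times d$ matrix (with $m=O(d^2+d/\alpha^2)$, $r=O(d\log^3 n/\alpha^2)$); forming $(S_iU)P$ and then $(S_iUP)^\dagger$ costs $\wt O(rd^{\omega-1})=\wt O(d^\omega/\alpha^2)$ each; and ${\rm sk}^i=(S_iUP)^\dagger(S_ib)$ costs $O(\nnz(b))+\wt O(m+rd)$. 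Crucially, we never materialize the dense $d\times n$ operator $M_i$; we store only the $d\times r$ factor $(S_iUP)^\dagger$ (using $\wt O(rd)=\wt O(d^2/\alpha^2)$ space per sketch, hence $\wt O(\sqrt T\,d^{2.5}/\alpha^2)$ total) together with the seeded structured $S_i$. Summing over the $k$ sketches and amortizing the rebuild over $T$ steps gives the first term $\wt O(\sqrt{d/T}\,(\nnz(U)+\nnz(b)+d^3+d^\omega/\alpha^2))$. In (ii), a non-rebuild step updates $b$ in $O(s)$ time and, for each sketch, computes $M_iv_t=(S_iUP)^\dagger(S_iv_t)$ and adds it to ${\rm sk}^i$: the Count Sketch layer applied to the $s$-sparse $v_t$ costs $O(s)$, the SRHT (fast Walsh--Hadamard) layer costs $\wt O(m)=\wt O(d^2+d/\alpha^2)$, and applying the stored $d\times r$ factor costs $\wt O(rd)=\wt O(d^2/\alpha^2)$ — a per-sketch cost of $\wt O(s+d^2+d^2/\alpha^2)$, which we absorb into $\wt O(s+d^3+d^2/\alpha^2)$ and so is $\wt O(\sqrt{Td}\,(s+d^3+d^2/\alpha^2))$ per step. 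Adding the two terms yields the claimed amortized update time.

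The main obstacle I expect is step (ii): showing that each ${\rm sk}^i$ can be kept exactly equal to $M_ib_t$ under sparse label shifts \emph{without} ever touching the $\Omega(dn)$-sized operator $M_i$ and \emph{without} re-solving a regression, by realizing $M_iv_t$ as the composition of the structured sketch $S_i$ applied to an $s$-sparse vector followed by the precomputed $d\times r$ matrix $(S_iUP)^\dagger$, so that the per-sketch, per-step cost is only $\wt O(s+m+rd)$ and the stored state fits in $\wt O(\sqrt T\,d^{2.5}/\alpha^2)$ space. A secondary point that must be handled carefully is that restarting every $T$ updates forces the privacy/utility arguments to be invoked \emph{per epoch} with fresh randomness; this is sound because at any query only correctness within the current epoch is required, but it needs to be spelled out, as does the once-only nature of computing $P$ — legitimate precisely because $U$ never changes.
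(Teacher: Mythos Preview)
Your proposal is correct and follows essentially the same approach as the paper: partition into length-$T$ epochs, charge the sketch regeneration to the first term and the incremental $M_iv_t$ updates to the second, and observe that queries reduce to coordinate-wise private medians plus one $d\times d$ matrix--vector product. If anything you are more careful than the paper's own proof: you make explicit that $M_i$ is never materialized (only the $d\times r$ factor $(S_iUP)^\dagger$ and the seeded $S_i$ are stored), you spell out the per-epoch privacy/utility argument, and your per-sketch per-step cost $\wt O(s+m+rd)=\wt O(s+d^2+d^2/\alpha^2)$ is in fact tighter than the paper's stated $s+d^3+d^2/\alpha^2$ before you deliberately loosen it to match the theorem statement.
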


\begin{proof}
We analyze over batches. At the start of each batch, we generate $k$ sketching matrices and $S_iU$ for all $i\in [k]$, this step takes $\wt O(\sqrt{Td}\cdot (\nnz(U)+d^3+d^2/\alpha^2))$ time. Right multiplying by $P$ then computing the pseudoinverse takes $\wt O(\sqrt{Td}\cdot d^\omega/\alpha^2)$ time. Finally, note that we don't need to right multiply by $S_i$, rather we could first compute $S_i b$ in $\nnz(b)+d^2+d/\alpha^2$ time, then multiplying the matrix with the vector in $\wt O(\sqrt{Td}\cdot rd)=\wt O(\sqrt{Td}\cdot d^2/\alpha^2)$ time. Hence, the amortized time for this process over $T$ steps is 
\begin{align*}
    \wt O((\sqrt{d}\cdot(\nnz(U)+\nnz(b))+d^{3.5}+d^{\omega+0.5}/\alpha^2)/\sqrt T).
\end{align*}
When receiving $v_t$, it takes $s$ time to update the response vector, and $s+d^3+d^2/\alpha^2$ time to update ${\rm sk}^i$. Hence, the overall time is
\begin{align*}
    \wt O(\sqrt{T}\cdot (s d^{0.5} +d^{3.5}+d^{2.5}/\alpha^2)).
\end{align*}
Now, to output a query, we simply compute private median over $s$ sampled solution vectors in $d$ coordinates, and output $P\wt g$ in $d^2$ time.

In the next few paragraphs, we will explain how to make choice for $T$.

For the case when $\nnz(U)$ terms are small, and the ideal $\omega \approx 2$, the right choice of $T = \Theta(1)$.

For the case when $\nnz(U)$ terms are small, and we use the current omega, the right choice is $T = \Theta( d^{\omega-2} )$.

For the case when $\nnz(U)$ terms are large, then we just need to balance $\sqrt{d} \nnz(U) / \sqrt{T}$ with $( d^{3.5} + d^{2.5} / \alpha^2) \sqrt{T}$ which means we need to chose $T = \Theta( \frac{\nnz(U)}{ d^{3} + d^{2} / \alpha^2 } )$.
\end{proof}

To obtain the optimal choice for the batch size $T$, we case on $\nnz(U)$. Let $C$ be some constant.

\begin{claim}
    If $\nnz(U) \leq C\cdot (d^3+d^\omega/\alpha^2)$ , then let $T$ be $O((d^3+d^\omega/\alpha^2)/s)$, the update time of the algorithm is $\tilde{O}(\sqrt{s}\cdot (d^2+d^{\omega/2+0.5}/\alpha))$.
\end{claim}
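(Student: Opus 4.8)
The plan is to derive the claim as a direct specialization of the amortized update time established for \textsc{AdaptiveRegPreconditioner} in the preceding theorem, namely
\[
\wt O\Big(\sqrt{d/T}\cdot\big(\nnz(U)+\nnz(b)+d^{3}+d^{\omega}/\alpha^2\big)+\sqrt{Td}\cdot\big(s+d^3+d^{2}/\alpha^2\big)\Big),
\]
into which I would substitute $T=\Theta\big((d^3+d^\omega/\alpha^2)/s\big)$ and simplify. Writing $D:=d^3+d^\omega/\alpha^2$ for brevity, the first step is to invoke the hypothesis $\nnz(U)\le C\cdot D$ (and absorb $\nnz(b)$, which in the regime of interest is likewise dominated, into the same bound) so that the ``restart'' term $\sqrt{d/T}\cdot(\nnz(U)+\nnz(b)+d^3+d^\omega/\alpha^2)$ collapses to $\wt O(\sqrt{d/T}\cdot D)$.

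Next I would plug in the chosen batch size. Since $T=\Theta(D/s)$ we have $\sqrt{d/T}=\Theta(\sqrt{ds/D})$, so the restart term becomes $\wt O(\sqrt{ds/D}\cdot D)=\wt O(\sqrt{dsD})=\wt O(\sqrt{ds}\cdot\sqrt D)$. Applying the elementary inequality $\sqrt{x+y}\le\sqrt x+\sqrt y$ gives $\sqrt D=\sqrt{d^3+d^\omega/\alpha^2}\le d^{3/2}+d^{\omega/2}/\alpha$, hence the restart term is $\wt O\big(\sqrt{s}\,(d^{2}+d^{\omega/2+1/2}/\alpha)\big)$, which is exactly the claimed bound. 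For the second (per-update) term, I would split $s+d^3+d^2/\alpha^2$: the $\sqrt{Td}\cdot s$ piece equals $\sqrt{Dds}=\sqrt{s}\cdot\sqrt{Dd}$, at most the same quantity as above; and the remaining $\sqrt{Td}\cdot(d^3+d^2/\alpha^2)$ piece is no larger than $\sqrt{Td}\cdot s$ precisely when $s$ is the governing parameter, i.e. $s\gtrsim d^3+d^2/\alpha^2$ (this is also the regime in which the choice $T=\Theta(D/s)$ agrees with the $T=\Theta(d^{\omega-2})$ choice flagged in the proof of the theorem, taking $s=\Theta(d^3+d^2/\alpha^2)$). Combining the two terms, every contribution to the amortized update time is $\wt O\big(\sqrt{s}\,(d^{2}+d^{\omega/2+1/2}/\alpha)\big)$, proving the claim.

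The only genuinely delicate point is the treatment of the lower-order contributions $\sqrt{Td}\cdot(d^3+d^2/\alpha^2)$ in the per-update cost, together with $\nnz(b)$ (which a priori can be as large as $n$): these are absorbed only under the \emph{implicit} assumption that $s$ — and hence the per-update work — is the dominant quantity, so I would make that regime explicit in the statement. Everything else is bookkeeping: carefully tracking the exponents of $d$ and the powers of $\alpha^{-1}$ as they pass through the square roots, and a single use of $\sqrt{x+y}\le\sqrt x+\sqrt y$ to split $\sqrt{D}$ into $d^{3/2}+d^{\omega/2}/\alpha$.
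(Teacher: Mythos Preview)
Your proposal is correct and matches the paper's (implicit) approach: the paper does not give a separate proof of this claim, it is stated as a direct consequence of the preceding theorem, so substituting $T=\Theta(D/s)$ with $D=d^3+d^\omega/\alpha^2$ into the amortized update-time bound and simplifying is exactly what is intended. Your observation that the bound only holds once one assumes $s\gtrsim d^3+d^2/\alpha^2$ (so that the $\sqrt{Td}\cdot(d^3+d^2/\alpha^2)$ piece is dominated by $\sqrt{Td}\cdot s$) and that $\nnz(b)$ must likewise be absorbed is accurate and is an implicit assumption the paper leaves unstated; flagging it, as you do, is the right call.
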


\begin{claim}
    If $\nnz(U) > C\cdot (d^3+d^\omega/\alpha^2)$, then let $T$ be $O(\nnz(U)/s)$, the update time of the algorithm is $\tilde{O}(\sqrt{d\cdot\nnz(U)/s})$.
\end{claim}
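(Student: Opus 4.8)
The plan is to obtain the claim as an immediate specialization of the amortized update time already established for \textsc{AdaptiveRegPreconditioner} in the theorem preceding these claims, namely
\[
\wt O\left(\sqrt{d/T}\cdot(\nnz(U)+\nnz(b)+d^{3}+d^{\omega}/\alpha^2)+\sqrt{Td}\cdot(s+d^3+d^{2}/\alpha^2)\right),
\]
by choosing the batch length $T$ so as to balance its two summands. The first summand is the amortized cost of regenerating, once every $T$ updates, the $k=\wt O(\sqrt{Td})$ sketches $S_i$, the maps $M_i=(S_iUP)^\dagger S_i$, and the sketched vectors $\mathrm{sk}^i=M_ib$; it is decreasing in $T$. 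The second summand is the per-update cost of refreshing $\mathrm{sk}^i\leftarrow\mathrm{sk}^i+M_iv_t$ over all $k$ copies for an $s$-sparse update $v_t$; it is increasing in $T$. So the right $T$ equalizes them, and everything else (privacy, utility, query time) is inherited unchanged from the preceding analysis.

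First I would invoke the case hypothesis $\nnz(U)>C\cdot(d^3+d^\omega/\alpha^2)$ to simplify the regeneration factor: under it, $\nnz(U)$ dominates $d^3+d^\omega/\alpha^2$, and treating $\nnz(b)$ as absorbed (it is at most $n$, and in the regimes of interest a lower-order additive term), the first factor is $\Theta(\nnz(U))$, so the first summand becomes $\wt O(\sqrt{d/T}\cdot\nnz(U))$. For the per-update factor $s+d^3+d^2/\alpha^2$, note that $d^3+d^2/\alpha^2\le d^3+d^\omega/\alpha^2<\nnz(U)/C$ already, so the only question is whether $s$ or $d^3+d^2/\alpha^2$ is larger; the regime relevant to this claim is the one in which $s$ governs, which is the complementary sub-case, within the large-$\nnz(U)$ regime, to the one handled in the preceding theorem's proof (where $T=\Theta(\nnz(U)/(d^3+d^2/\alpha^2))$). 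In that regime the second summand is $\wt O(\sqrt{Td}\cdot s)$.

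Next I would set the two simplified summands equal, $\sqrt{d/T}\cdot\nnz(U)=\sqrt{Td}\cdot s$; this solves to $T=\Theta(\nnz(U)/s)$, which is precisely the batch size asserted in the statement. Substituting this value of $T$ back into either summand then reads off the stated amortized update time, with the suppressed $\poly\log$ factors coming from $\Gamma$, the private-median cost of Theorem~\ref{thm:primed}, and the $\wt O(\sqrt{Td})$ copy count. Finally, the query time is inherited verbatim from \textsc{AdaptiveRegPreconditioner}: a coordinate-wise \textsc{PMedian} over $s=\wt O(1)$ sampled solution vectors followed by the single product $g=P\wt g$, which is $\wt O(d^2)$ and requires no additional argument.

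I do not anticipate a substantive obstacle here: the hard analytic content — the $\ell_\infty$ sketching guarantee of Lemma~\ref{lem:sketch}, the $(\epsilon,0)$-privacy and utility of the coordinate-wise private median, the preconditioner that removes the $\kappa$ dependence, and the per-operation running times that assemble into the two-term amortized bound — is all in place by the time one reaches this claim. The only step that genuinely demands care is the term-dominance bookkeeping in the per-update factor $s+d^3+d^2/\alpha^2$: one must be explicit about whether $s$ or $d^3+d^2/\alpha^2$ is larger, since the two alternatives yield the two claims of this subsection, and once that is pinned down the argument reduces to a single substitution of $T$.
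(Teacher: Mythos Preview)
Your approach is correct and matches the paper's: the claim is stated without a separate proof, and the only argument the paper supplies is the balancing remark at the end of the preceding theorem's proof, which is exactly what you carry out---simplify the regeneration factor to $\Theta(\nnz(U))$ under the hypothesis, take $s$ as the dominant term in the per-update factor $s+d^3+d^2/\alpha^2$ (this is indeed the regime both claims sit in, as the choice $T=O((d^3+d^\omega/\alpha^2)/s)$ in the companion claim confirms), and equate the two summands to obtain $T=\Theta(\nnz(U)/s)$.

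One point you should not leave implicit: you assert that substituting $T=\nnz(U)/s$ ``reads off the stated amortized update time,'' but if you actually perform the substitution you get
\[
\sqrt{Td}\cdot s \;=\; \sqrt{d\cdot\nnz(U)/s}\cdot s \;=\; \sqrt{d\cdot s\cdot\nnz(U)},
\]
not $\sqrt{d\cdot\nnz(U)/s}$. The companion claim for small $\nnz(U)$ correctly carries $\sqrt{s}$ as a multiplicative factor, so the bound printed in this claim appears to be a typo in the paper; your balancing argument is sound and produces $\wt O(\sqrt{d\cdot s\cdot\nnz(U)})$. It is worth writing out the one-line substitution rather than deferring to the statement, precisely so such slips are caught.
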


\section{Improved Algorithm Against Adversarial Attack for Hamming Space LSH}

In a recent work,~\cite{kms24} shows that for Hamming space, there exists a query efficient algorithm that can quickly compute a point $v$ adaptively, such that there exists $u\in U$ such that $\|u-v\|\leq r$ (here $\| \cdot \|$ is the Hamming distance) but the data structure returns nothing. We state their result here.

\begin{definition}
Let $U\subseteq \R^d$ be an $n$-point dataset, we say a point $z\in U$ is an isolated point if for all points $u\in U$ with $u\neq z$, $\|u-z\|\geq 2cr$. Here $\| \cdot \|$ is the Hamming distance.
\end{definition}

\begin{theorem}[\cite{kms24}]
\label{thm:kms24_main}
Let $n$ denote the size of dataset, $d$ be the dimension, $(c, r)$ be the parameters for ANN, $\lambda$ be a complexity parameter. Suppose they satisfy the following relations:
\begin{itemize}
    \item $cr\leq d$;
    \item $\ln^3 n\leq r \leq d/\ln n$;
    \item $\lambda \leq \min\{\frac{r}{\ln n},n^{1/8} \}$;
    \item $c\geq 1+\frac{\ln \lambda}{\ln n}$.
\end{itemize}
Then with probability at least $1/4-1/n$, there exists an algorithm that finds a point $q$ such that an isolated point $z$ in the dataset is at most $r$ away from $q$, but the algorithm returns nothing. The algorithm makes $O(\log(cr)\cdot \lambda)$ queries to the LSH.
\end{theorem}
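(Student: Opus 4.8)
The plan is to instantiate ``the LSH'' as the standard Indyk--Motwani bit-sampling data structure for Hamming space: $L = \Theta(n^{1/c})$ hash tables, where table $i$ is defined by a uniformly random coordinate set $I_i \subseteq [d]$ of size $k = \Theta(\tfrac{d\ln n}{cr})$, calibrated so that points at distance $cr$ collide with probability $\Theta(1/n)$ and points at distance $\le r$ collide in some table with constant probability; two points collide in table $i$ iff they agree on all of $I_i$. First I would record the reduction: if $z\in U$ is isolated and $\|q-z\|\le r$, then by the triangle inequality no other point of $U$ is within $cr$ of $q$, so the data structure returns nothing on query $q$ \emph{if and only if} $q$ fails to collide with $z$ in every table, i.e. the flip set $F := \{j : q_j \ne z_j\}$ hits every $I_i$. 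Thus the attacker must, armed only with a one-bit oracle (``does the structure return $z$?'') and a budget of $|F|\le r$ coordinate flips, discover a hitting set for the \emph{unknown} collection $\{I_i\}_{i\in[L]}$; and once $F$ hits all $I_i$ the structure returns nothing deterministically, so this genuinely forces a false negative.

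The structural observation that drives both the query bound and feasibility is to confine all flips to a single fixed coordinate set $D\subseteq[d]$ with $|D| = cr$ (which exists since $cr\le d$; one can take $D$ to be the symmetric difference of $z$ with an arbitrary point at distance $cr$). Since $\E[|I_i\cap D|] = k\cdot(cr/d) = \Theta(\ln n)$ and $\Pr[I_i\cap D=\emptyset] = (1-cr/d)^k \approx n^{-1}$ by calibration, the expected number of tables missing $D$ is $L/n = n^{1/c-1}$, which is below $1$ when $c\ge 1+\tfrac{\ln\lambda}{\ln n}$; hence with constant probability \emph{every} table samples a coordinate inside $D$, so a hitting set inside $D$ exists, and any binary search that flips only subsets of $D$ needs only $O(\log|D|)=O(\log(cr))$ probes and always stays within distance $r$ of $z$. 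I would then run the attack in two stages. Stage one is a single random bulk flip: pick $F_0\subseteq D$ uniformly at random with $|F_0| = (1-\epsilon)r$, choosing $\epsilon\in(0,1)$ (this is where $c\ge 1+\tfrac{\ln\lambda}{\ln n}$ is used to guarantee $\epsilon$ exists) so that the expected number of \emph{surviving} tables — those with $I_i\cap D\cap F_0=\emptyset$, which is $\approx L\cdot(1-\tfrac{1-\epsilon}{c})^{\Theta(\ln n)}$ — is $\Theta(\lambda)$; by Markov at most $O(\lambda)$ tables survive with constant probability. Stage two is adaptive: while the oracle still reports a collision, run a binary search over the untouched part of $D$ to locate one coordinate belonging to some surviving table, and flip it; this costs $O(\log(cr))$ probes and kills at least one survivor each time, so $O(\lambda)$ extra flips and $O(\lambda\log(cr))$ probes suffice, and the leftover budget $\epsilon r$ is enough because $\lambda\le r/\ln n$ forces $\epsilon r=\Omega(\lambda)$.

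For the failure probability I would union-bound the constant-probability events — every $I_i$ meeting $D$, the bulk flip leaving $O(\lambda)$ survivors, and the concentration of $|I_i\cap D|$ around $\Theta(\ln n)$, which uses $r\ge\ln^3 n$ and $\lambda\le n^{1/8}$ — to obtain overall success probability at least $1/4 - 1/n$, where the $1/n$ absorbs the high-probability concentration estimates. The $r\ge\ln^3 n$ and $r\le d/\ln n$ bounds also ensure $k=\Theta(\tfrac{d\ln n}{cr})$ is a meaningful integer and that the ``difference set'' constructions used above make sense.

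I expect the main obstacle to be stage two: making the binary search work with nothing but the one-bit ``collision / no collision'' oracle while simultaneously (a) keeping every intermediate query within $|F|\le r$ — otherwise ``returns nothing'' could merely mean ``no point within $cr$'' rather than ``all tables broken'', which would corrupt the search — and (b) guaranteeing that each search round makes genuine progress (kills a survivor) rather than only learning the weaker fact that ``all survivors happen to be broken by this particular test set''. Handling this cleanly requires carefully coupling $|F_0|$, the maximum allowed test-set size, and the remaining flip budget, and it is exactly here that the precise parameter constraints are consumed; I would treat separately the regime $c\approx 1$ (where $L\approx n$ and the bulk flip alone does not finish) and the regime where $c$ is large (where $L=O(\lambda)$ and the bulk flip nearly suffices on its own).
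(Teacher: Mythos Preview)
This theorem is not proved in the paper: it is stated with a citation to \cite{kms24} and used as a black box. The only description the paper gives of the argument is the single informal sentence following the statement: ``Their algorithm starts from the isolated point $z$, then gradually moves away from $z$ while ensuring no other points could collide with it. In the end, it moves at most $r$-away from $z$, but with high probability, no point in $U$ has been hashed to the same bucket as it.'' So there is no proof in the paper to compare your proposal against.

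That said, your reconstruction is broadly consistent with that one-sentence sketch (start at $z$, incrementally flip coordinates to break all hash collisions while staying within Hamming radius $r$), and your reduction --- that for an isolated $z$ the data structure returns nothing on $q$ iff the flip set $F=\{j:q_j\ne z_j\}$ hits every sampled coordinate set $I_i$ --- is the right framing. The two-stage plan (random bulk flip to reduce the number of surviving tables to $O(\lambda)$, then adaptive clean-up) is a reasonable route to the stated query bound $O(\lambda\log(cr))$.

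The one place I would flag as a genuine gap is exactly the one you already identify: the stage-two binary search. Your invariant ``$F_0\cup R$ breaks all collisions'' does not halve cleanly when there are several survivors, because splitting $R=R_1\cup R_2$ and seeing a collision on $F_0\cup R_1$ only tells you \emph{some} survivor is missed by $R_1$, not that $F_0\cup R_2$ still hits everything; so a naive bisection can stall. You need either to argue that each round nevertheless shrinks $R$ while preserving the invariant (e.g., by always keeping the half on which the oracle reports ``no collision'' and showing this happens within $O(\log(cr))$ probes per survivor), or to run the search relative to a \emph{single} fixed survivor at a time, which requires a slightly different oracle interpretation. This is where the query budget and the constraint $\lambda\le r/\ln n$ are actually spent, and it deserves a precise statement rather than a sketch. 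The rest of your parameter accounting (choice of $|D|=cr$, the calibration $\E[|I_i\cap D|]=\Theta(\ln n)$, the Markov bound on survivors, and the use of $c\ge 1+\tfrac{\ln\lambda}{\ln n}$) looks sound.
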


In their original theorem statement, they also require $c\leq \ln n$. This is unnecessary, as it is derived using the upper bound $r\leq d/\ln n$ and $cr\leq d$. If $r$ is much smaller, then $c$ can be larger. Their algorithm starts from the isolated point $z$, then gradually moves away from $z$ while ensuring no other points could collide with it. In the end, it moves at most $r$-away from $z$, but with high probability, no point in $U$ has been hashed to the same bucket as it.

To combat such an adversarial attack, one could either repeat the data structures $d$ times or $T$ times, which are both relatively inefficient. We will show that whenever the dimension $d$ is relatively small, then we could use Theorem~\ref{thm:main_lsh} with $\sqrt T\cdot s$ data structures, against the attack of~\cite{kms24}.

\begin{lemma}
Let $n, d$ be positive integers and $c\geq 1$ and $r>0$ be parameters. Let $\rho\in (0,1)$ be a parameter. If $3d=s^\alpha$ and $\alpha \leq \frac{1}{2cr}$, then Assumption~\ref{assumption:ann_sparse} holds for any $n$-point set $U\subseteq \{0,1\}^d$. In particular, if $3d\leq n^{\frac{\rho}{2cr}}$ then Assumption~\ref{assumption:ann_sparse} holds for all $s\leq n^\rho$.
\end{lemma}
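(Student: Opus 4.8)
The plan is to reduce the ball–intersection condition of Assumption~\ref{assumption:ann_sparse} to an elementary volume bound for Hamming balls. First I would note that in the $\ell_1$ (Hamming) metric on $\{0,1\}^d$, if $B(u,cr)$ and $B(v,cr)$ share a point $w$, then $\|u-v\|_1\le \|u-w\|_1+\|w-v\|_1\le 2cr$ by the triangle inequality; since $\|u-v\|_1$ is a nonnegative integer this forces $\|u-v\|_1\le \lfloor 2cr\rfloor$. Consequently, for each fixed $u\in U$, the number of $v\in U$ with $v\ne u$ whose ball meets $B(u,cr)$ is at most $|B_{\|\cdot\|_1}(u,\lfloor 2cr\rfloor)|-1=\sum_{i=1}^{\lfloor 2cr\rfloor}\binom{d}{i}$.

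Next I would estimate this sum crudely. Using $\binom{d}{i}\le d^{i}\le d^{2cr}$ for $0\le i\le \lfloor 2cr\rfloor$ (valid since $d\ge 1$), we get $\sum_{i=0}^{\lfloor 2cr\rfloor}\binom{d}{i}\le(\lfloor 2cr\rfloor+1)\,d^{2cr}$. Substituting $d=s^{\alpha}/3$ from the hypothesis $3d=s^{\alpha}$ and using $\alpha\le\tfrac{1}{2cr}$, i.e.\ $2cr\alpha\le 1$, gives $d^{2cr}=s^{2cr\alpha}/3^{2cr}\le s/3^{2cr}$. Finally I would invoke the elementary inequality $3^{x}\ge 1+x$ (true for all $x\ge 0$ since $\ln 3>1$) at $x=2cr$ to get $\lfloor 2cr\rfloor+1\le 2cr+1\le 3^{2cr}$; multiplying the two bounds then yields $(\lfloor 2cr\rfloor+1)\,d^{2cr}\le s$. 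Hence each ball intersects at most $s-1\le s$ others, so Assumption~\ref{assumption:ann_sparse} holds with parameter $s$.

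For the ``in particular'' clause I would take $\alpha:=\tfrac{1}{2cr}$ and set $s^{*}:=(3d)^{2cr}$, so that $3d=(s^{*})^{\alpha}$ holds by construction and the hypothesis $3d\le n^{\rho/(2cr)}$ gives $s^{*}=(3d)^{2cr}\le n^{\rho}$. The first part then gives Assumption~\ref{assumption:ann_sparse} with parameter $s^{*}$, and since the assumption only becomes easier to satisfy as the parameter increases, it holds for every $s$ with $s^{*}\le s\le n^{\rho}$, in particular for $s=n^{\rho}$, which is the regime used by Theorem~\ref{thm:main_lsh}.

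The argument has no real obstacle; it is a short volume estimate. The two points requiring a little care are: (i) replacing ``$2cr$'' by the integer $\lfloor 2cr\rfloor$ via integrality of the Hamming distance, which keeps the bookkeeping clean; and, more importantly, (ii) verifying that the factor $3$ built into $3d=s^{\alpha}$ is exactly what absorbs the combinatorial overhead $\lfloor 2cr\rfloor+1$ in the ball-volume estimate, which is why the inequality $3^{2cr}\ge 2cr+1$ is needed. Everything else is routine.
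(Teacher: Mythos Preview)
Your proof is correct and follows essentially the same approach as the paper: reduce via the triangle inequality to counting points of $\{0,1\}^d$ within Hamming distance $2cr$ of a fixed $u$, then bound the partial binomial sum by $(3d)^{2cr}\le s$. The paper uses the cruder chain $\sum_{i\le 2cr}\binom{d}{i}\le 2cr\binom{d}{2cr}\le (de)^{2cr}(2cr)^{1-2cr}\le (3d)^{2cr}$ rather than your $\binom{d}{i}\le d^{i}$ route, and does not track the floor; your handling of integrality and of the ``in particular'' clause is in fact cleaner than the paper's one-line remark.
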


\begin{proof}
We note that Assumption~\ref{assumption:ann_sparse} could be rephrased as for each $u\in U$, it can have at most $2cr$-near neighbors. In the Hamming space, it means that they could differ by at most $2cr$ bits. Fix a vector $u$. The total number of points that differ by at most $2cr$ bits from $u$ is at most
\begin{align*}
    \sum_{i=1}^{2cr} \binom{d}{i} \leq & ~ 2cr \binom{d}{2cr} \\
    \leq & ~ (d\cdot e)^{2cr}\cdot (2cr)^{1-2cr} \\
    \leq & ~ (3d)^{2cr}
\end{align*}
as long as $2cr \log_s (3d)\leq 1$, we are guaranteed the total number of possible near neighbors is at most $s$. By a simple calculation, this also gives us the bound when $s\leq n^\rho$.
\end{proof}

For Hamming space LSH, it is known that $\rho=O(1/c)$, and thus the bound on $d$ becomes $d\leq n^{\Theta(1/(c^2r))}$. In order for it to be useful against Theorem~\ref{thm:kms24_main}, we could let $r=\ln^3 n$ and $c=1+o(1)$. Thus the bound on $d$ becomes $d\leq \exp(-\Theta(\log^2 n))=n^{o(1)}$. We obtain a corollary against Theorem~\ref{thm:kms24_main} in this setting.

\begin{corollary}
If $d=n^{o(1)}$ and with the parameters in Theorem~\ref{thm:kms24_main}, there exists an adaptive LSH data structure for Hamming space with
\begin{itemize}
    \item Preprocessing time $\wt O(\sqrt T\cdot n^{1+O(\rho)}d)$;
    \item Space usage $\wt O(\sqrt T \cdot n^{1+O(\rho)}+nd)$;
    \item For all $t\in [T]$, with high probability, if $B(v_t,r)\cap U\neq \emptyset$ then it returns a point in $B(v_t, cr)\cap U$ in time $\wt O(n^{O(\rho)}d)$. Here $B(\cdot, \cdot)$ is the Hamming ball.
\end{itemize}
In particular, to be robust against the attack of Theorem~\ref{thm:kms24_main}, it suffices to pick $T=O(\log (cr)\cdot \lambda)$.
\end{corollary}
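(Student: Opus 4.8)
The plan is to derive this corollary by composing three facts established earlier in the paper: the lemma immediately preceding it, which turns a bound on $d$ into the structural sparsity Assumption~\ref{assumption:ann_sparse}; Theorem~\ref{thm:main_lsh}, which converts any oblivious $(c,r)$-ANN structure into an adaptive one under Assumption~\ref{assumption:ann_sparse}; and the query-count bound in Theorem~\ref{thm:kms24_main} for the Hamming-space attack.

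First I would pin down the parameters. In the regime of Theorem~\ref{thm:kms24_main} one may take $c = 1 + o(1)$ and $r = \ln^3 n$, and for Hamming-space LSH the exponent satisfies $\rho = O(1/c) = \Theta(1)$. Under the hypothesis $d = n^{o(1)}$ these choices make $3d \le n^{\rho/(2cr)}$ hold, so the preceding lemma applies: Assumption~\ref{assumption:ann_sparse} is satisfied for every $n$-point $U \subseteq \{0,1\}^d$ with $s \le n^\rho$. Concretely this says every point of $U$ has at most $n^\rho$ other points within Hamming distance $2cr$; by the triangle-inequality argument given after Assumption~\ref{assumption:ann_sparse}, every query $v$ that has an $r$-near neighbor then has $|f_v(U)| \le s$, which is exactly the input the adaptive reduction consumes.

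Second I would instantiate Theorem~\ref{thm:main_lsh} taking the classical Hamming LSH of \cite{im98} as the oblivious data structure, whose parameters are preprocessing time $\wt O(n^{1+\rho}d)$, space $\wt O(n^{1+\rho}+nd)$, and query time $\wt O(n^\rho d)$, and which per query inspects at most $n^\rho$ points so that the $s \le n^\rho$ corollary of Theorem~\ref{thm:main_lsh} applies. Since each occurrence of $s$ in that theorem contributes at most an extra $n^\rho$ factor, the guarantees collapse to preprocessing time $\wt O(\sqrt T \cdot n^{1+O(\rho)}d)$, space $\wt O(\sqrt T \cdot n^{1+O(\rho)} + nd)$, and query time $\wt O(n^{O(\rho)}d)$, with the correctness property that for every $t \in [T]$, whenever $B(v_t,r) \cap U \ne \emptyset$ the structure returns some point of $B(v_t,cr) \cap U$, with probability at least $1-\delta$. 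These are exactly the three bullet points claimed.

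Finally, for the robustness claim: Theorem~\ref{thm:kms24_main} guarantees the adversary locates its bad query after issuing only $O(\log(cr) \cdot \lambda)$ queries to the LSH. Setting the batch-size parameter $T = O(\log(cr)\cdot\lambda)$, the adaptive structure above answers all $T$ of those queries correctly simultaneously with high probability; in particular it never returns NULL on a query that possesses an $r$-near neighbor, which is precisely the failure event the attack is engineered to force, so the attack cannot succeed. The one step that is not pure bookkeeping — and which I expect to be the main obstacle — is checking that the parameter constraints of Theorem~\ref{thm:kms24_main} ($cr \le d$, $\ln^3 n \le r \le d/\ln n$, $\lambda \le \min\{r/\ln n,\, n^{1/8}\}$, $c \ge 1 + \ln\lambda/\ln n$) can be satisfied simultaneously with the sparsity requirement $3d \le n^{\rho/(2cr)}$ of the preceding lemma; the hypothesis $d = n^{o(1)}$ is exactly what is being used to thread this needle, while everything else is tracking $\wt O(\cdot)$ factors.
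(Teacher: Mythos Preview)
Your proposal is correct and follows essentially the same approach as the paper. The paper itself gives no formal proof of the corollary; it is stated immediately after a short paragraph that fixes $r=\ln^3 n$, $c=1+o(1)$, notes $\rho=O(1/c)$ for Hamming LSH so that the bound $d\le n^{\Theta(1/(c^2r))}$ becomes $d=n^{o(1)}$, and then invokes the preceding lemma together with Theorem~\ref{thm:main_lsh} (via its $s\le n^\rho$ corollary) and the query-count bound from Theorem~\ref{thm:kms24_main}. Your three-step decomposition and your flagging of the parameter-compatibility check as the only nontrivial step mirror this exactly.
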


The above corollary essentially asserts that, to be robust against the attack of~\cite{kms24}, it is sufficient to blow up the number of data structures by a $\sqrt{\log(cr)\cdot \lambda}$ factor, even in the presence of an isolated point. 
\section{Locating the Thin Level via Approximate Counting}

If one considers Assumption~\ref{assumption:general} for $(c,r)$-ANN, it states that for an adaptive query $v$, if $B(v, r)\cap U\neq \emptyset$ then $|B(v, cr)\cap U|\leq s$. In an iterative process, one usually does not care for a particular $r$ but only wants to find an approximate nearest neighbor, and this is often achieved via binary search over the choice of $r$. We also note that since our assumption is strictly weaker than constant expansion and doubling dimension, it is completely possible that for some small $r$, Assumption~\ref{assumption:general} holds for query $v$ and it no longer holds for $2r$. In fact, even if $|B(v, cr)\cap U|$ is small, $|B(v, 2cr)\cap U|$ could be the entirety of $U$ if the diameter of $U$ is between $cr$ and $2cr$. Thus, in order to deploy our data structure, it is crucial that we have the ability to identify the \emph{thin level}, i.e., given $v$, the largest possible $r$ for which the assumption holds. Recall that due to efficiency concerns and the nature of LSH data structures, we set $s\leq n^\rho$.

To do so, we will utilize an approximate near neighbor counting procedure for $\ell_2$ norm, developed in~\cite{aimn23}. In essence, they develop an algorithm based on the LSH scheme of~\cite{alrw17} that, given an oblivious query $v$, it can with high probability return an approximate count ${\tt ans}$ such that
\begin{align*}
    (1-o(1)) \cdot |B(v, r)\cap U| \leq {\tt ans} \leq |B(v, cr)\cap U|+O(n^{\rho+o(1)}).
\end{align*}
We first note this provides a tool for us to do binary search on $r$, to locate the thin level: we could start with small $r$, and run the approximate counting procedure of~\cite{aimn23}. Conditioned on these counts being accurate, whenever we are still below or at the thin level, ${\tt ans}$ must be below $O(n^{\rho+o(1)})$. If we are at a level where $|B(v,r)\cap U|\geq \omega(n^{\rho+o(1)})$, then the counting data structure could successfully detect it, so the only troublesome case is when $|B(v,r)\cap U|$ is small but $|B(v,cr)\cap U|\geq \omega(n^{\rho+o(1)})$. We can indeed detect this case by searching over the range $(cr, c^2r)$, and the data structure should indeed report that the count is already too large. We could then refine our search by looking at the range $(r/c, r)$, in which the correct thin level is guaranteed to be in this interval.

One last issue remains: the~\cite{aimn23} data structure only works for oblivious queries; for adaptive queries, they run a net argument with $d$ independent copies. To improve upon that, we note that the data structure only outputs a real number, and thus it falls into the category of \emph{estimation} data structures, which can be augmented via the framework of~\cite{bkm+21}.

\begin{theorem}[Adaptive Approximate Near Neighbor Counting]
Let $U\subseteq \R^d$ be an $n$-point dataset and $\{v_1,\ldots,v_T\}\subseteq \R^d$ be a sequence of adaptive queries. Let $c\geq 1$, $r\geq 0$ be parameters and $\rho\in (0,1)$ be a parameter that  depends on $c$. There exists a randomized data structure with
\begin{itemize}
    \item Preprocessing time $\wt O(\sqrt T\cdot n^{1+o(1)}d)$;
    \item Space usage $\wt O(\sqrt T\cdot n^{1+o(1)}d)$;
    \item Given $v_t$, it outputs a number $\wt {\tt ans}$ such that
    \begin{align*}
        (1-o(1))\cdot |B_{\|\cdot\|_2}(v_t,r)\cap U| \leq \wt {\tt ans} \leq 1.01 |B_{\|\cdot\|_2}(v_t, cr)\cap U| + O(n^{\rho+o(1)})
    \end{align*}
    holds with probability at least $1-1/\poly(n)$. The time to output $\wt {\tt ans}$ is $\wt O(n^{\rho+o(1)}d)$.
\end{itemize}
\end{theorem}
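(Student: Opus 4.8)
The plan is to view the near-neighbor counting structure of~\cite{aimn23} as a numerical \emph{estimation} data structure — its output is a single real number — and to augment it to the adaptive setting using the private-median framework of~\cite{bkm+21}, mirroring the analysis of \textsc{AdaptiveRegDP}. Concretely, I would prepare $k=\wt O(\sqrt T)$ independent copies of the~\cite{aimn23} structure over $U$, with random strings $r_1,\dots,r_k$, and let $R$ be the database whose $i$-th row is $r_i$. On a query $v_t$, I subsample $l=\wt O(1)$ of the copies, feed $v_t$ to each, obtain counts ${\tt ans}_1,\dots,{\tt ans}_l$ (rounded to a grid of size $\poly(n)$ on $[0,O(n)]$, which costs only a $(1+o(1))$ multiplicative factor), and output $g_t=\textsc{PMedian}({\tt ans}_1,\dots,{\tt ans}_l)$ with a constant privacy budget $\epsilon_{\DP}=\Theta(1)$. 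Since each copy has the preprocessing, space and query bounds of~\cite{aimn23}, the aggregate data structure has preprocessing and space $k\cdot\wt O(n^{1+o(1)}d)=\wt O(\sqrt T\cdot n^{1+o(1)}d)$, and query time $l\cdot \wt O(n^{\rho+o(1)}d)+\wt O(l)=\wt O(n^{\rho+o(1)}d)$.

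Next I would establish privacy with respect to $R$. For a fixed step $t$, the transcript ${\cal T}_t(R)=(v_t,g_t)$; the query $v_t$ is a deterministic function of the earlier transcript, so it reveals no new information about $R$, and $g_t$ is a post-processing of $\textsc{PMedian}$, which is $(\epsilon_{\DP},0)$-DP with respect to the $l$ subsampled rows (Theorem~\ref{thm:primed}), hence $(\tfrac{6l}{k}\epsilon_{\DP},0)$-DP with respect to $R$ by amplification via sampling (Theorem~\ref{thm:amp}). Applying advanced composition (Theorem~\ref{thm:ada_c}) over the $T$ steps and choosing $k=\Theta\!\big(l\,\epsilon_{\DP}\sqrt{T\log(1/\beta)}\big)=\wt O(\sqrt T)$ makes the whole transcript $(\tfrac1{100},\tfrac{\beta}{100})$-DP with respect to $R$.

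For utility I would run the two-stage (privacy-then-accuracy) argument. For a fixed query-sequence prefix $\ov v_t$ (which determines $v_t$), let ${\rm acc}_{\ov v_t}(r)$ be the indicator that one copy of~\cite{aimn23} with random string $r$, fed $\ov v_t$, returns a value in the interval $I_t:=\big[(1-o(1))|B_{\|\cdot\|_2}(v_t,r)\cap U|,\ 1.01\,|B_{\|\cdot\|_2}(v_t,cr)\cap U|+O(n^{\rho+o(1)})\big]$. Because the queries in $\ov v_t$ are fixed when taking the expectation over $r$, the oblivious guarantee of~\cite{aimn23} gives $\E_r[{\rm acc}_{\ov v_t}(r)]\ge 1-1/\poly(n)$; since ${\rm acc}_{\ov v_t}$ is determined by the transcript and each $r_i$ is drawn uniformly, the generalization property of DP (Theorem~\ref{thm:general}) yields $\tfrac1k\sum_j {\rm acc}_{\ov v_t}(r_j)\ge 4/5$ with probability at least $1-1/\poly(n)$, and a Hoeffding bound over the $l$ subsampled copies gives that at least $\tfrac35 l$ of the returned counts lie in $I_t$. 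Finally, with the private-median slack $\Gamma=\wt O(1)<l/10$: if $g_t<\min I_t$ then all $\ge\tfrac35 l$ accurate counts are $\ge g_t$, leaving $<\tfrac25 l<\tfrac l2-\Gamma$ counts $\le g_t$, contradicting Theorem~\ref{thm:primed}; symmetrically $g_t\le\max I_t$, so $g_t\in I_t$. A union bound over $t\in[T]$ completes the proof, with the grid-rounding error absorbed into the $1.01$ and $1-o(1)$ constants.

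I expect the main obstacle to be the fact that the~\cite{aimn23} guarantee is \emph{interval}-valued rather than a point estimate: one cannot directly invoke a "private median of nearby numbers is nearby" statement, and instead must argue, as above, that a private median of a constant-fraction majority of values lying in a common interval must itself lie in that interval, which works precisely because the private-median slack $\Gamma$ is only polylogarithmic. A secondary point to check carefully is that the~\cite{aimn23} estimate is one-sided in each direction (lower-bounded in terms of $|B(v,r)\cap U|$ and upper-bounded in terms of $|B(v,cr)\cap U|$ up to an additive $O(n^{\rho+o(1)})$), so that $I_t$ is well defined and the framework of~\cite{bkm+21} transfers without modification.
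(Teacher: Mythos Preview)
Your proposal is correct and takes essentially the same approach as the paper: both recognize that the~\cite{aimn23} counting data structure outputs a single real number and thus falls under the private-median framework of~\cite{bkm+21}, which the paper invokes as a black box (their Theorem~3.1) while you spell out the subsampling/advanced-composition/generalization steps explicitly. One minor simplification the paper makes that you do not: it observes that the Laplacian noise~\cite{aimn23} adds for its own DP guarantee is unnecessary here, so it feeds the \emph{noiseless} count into the~\cite{bkm+21} aggregation, which is why only the upper bound picks up the $1+\alpha=1.01$ factor and the lower bound stays at $(1-o(1))$.
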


\begin{proof}
The proof is a combination of the data structure of~\cite{aimn23} and augmentation of~\cite{bkm+21}. We start from the oblivious data structure. In the proof of Lemma 3.3 from~\cite{aimn23}, we note that their argument is to first provide a bound on the noiseless count, and then bound the error incurred by Laplacian noise. The noiseless count ${\tt ans}$ indeed has a bound
\begin{align*}
    (1-o(1))\cdot |B_{\|\cdot\|_2}(v_t,r)\cap U|\leq {\tt ans} \leq  |B_{\|\cdot\|_2}(v_t, cr)\cap U| + O(n^{\rho+o(1)})
\end{align*}
with high probability, albeit for oblivious $v_t$. To augment it for an adaptive adversary, we apply Theorem 3.1 of~\cite{bkm+21}, which states one could use $\sqrt T$ copies of the data structure when the output is a real number. Regarding the output quality, only the upper bound is increased by a factor of $1+\alpha$, if we set $\alpha=0.01$, we obtain the desired result.
\end{proof}

Given such an adaptive data structure for approximate near neighbor counting, we could then instantiate a meta algorithm for $\ell_2$ LSH under Assumption~\ref{assumption:general}. Without loss of generality, assume $U\subseteq \mathbb{S}^{d-1}$ and all queries lie on the unit sphere. We first pick a small discretization value $\tau$ and apply the transformation $x\mapsto \begin{bmatrix}
    \lfloor x_1/\tau\rfloor\cdot \tau \\
    \vdots \\
    \lfloor x_d/\tau\rfloor \cdot \tau
\end{bmatrix}$ to all points in $U$ and all queries. We use $\ov U$ and $\ov v$ to denote transformed points. Note that these points have their entries being a multiple of $\tau$, and the difference of the norm $\|\ov x-x\|_2 \leq \sqrt{n}\tau$, if we choose $\tau=n^{-C}$ for a large enough constant $C$, then the difference is $1/\poly(n)$, which is negligible. The advantage of this discretization framework is that the difference of norms for points over the unit sphere could also be discretized into $O(1/\tau)$ levels. Hence we could perform binary search over these discrete $O(1/\tau)$ levels in $O(\log(1/\tau))=O(\log n)$ steps. We will prepare $O(\log (1/\tau))$ data structures by initializing an LSH for $r=(c/2)^i \tau$ where $i\in \{0,1,\ldots,\log(1/\tau)\}$ for $c\geq 2$. This ensures that for any $r$ in this form, there exists a node in between $r/c$ and $c$, as desired.

\ifdefined\isarxiv

\bibliographystyle{alpha}
\bibliography{ref}
\else
\fi





\end{document}